\newcommand{\nn}{{\nonumber}}
\def\Mgn[#1]#2{{\overline{\cal M}_{#1,#2}}}
\newcommand{\ba}{\begin{eqnarray*}}
\newcommand{\ea}{\end{eqnarray*}}
\newcommand{\ban}{\begin{eqnarray}}
\newcommand{\ean}{\end{eqnarray}}
\newcommand{\be}{\begin{equation}}
\newcommand{\ee}{\end{equation}}
\newcommand{\ben}{\begin{equation}}
\newcommand{\een}{\end{equation}}
\numberwithin{equation}{section}
\newcommand{\IZ}{\mathbb{Z}}
\newcommand{\IC}{\mathbb{C}}
\newcommand{\IP}{\mathbb{P}}
\newcommand{\del}{\partial}
\newcommand{\Ext}{{\rm Ext}}
\newcommand\cC{{\mathcal C}} \newcommand\cF{{\mathcal F}}
\newcommand\cH{{\mathcal H}} \newcommand\cI{{\mathcal I}}
\newcommand\cO{{\mathcal O}} 
\newcommand{\M}{{\mathcal M}}
\newcommand\CC{\mathbb{C}}
\newcommand\LL{\mathbf{L}}
\newcommand\PP{\mathbb{P}}
\newcommand\ZZ{\mathbb{Z}}
\newcommand\I{\mathrm{I}^-}
\newcommand\II{\mathrm{I\hspace{-1.3pt}I}}
\newcommand\III{\mathrm{I\hspace{-1.3pt}I\hspace{-1.3pt}I}}
\newcommand\su{\mathrm{SU}(2)}
\newtheorem{conj}{Conjecture}
\newtheorem{defn}{Definition}
\newtheorem{prop}{Proposition}
\newtheorem{lem}{Lemma}
\begin{document}
\begin{titlepage}
%{}~ \hfill\vbox{ \hbox{} }\break

\begin{flushright}
\parbox[t]{1.8in}{BONN-TH-2012-16}
\end{flushright}

\vskip 2.5 cm
\centerline{\Large \bf The refined BPS index from stable pair invariants}
\vskip 0.5 cm
\renewcommand{\thefootnote}{\fnsymbol{footnote}}
\vskip 20pt \centerline{
{\large \rm Jinwon Choi\footnote{choi29@illinois.edu}, Sheldon Katz\footnote{katz@math.uiuc.edu}, and Albrecht Klemm\footnote{aklemm@th.physik.uni-bonn.de}  } } \vskip .5cm \vskip 20pt

\begin{center}
{\em  $^{*} \phantom{}^\dag$Department of Mathematics  \\ [.1 cm]
University of Illinois at Urbana-Champaign, 1409 W.\ Green St., Urbana, IL 61801}\\[3mm]
{\em  $^\ddag$Bethe Center for Theoretical Physics, Physikalisches
  Institut \\ [.1 cm]
Universit\"at Bonn, Nussallee 12, 53115 Bonn, Germany}
\end{center}

\setcounter{footnote}{0}
\renewcommand{\thefootnote}{\arabic{footnote}}
\vskip 60pt
\begin{abstract}
\end{abstract}
A refinement of the stable pair invariants of Pandharipande and
Thomas for non-compact Calabi-Yau spaces is introduced based on a
virtual Bialynicki-Birula decomposition with respect to a $C^*$ action
on the stable pair moduli space, or alternatively the equivariant
index of Nekrasov and Okounkov. This effectively calculates the refined
index for $M$-theory reduced on these Calabi-Yau geometries. Based on physical expectations we propose a product
formula for the refined invariants extending the motivic product
formula of Morrison, Mozgovoy, Nagao, and Szendroi for local $P^1$.
We explicitly compute refined invariants in low degree for local $P^2$
and local $P^1$ x $P^1$ and check that they agree with the
predictions of the direct integration of the generalized
holomorphic anomaly and with the product formula. The
modularity of the expressions obtained in the direct
integration approach allows us to relate the generating function
of refined PT invariants on appropriate geometries to
Nekrasov's partition function and a refinement of Chern-Simons
theory on a lens space.  We also relate our product formula to wallcrossing.

\end{titlepage}
\vfill \eject

%%%%%%%%%%%%%%%%%%%%%%%%%%%%%%%%%%%%%%%%%%%%%%%%%%%%%%%%%%%%%

\newpage

\baselineskip=16pt

\tableofcontents
\section{Introduction}

The BPS spectrum and its stability conditions determine
to a large extent the effective action of $N=2$ supersymmetric
theories. In rigid $N=2$ theories in four dimensions one can
define a refined BPS state counting, which records
the multiplicities $N^\Gamma_{j_L,j_R}\in \mathbb{N}$
of BPS particles with charges $\Gamma$ in the lattice
$\Lambda$ of K-theory charges of even $D$-branes and
spin quantum numbers $j_L,j_R$ of the twisted off shell Lorentz
group $\su_L\times \su_R=\mathrm{Spin}(4)$ representations~\cite{GV2,IKV,HKK}.
Many rigid N=2 theories can be constructed
by type II string compactification on a non-compact Calabi-Yau
manifolds $M$, and in fact no other examples are known at present.

The topological A-model of the type II string calculates a BPS index of these
multiplicities for the infinite subset $\Gamma'\in \Lambda'$ of charges with
one unit of D6 brane charge and arbitrary units of D2 and D0-brane
charges~\cite{GV2}.  The BPS index is a weighted sum over the right spins.  Mirror symmetry has been used to calculate
the corresponding generating functions $F(g_s,t)$ for the index
multiplicities in the B-model from the holomorphic anomaly
equations~\cite{BCOV} and appropriate  boundary conditions~\cite{HKQ}
in terms of quasi-modular forms. It was argued in~\cite{HKK} that
in the local limit the topological B-model admits a deformation
of the genus expansion by insertions of the puncture operators of
topological gravity, which captures the refinement and leads to a simple
generalization of the holomorphic anomaly equation~\cite{HK2,KK,KK2}. Together
with generalized gap conditions~\cite{KK,HK2,KK2} it allows the
efficient calculation of the deformed generating functions $F(\epsilon_1,\epsilon_2,t)$
for the $N^{\Gamma'}_{j_L,j_R}\in \mathbb{N}$ in terms of quasi-modular forms.

The purpose of this paper is twofold.  First, we extend the geometric
description of the moduli space of BPS states with charge
$\Gamma'$ given in~\cite{kkv} and extract the refined
multiplicities $N^{\Gamma'}_{j_L,j_R}$ from this description
by purely algebro-geometric methods.  The mathematical description
of the moduli spaces and virtual numbers developed in~\cite{kkv} to describe
the unrefined invariants has been developed
in~\cite{pt,ptbps} using the notion of stable pairs. This notion is closely related to
Donaldson-Thomas invariants and the refined multiplicities
are expected to capture features of the motivic Donaldson-Thomas invariants.
The second purpose is to interpret the physical description of the refined
partition function as a product formula for the refined stable pair invariants,
then checking this description for local $\PP^2$ and local $\PP^1\times\PP^1$.
Our calculations can be thought of as either geometric corroboration of
the refined B-model calculation or as evidence for a mathematical
conjecture, depending on the viewpoint of the reader.   The refined
stable pair invariants can be mathematically defined in terms of
a virtual Bialynicki-Birula decomposition, or equivalently, using
an equivariant index of M-theory \cite{no}.

The rest of the paper is organized as follows.  In Section~\ref{sec:physexp}
we describe the 5-dimensional BPS supertrace in the $\Omega$-deformation
and the Schwinger loop calculation which will allow us to relate the
B-model calculation to the $\su_L\times\su_R$ BPS invariants.  The B-model
calculation itself is done in Section \ref{sec:dirint}, leading to the explicit
calculation of the BPS invariants.  In Section~\ref{sec:enum} we review
the definition of stable pair invariants and review the relationship between
the stable pair partition function and the Gopakumar-Vafa invariants.
In Section~\ref{sec:kkv} we review the paper \cite{kkv} and update
the method, making it more rigorous by using stable pair invariants instead
of the relative Hilbert scheme of a family of curves.  In Section~\ref{pttoric}
we review the localization algorithm of \cite{ptvertex} for the PT
invariants on toric Calabi-Yau threefolds, which we have implemented on
a computer to perform the necessary calculations.  In Section~\ref{refinement}
we have defined the refined PT invariants using both a motivic approach as
well as the equivariant index of Nekrasov and Okounkov.  In
Section~\ref{refinedtobps} we express the refined PT partition function
as an infinite product depending only on the $\su\times\su$ BPS invariants,
compute the low degree terms of the partition function by geometry, and
confirm that they match the B-model calculation.
In Section~\ref{kkvrefined} we use our methods to further update the method
of \cite{kkv}, showing how the $\su\times\su$ BPS invariants can be
calculated by hand in low degree.

%--------------------------------------------------------------------
There is related work on refined invariants for surfaces
(rather than local surfaces) \cite{gs}. While this paper
was finalized there appeared three papers on the arXiv
which address  similar BPS countings, but not with stable
pairs invariants~\cite{Iqbal:2012mt}\cite{Aganagic:2012hs}\cite{Iqbal:2012xm}.
%-------------------------------------------------------------------------

\section{Physical expectations}
\label{sec:physexp}

We consider theories with eight supercharges by
compactifying M-theory on a Calabi-Yau threefold
$X$ to five dimensions or on $X\times S_r^1$ to four
dimensions. We are interested in BPS states giving holomorphic
corrections to couplings in the vector moduli space. Since the
latter decouples from the type II dilaton $\phi_{II}$,
the radius $r\sim (g^{II}_s)^\frac{2}{3}$  of the M-theory
circle with $g^{II}=\exp(\phi_{II})$ is irrelevant for these
corrections. The 5d M-theory and 4d type IIA
descriptions of these BPS states are therefore
expected to be equivalent. At the level of entropy
counts from BPS states  for 4d and 5d black holes
this has been made explicit in~\cite{GSY}.

\subsection{The refined BPS supertrace }
\label{BPSsupertrace}

M-theory reduced on a compact Calabi-Yau threefold $X$
to 5 dimensions gives rise to a 5d supergravity
theory with eight conserved supercharges.
The superalgebra is acted on by a $SU(2)_L\times
SU(2)_R\subset Sp(4)$, which is the little group of the
5d Lorentz group. If gravity can be decoupled,
i.e. a rigid limit of supergravity exists
then there emerges a further $SU(2)_{\cal R}$
symmetry acting on the algebra symmetry group\footnote{In the
supergravity theory there can be a
$U(1)_{\cal R}$ symmetry arising in the infrared,
which is not directly associated to a geometrical
symmetry of space. We would like to thank Greg
Moore for discussions about this point.}.
The corresponding states in 5d are specified by
their BPS mass $M$ or equivalently by an integer charge
vector $\Gamma$ and their spin content given by
a representation $(j_L,j_R)$ of the little group
and their representation under  $SU(2)_{\cal R}$.
Reduction on the circle leads to a four dimensional
theory with $N=2$ supersymmetry  arising from the
IIA reduction on $X$. The superalgebra and the
symmetry acting on it does not change. Only now the 4d
mass gets  shifted by a Kaluza-Klein momentum on
the circle. After this compactification
the  charge lattice of the BPS states is naturally identified
with the $K$-theory charge of the type IIA
$D_{2k}$ branes
\be
\Gamma=(q_0,q_A,p^A,p^0)\in \oplus_{i=0}^3 H^{2i}(M)\ .
\ee
For particles at rest in 4d the eigenvalues of the Hamiltonian $H$ are
the BPS masses $M=|\Gamma \cdot \Pi|$. The vector $\Pi$
is instanton corrected in the type IIA theory, but it can be
mapped by mirror symmetry to the period vector of the
holomorphic $3$-form of the mirror of $X$ and calculated
exactly. The relation between the left spin and the $D_0$ and
brane charge $q_0$ is~\cite{GSY}
\begin{equation}
q_0=2 \frac{j_L}{(p^0)^2} \ .
\end{equation}

Formally one can define a 5d  BPS supertrace
\be
Z_{{\cal BPS}}(\epsilon_L,\epsilon_R,t) =
{\rm Tr}_{{\cal BPS}} (-1)^{2 (J_L+J_R)}
e^{-2 \epsilon_L J_L} e^{-2 \epsilon_R J_R} e^{-2\epsilon_{R} J_{\cal R}} e^{\beta H}\ .
\label{supertrace}
\ee
as a refinement of the {\sl BPS index} $Z_{{\cal BPS}}(\epsilon_L,0,t)$, which was considered in~\cite{GV2}
and shown to reproduce the
holomorphic limit of the topological string partition function
$Z=\exp(F^{{\rm top.\ str.}})$ on the Calabi-Yau space $X$.
Here and in the following we denote by $J_*$ the Cartan
element $J^3_*$ of the $SU(2)_*$ and by $j_*$ an $SU(2)_*$ representation
or the eigenvalue of the Casimir. In (\ref{supertrace}) $t$ stands for all relevant
geometric parameters, see below.
The fact that $Z_{{\cal BPS}}(\epsilon_L,0,t)$
is an index implies in particular that only the left short multiplets contribute,
while the contributions of long left multiplets cancel.
The index is then expected to be invariant under complex structure
deformations of $X$. Geometrical
examples where the right spin assignments of states change
under complex deformations of $X$ while the index does not change are provided by
ruled surfaces over higher genus curves~\cite{kkv}.

It was argued in~\cite{Nekrasov:2002qd}
that under certain assumptions (\ref{supertrace})
is also an index. The main focus of the paper are
the refined multiplicities $N^\Gamma_{j_L,j_R}$ of BPS
states counted by this index (\ref{supertrace}).

In this paper we restrict ourselves to charge vectors
$\Gamma=(n,\beta,0,1)$, where in particular $\beta \in H_2(X,\mathbb{Z})$.
We denote by  $t$ the complexified K\"ahler parameters measuring the mass
of $D_2$ wrapped on holomorphic curves ${\cal C}_\beta$ with complexified volume $t$.
The argument of \cite{Nekrasov:2002qd} relies on further
assumptions that we will discuss in some detail below, see
also~\cite{IKV}\cite{Aganagic:2011mi}\cite{Aganagic:2012au}.

First of all fixing the combination $\epsilon_R
(j'_R + j_{\cal R})=\epsilon_R j_R$ in the trace allows
us to twist the assignment of the right spin content of the
theory with $J_R$, the twisted generator of the Lorentz group.
This makes in particular the eight susy generators (in the 4d language)
transform as a scalar $Q$, a selfdual two form $Q^+_{\mu \nu}$ and
a vector $Q_{\mu}$. As usual $Q$ could define a BRST
cohomology operator on any four manifold, which is however trivial
in $\mathbb{R}^4$. Instead~\cite{Nekrasov:2002qd} based
on earlier work chooses  $\tilde Q =Q+E_\alpha \Omega_{\mu}^{\alpha\ \nu}  x^\mu Q_\nu$,
and considers equivariant cohomology. Here $\Omega^\alpha_{\mu\nu}$
can be the $U(1)_{\epsilon_L}\times U(1)_{\epsilon_R}$ subgroup
of the $SO(4)$ space time rotation group and still defines
an interesting equivariant cohomology. $\tilde Q$ becomes
an equivariant differential on the moduli space of framed
instantons, which are calculated by the Atiyah-Bott localization formula and
provides a formula for the instanton partition function,
known as the Nekrasov partition function.
The argument in~\cite{Nekrasov:2002qd} starts with a supersymmetric
gauge theory in 5d, which we do not require here.
We just assume that the supergravity scale
can be decoupled, which in the geometric engineering
approach means a decompactification limit of $X$.

If the theory has additional symmetries, e.g. flavor
symmetries, one can consider a more general choice of the twisting
$\epsilon_R(j'_R + j_{\tilde {\cal R}})$, where
${\tilde {\cal R}}$ is the ${\cal R}$ symmetry action
accompanied by an $U(1)$ subgroup of the additional
symmetry. Typically these symmetries act on mass parameters and
depending on the charge $q_i$ of the mass parameter $m_i$
under the $U(1)$ one gets a shift in the
mass parameter $m_i\rightarrow m_i + q_i\epsilon_R$~\cite{Nekrasov:2011bc}.

To define (\ref{supertrace}) as a path integral
one has to realize the two twists by $J_*$
geometrically as twisted boundary conditions
for the fermions around a circle in the background
geometry outside of $\mathbb{R}^4$, or its generalizations
discussed below. One circle can be the $M$ theory $S^1$, but
for the second one needs an $S^1$ isometry inside $X$.
This is clearly not possible if $X$ is a compact
Calabi-Yau manifold. For noncompact Calabi-Yau spaces
there is such an $U(1)$ isometry and this is all that we require,
for the equivariant localization in the moduli space
of stable pairs.

Geometries realizing the two twists geometrically
are referred to as $\Omega$-backgrounds. Another
way of describing them~\cite{Nekrasov:2002qd}\cite{Nekrasov:2011bc}
is to start with a $6$-dimensional $N=1$ gauge theory and
compactifying it on a fibration $M_4\rightarrow T^2_r$, where the 4d
space time $M_4$ has at least an $U(1)_1\times U(1)_2$ isometry
and $r$ is the volume of $T^2$. The $\epsilon_{L/R}$ are considered
as complex variables and the bundle is defined by requiring
the flat connections corresponding to the isometries to have
holonomies $(\exp(\frac{r}{2}{\rm Re}\epsilon_R),
\exp(\frac{r}{2}{\rm Re}\epsilon_L))$ and $(\exp(\frac{r}{2}{\rm Im}
\epsilon_R),\exp(\frac{r}{2}{\rm Im}\epsilon_L))$
around the two cycles of the $T^2_r$.

Typical examples with sufficient isometries for the
four dimensional spacetime $M^4$ include Taub-Nut
geometries ${\rm TN}_{p^0}$ in the $M$-theory compactification on
${\rm TN}_{p^0}\times S^1_r\times X$  studied in~\cite{GSY},
which include $\mathbb{R}^4$ for Taub-Nut/D6-brane charge $p^0=1$.
The rotation angles $\epsilon_{1,2}$ of the $U(1)_i$ can
then be identified with rotation angles in the Cartan subalgebra of
$SU(2)_{L/R}$ via $\epsilon_{L/R}=\frac{1}{2}(\epsilon_1\mp \epsilon_2)$,
and $q_{L/R}=\exp(\epsilon_{L/R})$ counts the $\sigma_{L/R}^3$ spin eigenvalues.

If one has only the $M$-theory $S^1$ as in particular for
compact Calabi-Yau spaces one can view  the fibration of
$\mathbb{R}^4_\epsilon$ over this $S^1$ as a Melvin background
for $IIA$\cite{Nekrasov:2010ka}\cite{Nekrasov:2002qd},
metric\footnote{This is easily generalizable to Taub-Nut spaces.}
\begin{equation}
{\rm d} s^2=(dx^\mu + \Omega^\mu {\rm d} \theta )^2 + {\rm d}\theta^2 \ ,
\end{equation}
i.e. it  is characterized by a vev of a selfdual RR 1-form fields
$\Omega$, whose selfdual field  strength has near the
origin the form
\begin{equation}
F_L={\rm d} \Omega =
\epsilon_1 {\rm d } x_1 \wedge  {\rm d } x_2+\epsilon_2 {\rm d }x_3 \wedge  {\rm d }x_4  \ .
\end{equation}
The reduction of 5d fields which are twisted around the
$S^1$. i.e. all the fields contributing to the index
are charged under $F_L$ and give a one loop contributions
to the $F$-term $F_L^{2g} R^2_L$, which was calculated
by the Schwinger loop of~\cite{GV2}.

As explained in~\cite{Nekrasov:2002qd}\cite{Nekrasov:2011bc}
\cite{Nakayama:2011be} the metric for $T^2$ compactification
of the six dimensional theory  is given by
\begin{equation}
{\rm d} s^2=(dx^\mu + \Omega^\mu {\rm d} z + \bar \Omega {\rm d} \bar z  )^2 + {\rm d} z {\rm d} \bar z\ ,
\end{equation}
where  coordinates $(z,\bar z)$ are $T^2$ coordinates. In the $r\rightarrow 0$
limit one ends up with a selfdual $F_L$ and an anti selfdual $F_R$
field strength in 4d, spelled out in~\cite{Nakayama:2011be},
to with the twisted fields in the index couple accordingly.
The generalization of the Schwinger loop calculation~\cite{GV2} to
this coupling is straightforward and will be discussed next.

\subsection{The Schwinger loop calculation}
The Schwinger loop calculation for these F-term couplings expresses
\begin{equation}
F^{hol}(\epsilon_1,\epsilon_2,t)=\sum_{n,g\in \mathbb{Z}} (\epsilon_1+\epsilon_2)^{2 n} (\epsilon_1 \epsilon_2)^{g-1} F^{(n,g)}(t)
\label{Fe1e2}
\end{equation}
in terms of the above BPS trace
\begin{equation}
F^{hol}(\epsilon_1,\epsilon_2,t)=-\int_\epsilon^{\infty} \frac{ds}{s}\frac{\textrm{Tr}_\mathcal{BPS}(-1)^{2J_L+2J'_R}
e^{-s m^2} q_L^{s J_L} q_R^{s J_R}}{4\left(\sinh^2\left(\frac{s\epsilon_L}{2}\right)
- \sinh^2\left(\frac{s \epsilon_R}{2}\right)\right)}\ .
\label{Schwinger}
\end{equation}
Here  $\epsilon_1,\epsilon_2$ is related to the string coupling $g_s$ by
\begin{equation}
\epsilon_1=\sqrt{b} g_s, \qquad  \epsilon_1=-{1\over \sqrt{b}} g_s\
\end{equation}
and we denote by $s$ the deformation parameter $s=(\epsilon_1+\epsilon_2)^2$.

To perform the integral (\ref{Schwinger}) one considers  $M2$ branes
wrapping a curve $C_\beta$ and extending in spacetime with $n$ units
of momentum around the $M$-theory $S^1$-cycle, which leads to
the mass $m^2=\beta\cdot t+2\pi  n$. Since the vector
multiplet moduli space decouples from the $IIA$-dilaton,
one can interpolate between weak and strong coupling and view
the $M_2$ brane as a bound state of a $D2$ wrapping $C_\beta$ with
$n$ $D_0$ branes. Geometrically this corresponds to a
stable pair consisting of
a sheaf $\cF$ on $M$ of pure dimension 1 supported on $C_\beta$
together with a section
$s\in H^0(M,\cF)$ which generates $\cF$ outside a finite number
of points, i.e.\ we have
the topological data
\begin{equation}
{\rm ch}_2(\cF)=\beta, \qquad  \chi(\cF)=n \ .
\label{topinvstablepair}
\end{equation}

By summing over $n$ and using the Poisson resummation formula
$\sum_{n} \exp(-2 \pi i sn )=\sum_{k} \delta(s-k)$  one obtains up to terms
coming from the constant maps at genus 0 and 1 	
\begin{equation}
\begin{array}{rl}
F^{hol}(\epsilon_1,\epsilon_2,t)&=
\displaystyle{\sum_{{j_L,j_R=0}\atop {k=1}}^\infty \sum_{\beta\in H_2(M,\mathbb{Z})} (-1)^{2(J_L+J'_R)} \frac{N^\beta_{j_Lj_R}}{k}
\frac{\displaystyle{\sum_{m_L=-j_L}^{j_L}} q_R^{k m_L}}{2\sinh\left( \frac{k \epsilon_1}{2}\right)} \frac{\displaystyle{\sum_{m_R=-j_R}^{j_R}} q_R^{k m_R}}
{2\sinh\left( \frac{k \epsilon_2}{2}\right)}e^{-k\, \beta \cdot t}}\ .
\end{array}
\label{schwingerloope1e2}
\end{equation}
This expression is correct up to cubic  terms
$a t^3+ b t^2 + c t$ in the K\"ahler parameters multiplying
$g_s^{-2}$ and up to linear classical terms at order $g_s^{0}$
and $\frac{s}{g_s^2}$, related to classical
intersections on $X$. There are also constants terms
$t^0$ at all orders in $s,g_s^2$ obtained by setting
$N^0_{00}=\frac{\chi(X)}{2}$.

The relation between the refined and the unrefined BPS invariants is that the latter
are defined by summing over the $j_R$ spin representation with sign and their multiplicity
\be
\sum_{g=0}^\infty n^g_\beta I_L^g=\sum_{j_+} N^\beta_{j_Lj_R} (-1)^{2j_R}(2j_R+ 1)\left[\frac{j_L}{2}\right]_L  \,,
\ee
and changing the basis for the left spin representations according to
\be
I_*^n=\left(2 [0]_*+\left[\frac{1}{2}\right]_*\right)^{\otimes n}=
\sum_{i}\left(\left(2n\atop n-i\right)- \left(2n  \atop n-i-2\right) \right)\left[\frac{i}{2}\right]_* \, .
\label{defIn}
\ee

In comparing (\ref{Fe1e2}) with (\ref{schwingerloope1e2}) it is convenient to use
the identity
\begin{equation}
{\rm Tr}_{I_*^n}(-1)^{2 J_*} e^{-2 J_* s}=\left(2 {\rm sinh}\left(\frac{s}{2}\right)\right)^{2n}
\end{equation}
and express both the left and the right spin in the $I^n_*$ basis.
This yields invariants $n^\beta_{g_R,g_L}$ which are related to the $N^\beta_{j_R,j_L}$  by
\begin{equation}
\sum_{g_R,g_L} n^\beta_{g_R,g_L} I_R^{g_R}\otimes I_L^{g_L}= \sum_{j_R,j_L}
N^\beta_{j_R,j_L} \left[\frac{j_R}{2}\right]_R \otimes \left[\frac{j_L}{2}\right]_L\ .
\label{basis}
\end{equation}
The geometric interpretation implies that $N^\beta_{j_L j_R}=0$ for $\beta>\beta^{max}(j_L,j_R)$
for finite $\beta^{max}(j_L j_R)$ and the same properties hold for the $n^\Gamma_{g_R,g_L}$.
The  $n^\Gamma_{0,g}$ are the complex structure invariants unrefined BPS invariants.
Both the  $n^\Gamma_{g_R,g_L}$ and the $N^\Gamma_{j_R,j_L}$ are in $\mathbb{Z}$, but
we have the additional property $N^\Gamma_{j_R,j_L}\ge 0$.

Eq. (\ref{schwingerloope1e2}) can be exponentiated to yield the
partition function $Z=e^{F_{hol}(\epsilon_1,\epsilon_2,t)}$,
which has the form~\cite{IKV}
\begin{equation}
Z=\prod_\beta \prod_{j_{L/R}=0}^\infty \prod_{m_{L/R}=-j_{L/R}}^{j_{L/R}}\prod_{m_1,m_2=1}^\infty \left(1-q_L^{m_L} q_R^{m_r} e^{\epsilon_1(m_1-\frac{1}{2})}
e^{\epsilon_2(m_2-\frac{1}{2})} Q^{\beta}\right)^{ (-1)^{2(j_L+j_R)} N^\beta_{j_Lj_R}}\ ,
\label{productrefined}
\end{equation}
where we abbreviated $e^{-\beta \cdot t}=:Q^\beta$.

\section{The direct integration approach}
\label{sec:dirint}

In~\cite{HK2,KK} generalized holomorphic anomaly equations were
proposed\footnote{The one in~\cite{KK} contains an additional
term, which is irrelevant for the present purpose of
counting BPS states.} which take the form
\begin{eqnarray} \label{gen_hol_ano}
\bar{\partial}_{\bar{i}} F^{(n,g)}= \frac{1}{2}\bar{C}_{\bar{i}}^{jk}\big{(}D_jD_kF^{(n,g-1)}
+{\sum_{m,h} }^{\prime}  D_jF^{(m,h)}D_kF^{(n-m,g-h)}\big{)} \,, \quad n+g>1\,,
\end{eqnarray}
where the prime denotes omission of $(m,h)=(0,0)$ and $(m,h)=(n,g)$
in the sum. The first term on the right hand side is set to
zero if $g=0$. These equations together with the modular
invariance of $F^{(n,g)}$ and the  gap boundary conditions
determine~(\ref{Fe1e2}) recursively to any order in
$\epsilon_{1,2}$~\cite{HKK}. The equation (\ref{gen_hol_ano})
has been given a B-model interpretation in the local
limit~\cite{HKK} in which the deformation direction
corresponds to the puncture operator of topological
gravity coupled to the Calabi-Yau non-linear $\sigma$-model.

\subsection{Elliptic curve mirrors and closed modular expressions}

We discuss in the following the simple situation in which the
$B$-model or mirror curve for the non-compact Calabi-Yau
manifold is a family of elliptic curves. This holds for
the mirror curves of non-compact Calabi-Yau manifolds
defined as the anticanonical bundle over del Pezzo surfaces $S$,
i.e. the total space of ${\cal O}(-K_S)\rightarrow S$.

Let us denote the mirror curve ${\cal C}$ in the Weierstrass form as
\begin{eqnarray}
\label{weierstrass}
y^2 = 4x^3 -g_2(u,m) x- g_3(u,m) \ .
\end{eqnarray}
We further denote the holomorphic $(1,0)$ form
$\omega=\frac{d x}{y}$ and the complex
parameter $\tau$ that lives in the upper
halfplane by
\begin{equation}
\tau=\frac{\int_b \omega}{\int_a \omega} \ .
\end{equation}
Here $a,b$ are an integer basis of $H^1({\cal C},\mathbb{Z})$,
$u$ is the complex structure parameter of the curve and
$m$ are isomonodromic deformations. The discriminant reads
$\Delta=g_2^3-27 g_3^2$ and the $j$-function defines an
universal relation between $(u,m)$ and $\tau$
($q=\exp(2 \pi i \tau)$)
\begin{equation}
j=\frac{g_2^3}{\Delta}=\frac{1}{q} + 744 + 196884 q + 21493760 q^2+{\cal O}(q^3)\ .
\end{equation}

The main result of~\cite{HKK} is that the general form of the
higher $F^{(n,g)}$ with $n+g>1$ is is given by
\begin{equation}
\label{generalformfng}
F^{(n,g)}=\frac{1}{\Delta^{2(g+n)-2}(u,m)} \sum_{k=0}^{3g+2n-3} X^k p^{(n,g)}_k(u,m)
\end{equation}
where the $p^{(n,g)}_k(u,m)$ are completely fixed by the
holomorphic anomaly equation and behavior of $F^{(n,g})$ at
the cusp points.
Here we defined the non-holomorphic generator $X$ is by
\begin{eqnarray}
X=\frac{g_3(u,m)}{g_2(u,m)} \frac{\hat{E}_2(\tau)E_4(\tau)}{E_6(\tau)} \, .
\label{Xdef}
\end{eqnarray}
With  $\hat{E}_2$ we denoted the non-holomorphic second Eisenstein
series
\be
\hat{E}_2(\tau, \bar{\tau}) = E_2(\tau) - \frac{3}{\pi {\rm Im}(\tau)} \,.
\ee
The unhatted quantities are the usual holomorphic Eisenstein
series. We note that
\begin{equation}
\frac{E_6^2}{E_4^3}=27\frac{g_3^2}{g_2^3} \ .
\label{identity}
\end{equation}

To prove (\ref{generalformfng}) note that flat coordinate
$t$, which vanishes at a given cusp point can be integrated
from
\begin{eqnarray}
\label{nonlogperiod}
\frac{dt}{du} =\sqrt{\frac{E_6(\tau) g_2(u,m)}{E_4(\tau) g_3(u,m)}}=3^\frac{3}{4}\sqrt[4]{\frac{E_4}{g_2}}  \, .
\end{eqnarray}
Here $\frac{dt}{du}$ is a period of the holomorphic differential $\frac{dx}{y}$ over the vanishing cycle at a nodal singularity
of ${\cal C}$. The period $t(u,m)$ is a period integral of a
meromorphic differential and the constant of the $u$-integration is zero.
$t(u,m)$ can be also determined as the solution of a  third order differential equation in $u$ with polynomial coefficients in  $(m,u)$, see. \cite{HuangKlemm}.

The proof of (\ref{generalformfng})  proceeds by using (\ref{identity}),(\ref{nonlogperiod}) and the Ramanujan relations
\begin{equation}
\begin{array}{rl}
\frac{\rm d}{{\rm d}\tau}E_2 &= \frac{1}{12}(E_2^2-E_4)\ ,\\
\frac{\rm d}{{\rm d}\tau}E_4 &= \frac{1}{3}(E_2 E_4-E_6)\ , \\
\frac{\rm d}{{\rm d}\tau}E_6 &= \frac{1}{2}(E_2 E_6-E_4^2), \\
\end{array}
\end{equation}
to derive
\begin{equation}
\begin{array}{rl}
\frac{\rm d}{{\rm d} t} X &= \frac{1}{\Delta}\frac{{\rm d}u}{{\rm d} t}(A X^2+ B X+ C), \\
\frac{{\rm d}^2u}{{\rm d}^2 t}&= \frac{1}{\Delta}\left(\frac{{\rm d}u}{{\rm d} t}\right)(A X+ \frac{B}{2})\ , \\
\end{array}
\label{closing}
\end{equation}
with
\begin{equation}
A=\frac{9}{4}( 2 g_2 \partial_u g_3- 3 g_3 \partial_u g_2),\qquad B=\frac{1}{2}( g_2^2 \partial_u g_2- 18 g_3 \partial_u g_3),\qquad C=\frac{g_2 A}{3^3}\ .
\end{equation}

Using (\ref{nonlogperiod}) and the fact that the 3-point
function $C_{ttt}=\frac{\del^3 F^{(0,0)}}{\del t^3} = - \frac{2 \pi i}{c_0} \frac{d \tau}{dt}$
is given in terms of the complex modulus $\tau$
of  (\ref{weierstrass}) one can rewrite (\ref{gen_hol_ano}) as
\ban
24 \frac{\partial F^{(n,g)} }{\partial X} &=& c_0 \frac{g_2(u)}{g_3(u)} \frac{E_6}{E_4} \Big[ \left(\frac{d u}{d t}\right)^2 \frac{\del^2 F^{(n,g-1)}}{\del u^2} +\frac{d^2 u}{d t^2} \frac{\del F^{(n,g-1)}}{\del u} \nn \\
&& +\left(\frac{d u}{d t}\right)^2 {\sum_{m,h} }^{\prime} \frac{\del F^{(m,h)}}{\del u} \frac{\del F^{(n-m,g-h)}}{\del u} \Big] \,,   \label{hol_an_X}
\ean
see \cite{HKK} for more details.  It follows by
(\ref{closing},\ref{hol_an_X}) and a  simple inductive
argument that the r.h.s.\ of (\ref{hol_an_X}) is a polynomial
of $X$ of maximal degree  $2(g+n)-3$ and a rational function in
$u$ with denominator $\Delta^{2(g+n)-2}(u)$. Equation (\ref{hol_an_X})
can in particular be used to integrate  the holomorphic anomaly efficiently up to the polynomial $p^{(n,g)}_k(u)$, which is undetermined after the integration.

$F^{(0,0)}= -c_0 \int dt \int dt \tau$ can be determined up to irrelevant
constants from the complex structure $\tau$ in the upper half plane,
which in turn may be calculated using the $j$-function of the elliptic
curve. It remains to describe the boundary conditions  which
fix $p^{(n,g)}_k(u)$ and to provide the remaining initial
data $F^{(1,0)}$ and $F^{(0,1)}$ to complete the recursion
(\ref{generalformfng}).

The boundary conditions for the higher genus invariants are given
by the leading behavior of $F(\epsilon_1,\epsilon_2,t)$
at the nodes of the curve (\ref{weierstrass}). If we denote now specifically
by $t$ the vanishing coordinate at the node under investigation, then the
leading behavior is given by
\begin{eqnarray}  \label{expansionschwinger}
F(s,g_s,t)&=&\int_0^{\infty} \frac{ds}{s}\frac{\exp(-s t)}{4\sinh(s\epsilon_1/2)\sinh(s\epsilon_2/2)} +\mathcal{O} (t^0) \\
&=& \big{[}-\frac{1}{12}+\frac{1}{24} (\epsilon_1+\epsilon_2)^2 (\epsilon_1\epsilon_2)^{-1}\big{]}\log(t) \nonumber \nonumber \\ &&
+ \frac{1}{\epsilon_1\epsilon_2} \sum_{g=0}^\infty \frac{(2g-3)!}{t^{2g-2}}\sum_{m=0}^g \hat B_{2g} \hat B_{2g-2m} \epsilon_1^{2g-2m} \epsilon_2^{2m} +\ldots \nonumber \\ &&
=\big{[}-\frac{1}{12}+\frac{1}{24} s g_s^{-2}\big{]}\log(t)
+ \big{[} -\frac{1}{240}g_s^2+\frac{7}{1440} s-
\frac{7}{5760} s^2g_s^{-2} \big{]} \frac{1}{t^2} \nonumber \\ &&
+ \big{[} \frac{1}{1008}g_s^4-\frac{41}{20160} s g_s^2 +\frac{31}{26880} s^2 -\frac{31}{161280} s^3 g_s^{-2}\big{]} \frac{1}{t^4}   +\mathcal{O} (t^0)  \nonumber\\[ 7 mm]
&& +  \,\,\mbox{contributions to $2(g+n)-2>4$}\, ,  \nonumber
\end{eqnarray}
where $g_s^2= (\epsilon_1 \epsilon_2)$  and $s= (\epsilon_1 + \epsilon_2)^{2}$. Here
$\hat B_{m}=\left(\frac{1}{2^{m-1}}-1\right)\frac{B_m}{m!}$ and the Bernoulli numbers $B_m$
are defined by $t/(e^t-1)=\sum_{m=0}^\infty B_m \frac{t^m}{m!}$. The expansion
(\ref{expansionschwinger}) is simply obtained by evaluating (\ref{Schwinger}) with
the assumption that a single hypermultiplet with mass $m=t$ becomes massless at
the node.

From (\ref{expansionschwinger}) we can read the leading behavior of the $F^{(n,g)}$
\be
F^{(n,g)} = \frac{N^{(n,g)}}{t^{2(g+n)-2}} + \mathcal{O} (t^0).
\ee
For example
\begin{equation}
N^{(2,0)}=-\frac{7}{5760},\quad N^{(1,1)}=\frac{7}{1440},\quad N^{(0,2)}=-\frac{1}{240},\ldots
\label{fngconstant}
\end{equation}

The absence of subleading terms up to order $\mathcal{O}(t^0)$ is
the gap condition, which provides just enough condition to fix
$p^{(n,g)}_k(u)$~\cite{HKK}.  The genus one case, $F^{(0,1)}$ follows from the genus one holomorphic
anomaly equation and the boundary condition at the node in
(\ref{Schwinger}). $F^{(1,0)}$ is purely holomorphic and the simplest
global function compatible with its boundary conditions from
(\ref{Schwinger}) is given below.
\begin{eqnarray}
F^{(0,1)} &=& -\frac{1}{2} \log (G_{u\bar u} | u^a m^b \Delta|^\frac{1}{3})   \label{genus1a}\ , \\
F^{(1,0)} &=& \frac{1}{24}\log (u^c m^d \Delta) \label{genus1b}\ .
\end{eqnarray}
The constants $a,b,c$ and $d$ can be be determined using the known behavior
at large radius.

\subsection{The local Calabi-Yau geometries}

It is convenient to use the language of an abelian
$(2,2)$  gauged linear $\sigma$-model~\cite{Witten:1993yc}
whose vacuum  manifold describes the geometry of the local Calabi-Yau
threefolds $M$ as a symplectic quotient or as a toric variety.
For the cases at hand one considers $r$ chiral fields
$X_i$, $i=1,\ldots, r$ and a gauge group $U(1)^{(1)}
\times \ldots \times  U(1)^{(r-3)}$ under which the fields $X_i$ have
integer charges $Q^{(k)}_i$, $i=1,\ldots, r$, $i=1,\ldots, 3-r$,
subject to the anomaly condition $\sum_i Q^{(k)}_i=0$.
The vacuum manifold parametrized by the vacuum expectations
values $x_i$ of scalar components of the $X_i$ then forms the
local Calabi Yau geometry.  This can be seen as the quotient
manifold of the $x_i$ subject to the $D$-term constraints
$\sum_{i=1}^r Q^{(k)}_i|x_i|^2=r_k$ modded
out by the gauge group,  where $r_k$ are the K\"ahler
moduli which get complexified by  Fayet-Iliopoulos
terms to $t_k=r_k+i\theta_k$. In the geometric phase one has
$r_k>0$.

Equivalently in the standard toric description one describes
$M$ as
\begin{equation}
 M=(\mathbb{C}^r \setminus {\cal S}{\cal R})/(\mathbb{C}^*)^{r-3},
\end{equation}
where ${\cal S}{\cal R}$  is the vanishing locus of the Stanley-Reisner ideal
and the  $(\mathbb{C}^*)'s$ act by $x_k\rightarrow
(\mu^{(k)})^{Q^{(k)}_i} x_k$, $i=1,\ldots,r$, $k=1,\ldots, r-3$.

The mirror geometry $W$ is given by \cite{Hori:2000kt}
\begin{equation}
u v=\sum_{i=1}^r y_i=H(x,y,{\underline u})\ .
\end{equation}
Here the $y_i$ are identified under a $\mathbb{C}^*$ scaling
relation $y_i\mapsto \mu y_i$ and constrained by
$\prod_{i=1}^r y_i^{Q^{(k)}_i}=u_k$.
This allows to reduce to the $x,y$ parameters in
$H(x,y,{\underline u})$. The $u_k$ are complex
deformations of the mirror geometry and mirror
symmetry allows us in particular to explore
the $(\epsilon_1,\epsilon_2)$ refinement of the
topological partition function in non-geometric
phases as well.

For  Calabi-Yau manifolds  ${\cal O}(-K_S)\rightarrow S$
the local mirror geometry can be constructed as decompactification limits
of the mirror $W_c$ of a compact elliptic fibration $M_c$ over $S$. More precisely
in the large radius limit of the elliptic fiber the periods integrals
of the holomorphic $(3,0)$ form over the relevant 3-cycles in $W_c$
become integrals of the meromorphic form
\begin{equation}
\lambda=\log(x) \frac{{\rm d} y}{y}\ .
\label{lambda}
\end{equation}
over 1-cycles in the Riemann surface ${\cal C}$
\begin{equation}
H(x,y,{\underline z})=0\ .
\label{curve}
\end{equation}
More generally the data (\ref{lambda}, \ref{curve})
may serve as the definition of the $B$-model geometry.

In the following we give explicit examples of
local Calabi-Yau threefolds of the type
$M={\cal O}(-K_S)\rightarrow S$, where $S$
is toric.

\subsection{The local Calabi-Yau manifold ${\cal O}(-3)\rightarrow \mathbb{P}^2$}

According to the constructions of local mirror manifolds
reviewed above the $(M,W)$ geometries are described
by the charges $Q^{(i)}_k\in \mathbb{Z}$. For the
${\cal O}(-3)\rightarrow \mathbb{P}^2$ geometry one
has four chiral fields $X_i$ with  $U(1)$ charges
\begin{equation}
 Q=(-3,1,1,1)\ .
\end{equation}

To determine $F(\epsilon_1,\epsilon_2,t)$ for a local
Calabi-Yau geometry with genus one mirror curves all
we have do is to bring the genus one mirror curve
${\cal C}$ given by
\be
H(x,y;z)= y^2 + x y + y + u x^3=0
\label{mirrorp2}
\ee
of the ${\cal O}(-3)\rightarrow \mathbb{P}^2$ geometry
to Weierstrass form (\ref{weierstrass}) with
\be
g_2=3^3(1+24u) \qquad  g_3=3^3(1+36 u+216u^2)\ .
\ee
Further we note that $c_0=9$, $a=7$, $b=-1$.  Using this
information the $F^{(n,g)}$ can be very efficiently calculated as
global sections over the moduli space. In the present case
we are interested in the mirror prediction for the
$A$ model at the large volume point. We obtain
the BPS invariants as this point by equating (\ref{Fe1e2})
in the holomorphic limit ${\rm Im}(\tau)\rightarrow \infty$
with (\ref{Schwinger}). The direct integration method is
very efficient for the one parameter cases. We
calculated the $N_{j_L,j_R}^d$ up to $d=9$.

\begin{table}[h!]
%\begin{sidewaystable}[h]
\centering{{
\begin{tabular}[h]{|c|c|cccccccccccccccccccccccccccccccccccc|}
\hline
d\!&\!\!$j_L\backslash j_R$\!&\!\!0\!&\!\!\!$\frac{1}{2}$\!\!\!&\!\!\!1\!\!\!&\!\!\!$\frac{3}{2}$\!\!\!&\!\!\!2\!\!\!&\!\!\!$\frac{5}{2}$\!\!\!&\!\!\!3\!\!\!&\!\!\!$\frac{7}{2}$\!\!\!&\!\!\!4\!\!\!&
\!\!\!$\frac{9}{2}$\!\!\!&\!\!\!5\!\!\!&\!\!\!$\frac{11}{2}$\!\!\!&\!\!\!6\!\!\!&\!\!\!$\frac{13}{2}$\!\!\!&\!\!\!7\!\!\!&\!\!\!$\frac{15}{2}$\!\!\!&\!\!\!8\!\!\!&\!\!\!$\frac{17}{2}$\!\!\!
&\!\!\!9\!\!\!&\!\!\!$\frac{19}{2}$\!\!\!&\!\!\!10\!\!\!&\!\!\!$\frac{21}{2}$\!\!\!&\!\!\!11\!\!\!&\!\!\!$\frac{23}{2}$\!\!\!&\!\!\!12\!\!\!&\!\!\!$\frac{25}{2}$\!\!\!&\!\!\!13\!\!\!
&\!\!\!$\frac{27}{2}$\!\!\!&\!\!\!14\!\!\!&\!\!\!$\frac{29}{2}$\!\!\!&\!\!\!15\!\!\!&\!\!\!$\frac{31}{2}$\!\!\!&\!\!\!16\!\!\!&\!\!\!$\frac{33}{2}$\!\!\!&\!\!\!17\!\!\!&\!\!\!$\frac{35}{2}$\!\!\!\\
\hline
1\!\!\!&\!\!\!0\!\!\!&\!\!\!\!\!\!&\!\!\!\!\!\!&\!\!\!1\!\!\!&\!\!\!\!\!\!&\!\!\!\!\!\!&\!\!\!\!\!\!&\!\!\!\!\!\!&\!\!\!\!\!\!&\!\!\!\!\!\!&\!\!\!\!\!\!&\!\!\!\!\!\!&\!\!\!\!\!\!&\!\!\!\!\!\!&\!\!\!\!\!\!&\!\!\!\!\!\!&\!\!\!\!\!\!&\!\!\!\!\!\!&\!\!\!\!\!\!&\!\!\!\!\!\!&\!\!\!\!\!\!&\!\!\!\!\!\!&\!\!\!\!\!\!&\!\!\!\!\!\!&\!\!\!\!\!\!&\!\!\!\!\!\!&\!\!\!\!\!\!&\!\!\!\!\!\!&\!\!\!\!\!\!&\!\!\!\!\!\!&\!\!\!\!\!\!&\!\!\!\!\!\!&\!\!\!\!\!\!&\!\!\!\!\!\!&\!\!\!\!\!\!&\!\!\!\!\!\!&\!\!\!\\
\hline
2\!\!\!&\!\!\!0\!\!\!&\!\!\!\!\!\!&\!\!\!\!\!\!&\!\!\!\!\!\!&\!\!\!\!\!\!&\!\!\!\!\!\!&\!\!\!1\!\!\!&\!\!\!\!\!\!&\!\!\!\!\!\!&\!\!\!\!\!\!&\!\!\!\!\!\!&\!\!\!\!\!\!&\!\!\!\!\!\!&\!\!\!\!\!\!&\!\!\!\!\!\!&\!\!\!\!\!\!&\!\!\!\!\!\!&\!\!\!\!\!\!&\!\!\!\!\!\!&\!\!\!\!\!\!&\!\!\!\!\!\!&\!\!\!\!\!\!&\!\!\!\!\!\!&\!\!\!\!\!\!&\!\!\!\!\!\!&\!\!\!\!\!\!&\!\!\!\!\!\!&\!\!\!\!\!\!&\!\!\!\!\!\!&\!\!\!\!\!\!&\!\!\!\!\!\!&\!\!\!\!\!\!&\!\!\!\!\!\!&\!\!\!\!\!\!&\!\!\!\!\!\!&\!\!\!\!\!\!&\!\!\!\\
\hline
3\!\!\!&\!\!\!0\!\!\!&\!\!\!\!\!\!&\!\!\!\!\!\!&\!\!\!\!\!\!&\!\!\!\!\!\!&\!\!\!\!\!\!&\!\!\!\!\!\!&\!\!\!1\!\!\!&\!\!\!\!\!\!&\!\!\!\!\!\!&\!\!\!\!\!\!&\!\!\!\!\!\!&\!\!\!\!\!\!&\!\!\!\!\!\!&\!\!\!\!\!\!&\!\!\!\!\!\!&\!\!\!\!\!\!&\!\!\!\!\!\!&\!\!\!\!\!\!&\!\!\!\!\!\!&\!\!\!\!\!\!&\!\!\!\!\!\!&\!\!\!\!\!\!&\!\!\!\!\!\!&\!\!\!\!\!\!&\!\!\!\!\!\!&\!\!\!\!\!\!&\!\!\!\!\!\!&\!\!\!\!\!\!&\!\!\!\!\!\!&\!\!\!\!\!\!&\!\!\!\!\!\!&\!\!\!\!\!\!&\!\!\!\!\!\!&\!\!\!\!\!\!&\!\!\!\!\!\!&\!\!\!\\
\!\!\!&\!\!\!$\frac{1}{2}$\!\!\!&\!\!\!\!\!\!&\!\!\!\!\!\!&\!\!\!\!\!\!&\!\!\!\!\!\!&\!\!\!\!\!\!&\!\!\!\!\!\!&\!\!\!\!\!\!&\!\!\!\!\!\!&\!\!\!\!\!\!&\!\!\!1\!\!\!&\!\!\!\!\!\!&\!\!\!\!\!\!&\!\!\!\!\!\!&\!\!\!\!\!\!&\!\!\!\!\!\!&\!\!\!\!\!\!&\!\!\!\!\!\!&\!\!\!\!\!\!&\!\!\!\!\!\!&\!\!\!\!\!\!&\!\!\!\!\!\!&\!\!\!\!\!\!&\!\!\!\!\!\!&\!\!\!\!\!\!&\!\!\!\!\!\!&\!\!\!\!\!\!&\!\!\!\!\!\!&\!\!\!\!\!\!&\!\!\!\!\!\!&\!\!\!\!\!\!&\!\!\!\!\!\!&\!\!\!\!\!\!&\!\!\!\!\!\!&\!\!\!\!\!\!&\!\!\!\!\!\!&\!\!\!\\
\hline
4\!\!\!&\!\!\!0\!\!\!&\!\!\!\!\!\!&\!\!\!\!\!\!&\!\!\!\!\!\!&\!\!\!\!\!\!&\!\!\!\!\!\!&\!\!\!1\!\!\!&\!\!\!\!\!\!&\!\!\!\!\!\!&\!\!\!\!\!\!&\!\!\!1\!\!\!&\!\!\!\!\!\!&\!\!\!\!\!\!&\!\!\!\!\!\!&\!\!\!1\!\!\!&\!\!\!\!\!\!&\!\!\!\!\!\!&\!\!\!\!\!\!&\!\!\!\!\!\!&\!\!\!\!\!\!&\!\!\!\!\!\!&\!\!\!\!\!\!&\!\!\!\!\!\!&\!\!\!\!\!\!&\!\!\!\!\!\!&\!\!\!\!\!\!&\!\!\!\!\!\!&\!\!\!\!\!\!&\!\!\!\!\!\!&\!\!\!\!\!\!&\!\!\!\!\!\!&\!\!\!\!\!\!&\!\!\!\!\!\!&\!\!\!\!\!\!&\!\!\!\!\!\!&\!\!\!\!\!\!&\!\!\!\\
\!\!\!&\!\!\!$\frac{1}{2}$\!\!\!&\!\!\!\!\!\!&\!\!\!\!\!\!&\!\!\!\!\!\!&\!\!\!\!\!\!&\!\!\!\!\!\!&\!\!\!\!\!\!&\!\!\!\!\!\!&\!\!\!\!\!\!&\!\!\!1\!\!\!&\!\!\!\!\!\!&\!\!\!1\!\!\!&\!\!\!\!\!\!&\!\!\!1\!\!\!&\!\!\!\!\!\!&\!\!\!\!\!\!&\!\!\!\!\!\!&\!\!\!\!\!\!&\!\!\!\!\!\!&\!\!\!\!\!\!&\!\!\!\!\!\!&\!\!\!\!\!\!&\!\!\!\!\!\!&\!\!\!\!\!\!&\!\!\!\!\!\!&\!\!\!\!\!\!&\!\!\!\!\!\!&\!\!\!\!\!\!&\!\!\!\!\!\!&\!\!\!\!\!\!&\!\!\!\!\!\!&\!\!\!\!\!\!&\!\!\!\!\!\!&\!\!\!\!\!\!&\!\!\!\!\!\!&\!\!\!\!\!\!&\!\!\!\\
\!\!\!&\!\!\!2\!\!\!&\!\!\!\!\!\!&\!\!\!\!\!\!&\!\!\!\!\!\!&\!\!\!\!\!\!&\!\!\!\!\!\!&\!\!\!\!\!\!&\!\!\!\!\!\!&\!\!\!\!\!\!&\!\!\!\!\!\!&\!\!\!\!\!\!&\!\!\!\!\!\!&\!\!\!1\!\!\!&\!\!\!\!\!\!&\!\!\!\!\!\!&\!\!\!\!\!\!&\!\!\!\!\!\!&\!\!\!\!\!\!&\!\!\!\!\!\!&\!\!\!\!\!\!&\!\!\!\!\!\!&\!\!\!\!\!\!&\!\!\!\!\!\!&\!\!\!\!\!\!&\!\!\!\!\!\!&\!\!\!\!\!\!&\!\!\!\!\!\!&\!\!\!\!\!\!&\!\!\!\!\!\!&\!\!\!\!\!\!&\!\!\!\!\!\!&\!\!\!\!\!\!&\!\!\!\!\!\!&\!\!\!\!\!\!&\!\!\!\!\!\!&\!\!\!\!\!\!&\!\!\!\\
\!\!\!&\!\!\!$\frac{3}{2}$\!\!\!&\!\!\!\!\!\!&\!\!\!\!\!\!&\!\!\!\!\!\!&\!\!\!\!\!\!&\!\!\!\!\!\!&\!\!\!\!\!\!&\!\!\!\!\!\!&\!\!\!\!\!\!&\!\!\!\!\!\!&\!\!\!\!\!\!&\!\!\!\!\!\!&\!\!\!\!\!\!&\!\!\!\!\!\!&\!\!\!\!\!\!&\!\!\!1\!\!\!&\!\!\!\!\!\!&\!\!\!\!\!\!&\!\!\!\!\!\!&\!\!\!\!\!\!&\!\!\!\!\!\!&\!\!\!\!\!\!&\!\!\!\!\!\!&\!\!\!\!\!\!&\!\!\!\!\!\!&\!\!\!\!\!\!&\!\!\!\!\!\!&\!\!\!\!\!\!&\!\!\!\!\!\!&\!\!\!\!\!\!&\!\!\!\!\!\!&\!\!\!\!\!\!&\!\!\!\!\!\!&\!\!\!\!\!\!&\!\!\!\!\!\!&\!\!\!\!\!\!&\!\!\!\\
\hline
5\!\!\!&\!\!\!0\!\!\!&\!\!\!\!\!\!&\!\!\!\!\!\!&\!\!\!1\!\!\!&\!\!\!\!\!\!&\!\!\!\!\!\!&\!\!\!\!\!\!&\!\!\!1\!\!\!&\!\!\!\!\!\!&\!\!\!1\!\!\!&\!\!\!\!\!\!&\!\!\!2\!\!\!&\!\!\!\!\!\!&\!\!\!2\!\!\!&\!\!\!\!\!\!&\!\!\!2\!\!\!&\!\!\!\!\!\!&\!\!\!1\!\!\!&\!\!\!\!\!\!&\!\!\!\!\!\!&\!\!\!\!\!\!&\!\!\!\!\!\!&\!\!\!\!\!\!&\!\!\!\!\!\!&\!\!\!\!\!\!&\!\!\!\!\!\!&\!\!\!\!\!\!&\!\!\!\!\!\!&\!\!\!\!\!\!&\!\!\!\!\!\!&\!\!\!\!\!\!&\!\!\!\!\!\!&\!\!\!\!\!\!&\!\!\!\!\!\!&\!\!\!\!\!\!&\!\!\!\!\!\!&\!\!\!\\
\!\!\!&\!\!\!$\frac{1}{2}$\!\!\!&\!\!\!\!\!\!&\!\!\!\!\!\!&\!\!\!\!\!\!&\!\!\!\!\!\!&\!\!\!\!\!\!&\!\!\!1\!\!\!&\!\!\!\!\!\!&\!\!\!1\!\!\!&\!\!\!\!\!\!&\!\!\!2\!\!\!&\!\!\!\!\!\!&\!\!\!2\!\!\!&\!\!\!\!\!\!&\!\!\!3\!\!\!&\!\!\!\!\!\!&\!\!\!2\!\!\!&\!\!\!\!\!\!&\!\!\!1\!\!\!&\!\!\!\!\!\!&\!\!\!\!\!\!&\!\!\!\!\!\!&\!\!\!\!\!\!&\!\!\!\!\!\!&\!\!\!\!\!\!&\!\!\!\!\!\!&\!\!\!\!\!\!&\!\!\!\!\!\!&\!\!\!\!\!\!&\!\!\!\!\!\!&\!\!\!\!\!\!&\!\!\!\!\!\!&\!\!\!\!\!\!&\!\!\!\!\!\!&\!\!\!\!\!\!&\!\!\!\!\!\!&\!\!\!\\
\!\!\!&\!\!\!1\!\!\!&\!\!\!\!\!\!&\!\!\!\!\!\!&\!\!\!\!\!\!&\!\!\!\!\!\!&\!\!\!\!\!\!&\!\!\!\!\!\!&\!\!\!\!\!\!&\!\!\!\!\!\!&\!\!\!1\!\!\!&\!\!\!\!\!\!&\!\!\!1\!\!\!&\!\!\!\!\!\!&\!\!\!2\!\!\!&\!\!\!\!\!\!&\!\!\!2\!\!\!&\!\!\!\!\!\!&\!\!\!2\!\!\!&\!\!\!\!\!\!&\!\!\!1\!\!\!&\!\!\!\!\!\!&\!\!\!\!\!\!&\!\!\!\!\!\!&\!\!\!\!\!\!&\!\!\!\!\!\!&\!\!\!\!\!\!&\!\!\!\!\!\!&\!\!\!\!\!\!&\!\!\!\!\!\!&\!\!\!\!\!\!&\!\!\!\!\!\!&\!\!\!\!\!\!&\!\!\!\!\!\!&\!\!\!\!\!\!&\!\!\!\!\!\!&\!\!\!\!\!\!&\!\!\!\\
\!\!\!&\!\!\!$\frac{3}{2}$\!\!\!&\!\!\!\!\!\!&\!\!\!\!\!\!&\!\!\!\!\!\!&\!\!\!\!\!\!&\!\!\!\!\!\!&\!\!\!\!\!\!&\!\!\!\!\!\!&\!\!\!\!\!\!&\!\!\!\!\!\!&\!\!\!\!\!\!&\!\!\!\!\!\!&\!\!\!1\!\!\!&\!\!\!\!\!\!&\!\!\!1\!\!\!&\!\!\!\!\!\!&\!\!\!2\!\!\!&\!\!\!\!\!\!&\!\!\!1\!\!\!&\!\!\!\!\!\!&\!\!\!1\!\!\!&\!\!\!\!\!\!&\!\!\!\!\!\!&\!\!\!\!\!\!&\!\!\!\!\!\!&\!\!\!\!\!\!&\!\!\!\!\!\!&\!\!\!\!\!\!&\!\!\!\!\!\!&\!\!\!\!\!\!&\!\!\!\!\!\!&\!\!\!\!\!\!&\!\!\!\!\!\!&\!\!\!\!\!\!&\!\!\!\!\!\!&\!\!\!\!\!\!&\!\!\!\\
\!\!\!&\!\!\!2\!\!\!&\!\!\!\!\!\!&\!\!\!\!\!\!&\!\!\!\!\!\!&\!\!\!\!\!\!&\!\!\!\!\!\!&\!\!\!\!\!\!&\!\!\!\!\!\!&\!\!\!\!\!\!&\!\!\!\!\!\!&\!\!\!\!\!\!&\!\!\!\!\!\!&\!\!\!\!\!\!&\!\!\!\!\!\!&\!\!\!\!\!\!&\!\!\!1\!\!\!&\!\!\!\!\!\!&\!\!\!1\!\!\!&\!\!\!\!\!\!&\!\!\!1\!\!\!&\!\!\!\!\!\!&\!\!\!\!\!\!&\!\!\!\!\!\!&\!\!\!\!\!\!&\!\!\!\!\!\!&\!\!\!\!\!\!&\!\!\!\!\!\!&\!\!\!\!\!\!&\!\!\!\!\!\!&\!\!\!\!\!\!&\!\!\!\!\!\!&\!\!\!\!\!\!&\!\!\!\!\!\!&\!\!\!\!\!\!&\!\!\!\!\!\!&\!\!\!\!\!\!&\!\!\!\\
\!\!\!&\!\!\!$\frac{5}{2}$\!\!\!&\!\!\!\!\!\!&\!\!\!\!\!\!&\!\!\!\!\!\!&\!\!\!\!\!\!&\!\!\!\!\!\!&\!\!\!\!\!\!&\!\!\!\!\!\!&\!\!\!\!\!\!&\!\!\!\!\!\!&\!\!\!\!\!\!&\!\!\!\!\!\!&\!\!\!\!\!\!&\!\!\!\!\!\!&\!\!\!\!\!\!&\!\!\!\!\!\!&\!\!\!\!\!\!&\!\!\!\!\!\!&\!\!\!1\!\!\!&\!\!\!\!\!\!&\!\!\!\!\!\!&\!\!\!\!\!\!&\!\!\!\!\!\!&\!\!\!\!\!\!&\!\!\!\!\!\!&\!\!\!\!\!\!&\!\!\!\!\!\!&\!\!\!\!\!\!&\!\!\!\!\!\!&\!\!\!\!\!\!&\!\!\!\!\!\!&\!\!\!\!\!\!&\!\!\!\!\!\!&\!\!\!\!\!\!&\!\!\!\!\!\!&\!\!\!\!\!\!&\!\!\!\\
\!\!\!&\!\!\!3\!\!\!&\!\!\!\!\!\!&\!\!\!\!\!\!&\!\!\!\!\!\!&\!\!\!\!\!\!&\!\!\!\!\!\!&\!\!\!\!\!\!&\!\!\!\!\!\!&\!\!\!\!\!\!&\!\!\!\!\!\!&\!\!\!\!\!\!&\!\!\!\!\!\!&\!\!\!\!\!\!&\!\!\!\!\!\!&\!\!\!\!\!\!&\!\!\!\!\!\!&\!\!\!\!\!\!&\!\!\!\!\!\!&\!\!\!\!\!\!&\!\!\!\!\!\!&\!\!\!\!\!\!&\!\!\!1\!\!\!&\!\!\!\!\!\!&\!\!\!\!\!\!&\!\!\!\!\!\!&\!\!\!\!\!\!&\!\!\!\!\!\!&\!\!\!\!\!\!&\!\!\!\!\!\!&\!\!\!\!\!\!&\!\!\!\!\!\!&\!\!\!\!\!\!&\!\!\!\!\!\!&\!\!\!\!\!\!&\!\!\!\!\!\!&\!\!\!\!\!\!&\!\!\!\\
\hline
6\!\!\!&\!\!\!0\!\!\!&\!\!\!\!\!\!&\!\!\!1\!\!\!&\!\!\!\!\!\!&\!\!\!1\!\!\!&\!\!\!\!\!\!&\!\!\!3\!\!\!&\!\!\!\!\!\!&\!\!\!2\!\!\!&\!\!\!\!\!\!&\!\!\!6\!\!\!&\!\!\!\!\!\!&\!\!\!4\!\!\!&\!\!\!\!\!\!&\!\!\!8\!\!\!&\!\!\!\!\!\!&\!\!\!5\!\!\!&\!\!\!\!\!\!&\!\!\!7\!\!\!&\!\!\!\!\!\!&\!\!\!2\!\!\!&\!\!\!\!\!\!&\!\!\!2\!\!\!&\!\!\!\!\!\!&\!\!\!\!\!\!&\!\!\!\!\!\!&\!\!\!\!\!\!&\!\!\!\!\!\!&\!\!\!\!\!\!&\!\!\!\!\!\!&\!\!\!\!\!\!&\!\!\!\!\!\!&\!\!\!\!\!\!&\!\!\!\!\!\!&\!\!\!\!\!\!&\!\!\!\!\!\!&\!\!\!\\
\!\!\!&\!\!\!$\frac{1}{2}$\!\!\!&\!\!\!\!\!\!&\!\!\!\!\!\!&\!\!\!1\!\!\!&\!\!\!\!\!\!&\!\!\!2\!\!\!&\!\!\!\!\!\!&\!\!\!3\!\!\!&\!\!\!\!\!\!&\!\!\!5\!\!\!&\!\!\!\!\!\!&\!\!\!6\!\!\!&\!\!\!\!\!\!&\!\!\!9\!\!\!&\!\!\!\!\!\!&\!\!\!9\!\!\!&\!\!\!\!\!\!&\!\!\!10\!\!\!&\!\!\!\!\!\!&\!\!\!7\!\!\!&\!\!\!\!\!\!&\!\!\!5\!\!\!&\!\!\!\!\!\!&\!\!\!1\!\!\!&\!\!\!\!\!\!&\!\!\!1\!\!\!&\!\!\!\!\!\!&\!\!\!\!\!\!&\!\!\!\!\!\!&\!\!\!\!\!\!&\!\!\!\!\!\!&\!\!\!\!\!\!&\!\!\!\!\!\!&\!\!\!\!\!\!&\!\!\!\!\!\!&\!\!\!\!\!\!&\!\!\!\\
\!\!\!&\!\!\!1\!\!\!&\!\!\!\!\!\!&\!\!\!\!\!\!&\!\!\!\!\!\!&\!\!\!1\!\!\!&\!\!\!\!\!\!&\!\!\!1\!\!\!&\!\!\!\!\!\!&\!\!\!3\!\!\!&\!\!\!\!\!\!&\!\!\!3\!\!\!&\!\!\!\!\!\!&\!\!\!7\!\!\!&\!\!\!\!\!\!&\!\!\!7\!\!\!&\!\!\!\!\!\!&\!\!\!11\!\!\!&\!\!\!\!\!\!&\!\!\!9\!\!\!&\!\!\!\!\!\!&\!\!\!9\!\!\!&\!\!\!\!\!\!&\!\!\!4\!\!\!&\!\!\!\!\!\!&\!\!\!2\!\!\!&\!\!\!\!\!\!&\!\!\!\!\!\!&\!\!\!\!\!\!&\!\!\!\!\!\!&\!\!\!\!\!\!&\!\!\!\!\!\!&\!\!\!\!\!\!&\!\!\!\!\!\!&\!\!\!\!\!\!&\!\!\!\!\!\!&\!\!\!\!\!\!&\!\!\!\\
\!\!\!&\!\!\!$\frac{3}{2}$\!\!\!&\!\!\!\!\!\!&\!\!\!\!\!\!&\!\!\!\!\!\!&\!\!\!\!\!\!&\!\!\!\!\!\!&\!\!\!\!\!\!&\!\!\!1\!\!\!&\!\!\!\!\!\!&\!\!\!1\!\!\!&\!\!\!\!\!\!&\!\!\!3\!\!\!&\!\!\!\!\!\!&\!\!\!4\!\!\!&\!\!\!\!\!\!&\!\!\!7\!\!\!&\!\!\!\!\!\!&\!\!\!7\!\!\!&\!\!\!\!\!\!&\!\!\!10\!\!\!&\!\!\!\!\!\!&\!\!\!6\!\!\!&\!\!\!\!\!\!&\!\!\!4\!\!\!&\!\!\!\!\!\!&\!\!\!\!\!\!&\!\!\!\!\!\!&\!\!\!\!\!\!&\!\!\!\!\!\!&\!\!\!\!\!\!&\!\!\!\!\!\!&\!\!\!\!\!\!&\!\!\!\!\!\!&\!\!\!\!\!\!&\!\!\!\!\!\!&\!\!\!\!\!\!&\!\!\!\\
\!\!\!&\!\!\!2\!\!\!&\!\!\!\!\!\!&\!\!\!\!\!\!&\!\!\!\!\!\!&\!\!\!\!\!\!&\!\!\!\!\!\!&\!\!\!\!\!\!&\!\!\!\!\!\!&\!\!\!\!\!\!&\!\!\!\!\!\!&\!\!\!1\!\!\!&\!\!\!\!\!\!&\!\!\!1\!\!\!&\!\!\!\!\!\!&\!\!\!3\!\!\!&\!\!\!\!\!\!&\!\!\!4\!\!\!&\!\!\!\!\!\!&\!\!\!7\!\!\!&\!\!\!\!\!\!&\!\!\!6\!\!\!&\!\!\!\!\!\!&\!\!\!6\!\!\!&\!\!\!\!\!\!&\!\!\!2\!\!\!&\!\!\!\!\!\!&\!\!\!1\!\!\!&\!\!\!\!\!\!&\!\!\!\!\!\!&\!\!\!\!\!\!&\!\!\!\!\!\!&\!\!\!\!\!\!&\!\!\!\!\!\!&\!\!\!\!\!\!&\!\!\!\!\!\!&\!\!\!\!\!\!&\!\!\!\\
\!\!\!&\!\!\!$\frac{5}{2}$\!\!\!&\!\!\!\!\!\!&\!\!\!\!\!\!&\!\!\!\!\!\!&\!\!\!\!\!\!&\!\!\!\!\!\!&\!\!\!\!\!\!&\!\!\!\!\!\!&\!\!\!\!\!\!&\!\!\!\!\!\!&\!\!\!\!\!\!&\!\!\!\!\!\!&\!\!\!\!\!\!&\!\!\!1\!\!\!&\!\!\!\!\!\!&\!\!\!1\!\!\!&\!\!\!\!\!\!&\!\!\!3\!\!\!&\!\!\!\!\!\!&\!\!\!3\!\!\!&\!\!\!\!\!\!&\!\!\!5\!\!\!&\!\!\!\!\!\!&\!\!\!3\!\!\!&\!\!\!\!\!\!&\!\!\!2\!\!\!&\!\!\!\!\!\!&\!\!\!\!\!\!&\!\!\!\!\!\!&\!\!\!\!\!\!&\!\!\!\!\!\!&\!\!\!\!\!\!&\!\!\!\!\!\!&\!\!\!\!\!\!&\!\!\!\!\!\!&\!\!\!\!\!\!&\!\!\!\\
\!\!\!&\!\!\!3\!\!\!&\!\!\!\!\!\!&\!\!\!\!\!\!&\!\!\!\!\!\!&\!\!\!\!\!\!&\!\!\!\!\!\!&\!\!\!\!\!\!&\!\!\!\!\!\!&\!\!\!\!\!\!&\!\!\!\!\!\!&\!\!\!\!\!\!&\!\!\!\!\!\!&\!\!\!\!\!\!&\!\!\!\!\!\!&\!\!\!\!\!\!&\!\!\!\!\!\!&\!\!\!1\!\!\!&\!\!\!\!\!\!&\!\!\!1\!\!\!&\!\!\!\!\!\!&\!\!\!3\!\!\!&\!\!\!\!\!\!&\!\!\!3\!\!\!&\!\!\!\!\!\!&\!\!\!3\!\!\!&\!\!\!\!\!\!&\!\!\!1\!\!\!&\!\!\!\!\!\!&\!\!\!\!\!\!&\!\!\!\!\!\!&\!\!\!\!\!\!&\!\!\!\!\!\!&\!\!\!\!\!\!&\!\!\!\!\!\!&\!\!\!\!\!\!&\!\!\!\!\!\!&\!\!\!\\
\!\!\!&\!\!\!$\frac{7}{2}$\!\!\!&\!\!\!\!\!\!&\!\!\!\!\!\!&\!\!\!\!\!\!&\!\!\!\!\!\!&\!\!\!\!\!\!&\!\!\!\!\!\!&\!\!\!\!\!\!&\!\!\!\!\!\!&\!\!\!\!\!\!&\!\!\!\!\!\!&\!\!\!\!\!\!&\!\!\!\!\!\!&\!\!\!\!\!\!&\!\!\!\!\!\!&\!\!\!\!\!\!&\!\!\!\!\!\!&\!\!\!\!\!\!&\!\!\!\!\!\!&\!\!\!1\!\!\!&\!\!\!\!\!\!&\!\!\!1\!\!\!&\!\!\!\!\!\!&\!\!\!2\!\!\!&\!\!\!\!\!\!&\!\!\!1\!\!\!&\!\!\!\!\!\!&\!\!\!1\!\!\!&\!\!\!\!\!\!&\!\!\!\!\!\!&\!\!\!\!\!\!&\!\!\!\!\!\!&\!\!\!\!\!\!&\!\!\!\!\!\!&\!\!\!\!\!\!&\!\!\!\!\!\!&\!\!\!\\
\!\!\!&\!\!\!\!4\!\!&\!\!\!\!\!\!&\!\!\!\!\!\!&\!\!\!\!\!\!&\!\!\!\!\!\!&\!\!\!\!\!\!&\!\!\!\!\!\!&\!\!\!\!\!\!&\!\!\!\!\!\!&\!\!\!\!\!\!&\!\!\!\!\!\!&\!\!\!\!\!\!&\!\!\!\!\!\!&\!\!\!\!\!\!&\!\!\!\!\!\!&\!\!\!\!\!\!&\!\!\!\!\!\!&\!\!\!\!\!\!&\!\!\!\!\!\!&\!\!\!\!\!\!&\!\!\!\!\!\!&\!\!\!\!\!\!&\!\!\!1\!\!\!&\!\!\!\!\!\!&\!\!\!1\!\!\!&\!\!\!\!\!\!&\!\!\!1\!\!\!&\!\!\!\!\!\!&\!\!\!\!\!\!&\!\!\!\!\!\!&\!\!\!\!\!\!&\!\!\!\!\!\!&\!\!\!\!\!\!&\!\!\!\!\!\!&\!\!\!\!\!\!&\!\!\!\!\!\!&\!\!\!\\
\!\!\!&\!\!\!$\frac{9}{2}$\!\!\!&\!\!\!\!\!\!&\!\!\!\!\!\!&\!\!\!\!\!\!&\!\!\!\!\!\!&\!\!\!\!\!\!&\!\!\!\!\!\!&\!\!\!\!\!\!&\!\!\!\!\!\!&\!\!\!\!\!\!&\!\!\!\!\!\!&\!\!\!\!\!\!&\!\!\!\!\!\!&\!\!\!\!\!\!&\!\!\!\!\!\!&\!\!\!\!\!\!&\!\!\!\!\!\!&\!\!\!\!\!\!&\!\!\!\!\!\!&\!\!\!\!\!\!&\!\!\!\!\!\!&\!\!\!\!\!\!&\!\!\!\!\!\!&\!\!\!\!\!\!&\!\!\!\!\!\!&\!\!\!1\!\!\!&\!\!\!\!\!\!&\!\!\!\!\!\!&\!\!\!\!\!\!&\!\!\!\!\!\!&\!\!\!\!\!\!&\!\!\!\!\!\!&\!\!\!\!\!\!&\!\!\!\!\!\!&\!\!\!\!\!\!&\!\!\!\!\!\!&\!\!\!\\
\!\!\!&\!\!\!5\!\!\!&\!\!\!\!\!\!&\!\!\!\!\!\!&\!\!\!\!\!\!&\!\!\!\!\!\!&\!\!\!\!\!\!&\!\!\!\!\!\!&\!\!\!\!\!\!&\!\!\!\!\!\!&\!\!\!\!\!\!&\!\!\!\!\!\!&\!\!\!\!\!\!&\!\!\!\!\!\!&\!\!\!\!\!\!&\!\!\!\!\!\!&\!\!\!\!\!\!&\!\!\!\!\!\!&\!\!\!\!\!\!&\!\!\!\!\!\!&\!\!\!\!\!\!&\!\!\!\!\!\!&\!\!\!\!\!\!&\!\!\!\!\!\!&\!\!\!\!\!\!&\!\!\!\!\!\!&\!\!\!\!\!\!&\!\!\!\!\!\!&\!\!\!\!\!\!&\!\!\!1\!\!\!&\!\!\!\!\!\!&\!\!\!\!\!\!&\!\!\!\!\!\!&\!\!\!\!\!\!&\!\!\!\!\!\!&\!\!\!\!\!\!&\!\!\!\!\!\!&\!\!\!\\
\hline
7\!\!\!&\!\!\!0\!\!\!&\!\!\!\!\!\!&\!\!\!\!\!\!&\!\!\!6\!\!\!&\!\!\!\!\!\!&\!\!\!6\!\!\!&\!\!\!\!\!\!&\!\!\!12\!\!\!&\!\!\!\!\!\!&\!\!\!13\!\!\!&\!\!\!\!\!\!&\!\!\!19\!\!\!&\!\!\!\!\!\!&\!\!\!21\!\!\!&\!\!\!\!\!\!&\!\!\!26\!\!\!&\!\!\!\!\!\!&\!\!\!26\!\!\!&\!\!\!\!\!\!&\!\!\!26\!\!\!&\!\!\!\!\!\!&\!\!\!22\!\!\!&\!\!\!\!\!\!&\!\!\!15\!\!\!&\!\!\!\!\!\!&\!\!\!9\!\!\!&\!\!\!\!\!\!&\!\!\!4\!\!\!&\!\!\!\!\!\!&\!\!\!2\!\!\!&\!\!\!\!\!\!&\!\!\!\!\!\!&\!\!\!\!\!\!&\!\!\!\!\!\!&\!\!\!\!\!\!&\!\!\!\!\!\!&\!\!\!\\
\!\!\!&\!\!\!$\frac{1}{2}$\!\!\!&\!\!\!\!\!\!&\!\!\!4\!\!\!&\!\!\!\!\!\!&\!\!\!7\!\!\!&\!\!\!\!\!\!&\!\!\!12\!\!\!&\!\!\!\!\!\!&\!\!\!17\!\!\!&\!\!\!\!\!\!&\!\!\!24\!\!\!&\!\!\!\!\!\!&\!\!\!29\!\!\!&\!\!\!\!\!\!&\!\!\!37\!\!\!&\!\!\!\!\!\!&\!\!\!41\!\!\!&\!\!\!\!\!\!&\!\!\!45\!\!\!&\!\!\!\!\!\!&\!\!\!41\!\!\!&\!\!\!\!\!\!&\!\!\!35\!\!\!&\!\!\!\!\!\!&\!\!\!23\!\!\!&\!\!\!\!\!\!&\!\!\!13\!\!\!&\!\!\!\!\!\!&\!\!\!5\!\!\!&\!\!\!\!\!\!&\!\!\!1\!\!\!&\!\!\!\!\!\!&\!\!\!\!\!\!&\!\!\!\!\!\!&\!\!\!\!\!\!&\!\!\!\!\!\!&\!\!\!\\
\!\!\!&\!\!\!1\!\!\!&\!\!\!2\!\!\!&\!\!\!\!\!\!&\!\!\!3\!\!\!&\!\!\!\!\!\!&\!\!\!8\!\!\!&\!\!\!\!\!\!&\!\!\!11\!\!\!&\!\!\!\!\!\!&\!\!\!18\!\!\!&\!\!\!\!\!\!&\!\!\!23\!\!\!&\!\!\!\!\!\!&\!\!\!33\!\!\!&\!\!\!\!\!\!&\!\!\!40\!\!\!&\!\!\!\!\!\!&\!\!\!48\!\!\!&\!\!\!\!\!\!&\!\!\!50\!\!\!&\!\!\!\!\!\!&\!\!\!49\!\!\!&\!\!\!\!\!\!&\!\!\!39\!\!\!&\!\!\!\!\!\!&\!\!\!25\!\!\!&\!\!\!\!\!\!&\!\!\!12\!\!\!&\!\!\!\!\!\!&\!\!\!4\!\!\!&\!\!\!\!\!\!&\!\!\!1\!\!\!&\!\!\!\!\!\!&\!\!\!\!\!\!&\!\!\!\!\!\!&\!\!\!\!\!\!&\!\!\!\\
\!\!\!&\!\!\!$\frac{3}{2}$\!\!\!&\!\!\!\!\!\!&\!\!\!1\!\!\!&\!\!\!\!\!\!&\!\!\!3\!\!\!&\!\!\!\!\!\!&\!\!\!4\!\!\!&\!\!\!\!\!\!&\!\!\!9\!\!\!&\!\!\!\!\!\!&\!\!\!13\!\!\!&\!\!\!\!\!\!&\!\!\!21\!\!\!&\!\!\!\!\!\!&\!\!\!27\!\!\!&\!\!\!\!\!\!&\!\!\!38\!\!\!&\!\!\!\!\!\!&\!\!\!44\!\!\!&\!\!\!\!\!\!&\!\!\!50\!\!\!&\!\!\!\!\!\!&\!\!\!46\!\!\!&\!\!\!\!\!\!&\!\!\!38\!\!\!&\!\!\!\!\!\!&\!\!\!22\!\!\!&\!\!\!\!\!\!&\!\!\!10\!\!\!&\!\!\!\!\!\!&\!\!\!3\!\!\!&\!\!\!\!\!\!&\!\!\!1\!\!\!&\!\!\!\!\!\!&\!\!\!\!\!\!&\!\!\!\!\!\!&\!\!\!\\
\!\!\!&\!\!\!2\!\!\!&\!\!\!\!\!\!&\!\!\!\!\!\!&\!\!\!1\!\!\!&\!\!\!\!\!\!&\!\!\!1\!\!\!&\!\!\!\!\!\!&\!\!\!3\!\!\!&\!\!\!\!\!\!&\!\!\!5\!\!\!&\!\!\!\!\!\!&\!\!\!10\!\!\!&\!\!\!\!\!\!&\!\!\!14\!\!\!&\!\!\!\!\!\!&\!\!\!22\!\!\!&\!\!\!\!\!\!&\!\!\!29\!\!\!&\!\!\!\!\!\!&\!\!\!38\!\!\!&\!\!\!\!\!\!&\!\!\!41\!\!\!&\!\!\!\!\!\!&\!\!\!41\!\!\!&\!\!\!\!\!\!&\!\!\!31\!\!\!&\!\!\!\!\!\!&\!\!\!19\!\!\!&\!\!\!\!\!\!&\!\!\!7\!\!\!&\!\!\!\!\!\!&\!\!\!2\!\!\!&\!\!\!\!\!\!&\!\!\!\!\!\!&\!\!\!\!\!\!&\!\!\!\!\!\!&\!\!\!\\
\!\!\!&\!\!\!$\frac{5}{2}$\!\!\!&\!\!\!\!\!\!&\!\!\!\!\!\!&\!\!\!\!\!\!&\!\!\!\!\!\!&\!\!\!\!\!\!&\!\!\!1\!\!\!&\!\!\!\!\!\!&\!\!\!1\!\!\!&\!\!\!\!\!\!&\!\!\!3\!\!\!&\!\!\!\!\!\!&\!\!\!5\!\!\!&\!\!\!\!\!\!&\!\!\!10\!\!\!&\!\!\!\!\!\!&\!\!\!14\!\!\!&\!\!\!\!\!\!&\!\!\!22\!\!\!&\!\!\!\!\!\!&\!\!\!27\!\!\!&\!\!\!\!\!\!&\!\!\!34\!\!\!&\!\!\!\!\!\!&\!\!\!32\!\!\!&\!\!\!\!\!\!&\!\!\!26\!\!\!&\!\!\!\!\!\!&\!\!\!14\!\!\!&\!\!\!\!\!\!&\!\!\!6\!\!\!&\!\!\!\!\!\!&\!\!\!1\!\!\!&\!\!\!\!\!\!&\!\!\!\!\!\!&\!\!\!\!\!\!&\!\!\!\\
\!\!\!&\!\!\!3\!\!\!&\!\!\!\!\!\!&\!\!\!\!\!\!&\!\!\!\!\!\!&\!\!\!\!\!\!&\!\!\!\!\!\!&\!\!\!\!\!\!&\!\!\!\!\!\!&\!\!\!\!\!\!&\!\!\!1\!\!\!&\!\!\!\!\!\!&\!\!\!1\!\!\!&\!\!\!\!\!\!&\!\!\!3\!\!\!&\!\!\!\!\!\!&\!\!\!5\!\!\!&\!\!\!\!\!\!&\!\!\!10\!\!\!&\!\!\!\!\!\!&\!\!\!14\!\!\!&\!\!\!\!\!\!&\!\!\!21\!\!\!&\!\!\!\!\!\!&\!\!\!24\!\!\!&\!\!\!\!\!\!&\!\!\!26\!\!\!&\!\!\!\!\!\!&\!\!\!19\!\!\!&\!\!\!\!\!\!&\!\!\!11\!\!\!&\!\!\!\!\!\!&\!\!\!3\!\!\!&\!\!\!\!\!\!&\!\!\!1\!\!\!&\!\!\!\!\!\!&\!\!\!\!\!\!&\!\!\!\\
\!\!\!&\!\!\!$\frac{7}{2}$\!\!\!&\!\!\!\!\!\!&\!\!\!\!\!\!&\!\!\!\!\!\!&\!\!\!\!\!\!&\!\!\!\!\!\!&\!\!\!\!\!\!&\!\!\!\!\!\!&\!\!\!\!\!\!&\!\!\!\!\!\!&\!\!\!\!\!\!&\!\!\!\!\!\!&\!\!\!1\!\!\!&\!\!\!\!\!\!&\!\!\!1\!\!\!&\!\!\!\!\!\!&\!\!\!3\!\!\!&\!\!\!\!\!\!&\!\!\!5\!\!\!&\!\!\!\!\!\!&\!\!\!10\!\!\!&\!\!\!\!\!\!&\!\!\!13\!\!\!&\!\!\!\!\!\!&\!\!\!18\!\!\!&\!\!\!\!\!\!&\!\!\!18\!\!\!&\!\!\!\!\!\!&\!\!\!15\!\!\!&\!\!\!\!\!\!&\!\!\!7\!\!\!&\!\!\!\!\!\!&\!\!\!2\!\!\!&\!\!\!\!\!\!&\!\!\!\!\!\!&\!\!\!\!\!\!&\!\!\!\\
\!\!\!&\!\!\!4\!\!\!&\!\!\!\!\!\!&\!\!\!\!\!\!&\!\!\!\!\!\!&\!\!\!\!\!\!&\!\!\!\!\!\!&\!\!\!\!\!\!&\!\!\!\!\!\!&\!\!\!\!\!\!&\!\!\!\!\!\!&\!\!\!\!\!\!&\!\!\!\!\!\!&\!\!\!\!\!\!&\!\!\!\!\!\!&\!\!\!\!\!\!&\!\!\!1\!\!\!&\!\!\!\!\!\!&\!\!\!1\!\!\!&\!\!\!\!\!\!&\!\!\!3\!\!\!&\!\!\!\!\!\!&\!\!\!5\!\!\!&\!\!\!\!\!\!&\!\!\!9\!\!\!&\!\!\!\!\!\!&\!\!\!11\!\!\!&\!\!\!\!\!\!&\!\!\!13\!\!\!&\!\!\!\!\!\!&\!\!\!9\!\!\!&\!\!\!\!\!\!&\!\!\!5\!\!\!&\!\!\!\!\!\!&\!\!\!1\!\!\!&\!\!\!\!\!\!&\!\!\!\!\!\!&\!\!\!\\
\!\!\!&\!\!\!$\frac{9}{2}$\!\!\!&\!\!\!\!\!\!&\!\!\!\!\!\!&\!\!\!\!\!\!&\!\!\!\!\!\!&\!\!\!\!\!\!&\!\!\!\!\!\!&\!\!\!\!\!\!&\!\!\!\!\!\!&\!\!\!\!\!\!&\!\!\!\!\!\!&\!\!\!\!\!\!&\!\!\!\!\!\!&\!\!\!\!\!\!&\!\!\!\!\!\!&\!\!\!\!\!\!&\!\!\!\!\!\!&\!\!\!\!\!\!&\!\!\!1\!\!\!&\!\!\!\!\!\!&\!\!\!1\!\!\!&\!\!\!\!\!\!&\!\!\!3\!\!\!&\!\!\!\!\!\!&\!\!\!5\!\!\!&\!\!\!\!\!\!&\!\!\!8\!\!\!&\!\!\!\!\!\!&\!\!\!8\!\!\!&\!\!\!\!\!\!&\!\!\!7\!\!\!&\!\!\!\!\!\!&\!\!\!3\!\!\!&\!\!\!\!\!\!&\!\!\!1\!\!\!&\!\!\!\!\!\!&\!\!\!\\
\!\!\!&\!\!\!5\!\!\!&\!\!\!\!\!\!&\!\!\!\!\!\!&\!\!\!\!\!\!&\!\!\!\!\!\!&\!\!\!\!\!\!&\!\!\!\!\!\!&\!\!\!\!\!\!&\!\!\!\!\!\!&\!\!\!\!\!\!&\!\!\!\!\!\!&\!\!\!\!\!\!&\!\!\!\!\!\!&\!\!\!\!\!\!&\!\!\!\!\!\!&\!\!\!\!\!\!&\!\!\!\!\!\!&\!\!\!\!\!\!&\!\!\!\!\!\!&\!\!\!\!\!\!&\!\!\!\!\!\!&\!\!\!1\!\!\!&\!\!\!\!\!\!&\!\!\!1\!\!\!&\!\!\!\!\!\!&\!\!\!3\!\!\!&\!\!\!\!\!\!&\!\!\!4\!\!\!&\!\!\!\!\!\!&\!\!\!6\!\!\!&\!\!\!\!\!\!&\!\!\!4\!\!\!&\!\!\!\!\!\!&\!\!\!2\!\!\!&\!\!\!\!\!\!&\!\!\!\!\!\!&\!\!\!\\
\!\!\!&\!\!\!$\frac{11}{2}$\!\!\!&\!\!\!\!\!\!&\!\!\!\!\!\!&\!\!\!\!\!\!&\!\!\!\!\!\!&\!\!\!\!\!\!&\!\!\!\!\!\!&\!\!\!\!\!\!&\!\!\!\!\!\!&\!\!\!\!\!\!&\!\!\!\!\!\!&\!\!\!\!\!\!&\!\!\!\!\!\!&\!\!\!\!\!\!&\!\!\!\!\!\!&\!\!\!\!\!\!&\!\!\!\!\!\!&\!\!\!\!\!\!&\!\!\!\!\!\!&\!\!\!\!\!\!&\!\!\!\!\!\!&\!\!\!\!\!\!&\!\!\!\!\!\!&\!\!\!\!\!\!&\!\!\!1\!\!\!&\!\!\!\!\!\!&\!\!\!1\!\!\!&\!\!\!\!\!\!&\!\!\!3\!\!\!&\!\!\!\!\!\!&\!\!\!3\!\!\!&\!\!\!\!\!\!&\!\!\!3\!\!\!&\!\!\!\!\!\!&\!\!\!1\!\!\!&\!\!\!\!\!\!&\!\!\!\\
\!\!\!&\!\!\!6\!\!\!&\!\!\!\!\!\!&\!\!\!\!\!\!&\!\!\!\!\!\!&\!\!\!\!\!\!&\!\!\!\!\!\!&\!\!\!\!\!\!&\!\!\!\!\!\!&\!\!\!\!\!\!&\!\!\!\!\!\!&\!\!\!\!\!\!&\!\!\!\!\!\!&\!\!\!\!\!\!&\!\!\!\!\!\!&\!\!\!\!\!\!&\!\!\!\!\!\!&\!\!\!\!\!\!&\!\!\!\!\!\!&\!\!\!\!\!\!&\!\!\!\!\!\!&\!\!\!\!\!\!&\!\!\!\!\!\!&\!\!\!\!\!\!&\!\!\!\!\!\!&\!\!\!\!\!\!&\!\!\!\!\!\!&\!\!\!\!\!\!&\!\!\!1\!\!\!&\!\!\!\!\!\!&\!\!\!1\!\!\!&\!\!\!\!\!\!&\!\!\!2\!\!\!&\!\!\!\!\!\!&\!\!\!1\!\!\!&\!\!\!\!\!\!&\!\!\!1\!\!\!&\!\!\!\\
\!\!\!&\!\!\!$\frac{13}{2}$\!\!\!&\!\!\!\!\!\!&\!\!\!\!\!\!&\!\!\!\!\!\!&\!\!\!\!\!\!&\!\!\!\!\!\!&\!\!\!\!\!\!&\!\!\!\!\!\!&\!\!\!\!\!\!&\!\!\!\!\!\!&\!\!\!\!\!\!&\!\!\!\!\!\!&\!\!\!\!\!\!&\!\!\!\!\!\!&\!\!\!\!\!\!&\!\!\!\!\!\!&\!\!\!\!\!\!&\!\!\!\!\!\!&\!\!\!\!\!\!&\!\!\!\!\!\!&\!\!\!\!\!\!&\!\!\!\!\!\!&\!\!\!\!\!\!&\!\!\!\!\!\!&\!\!\!\!\!\!&\!\!\!\!\!\!&\!\!\!\!\!\!&\!\!\!\!\!\!&\!\!\!\!\!\!&\!\!\!\!\!\!&\!\!\!1\!\!\!&\!\!\!\!\!\!&\!\!\!1\!\!\!&\!\!\!\!\!\!&\!\!\!1\!\!\!&\!\!\!\!\!\!&\!\!\!\\
\!\!\!&\!\!\!7\!\!\!&\!\!\!\!\!\!&\!\!\!\!\!\!&\!\!\!\!\!\!&\!\!\!\!\!\!&\!\!\!\!\!\!&\!\!\!\!\!\!&\!\!\!\!\!\!&\!\!\!\!\!\!&\!\!\!\!\!\!&\!\!\!\!\!\!&\!\!\!\!\!\!&\!\!\!\!\!\!&\!\!\!\!\!\!&\!\!\!\!\!\!&\!\!\!\!\!\!&\!\!\!\!\!\!&\!\!\!\!\!\!&\!\!\!\!\!\!&\!\!\!\!\!\!&\!\!\!\!\!\!&\!\!\!\!\!\!&\!\!\!\!\!\!&\!\!\!\!\!\!&\!\!\!\!\!\!&\!\!\!\!\!\!&\!\!\!\!\!\!&\!\!\!\!\!\!&\!\!\!\!\!\!&\!\!\!\!\!\!&\!\!\!\!\!\!&\!\!\!\!\!\!&\!\!\!\!\!\!&\!\!\!1\!\!\!&\!\!\!\!\!\!&\!\!\!\!\!\!&\!\!\!\\
\!\!\!&\!\!\!$\frac{15}{2}$\!\!\!&\!\!\!\!\!\!&\!\!\!\!\!\!&\!\!\!\!\!\!&\!\!\!\!\!\!&\!\!\!\!\!\!&\!\!\!\!\!\!&\!\!\!\!\!\!&\!\!\!\!\!\!&\!\!\!\!\!\!&\!\!\!\!\!\!&\!\!\!\!\!\!&\!\!\!\!\!\!&\!\!\!\!\!\!&\!\!\!\!\!\!&\!\!\!\!\!\!&\!\!\!\!\!\!&\!\!\!\!\!\!&\!\!\!\!\!\!&\!\!\!\!\!\!&\!\!\!\!\!\!&\!\!\!\!\!\!&\!\!\!\!\!\!&\!\!\!\!\!\!&\!\!\!\!\!\!&\!\!\!\!\!\!&\!\!\!\!\!\!&\!\!\!\!\!\!&\!\!\!\!\!\!&\!\!\!\!\!\!&\!\!\!\!\!\!&\!\!\!\!\!\!&\!\!\!\!\!\!&\!\!\!\!\!\!&\!\!\!\!\!\!&\!\!\!\!\!\!&\!\!\!\!1\\
\hline
d\!&\!\!$j_L\slash j_R$\!&\!\!0\!&\!\!\!$\frac{1}{2}$\!\!\!&\!\!\!1\!\!\!&\!\!\!$\frac{3}{2}$\!\!\!&\!\!\!2\!\!\!&\!\!\!$\frac{5}{2}$\!\!\!&\!\!\!3\!\!\!&\!\!\!$\frac{7}{2}$\!\!\!&\!\!\!4\!\!\!&
\!\!\!$\frac{9}{2}$\!\!\!&\!\!\!5\!\!\!&\!\!\!$\frac{11}{2}$\!\!\!&\!\!\!6\!\!\!&\!\!\!$\frac{13}{2}$\!\!\!&\!\!\!7\!\!\!&\!\!\!$\frac{15}{2}$\!\!\!&\!\!\!8\!\!\!&\!\!\!$\frac{17}{2}$\!\!\!
&\!\!\!9\!\!\!&\!\!\!$\frac{19}{2}$\!\!\!&\!\!\!10\!\!\!&\!\!\!$\frac{21}{2}$\!\!\!&\!\!\!11\!\!\!&\!\!\!$\frac{23}{2}$\!\!\!&\!\!\!12\!\!\!&\!\!\!$\frac{25}{2}$\!\!\!&\!\!\!13\!\!\!
&\!\!\!$\frac{27}{2}$\!\!\!&\!\!\!14\!\!\!&\!\!\!$\frac{29}{2}$\!\!\!&\!\!\!15\!\!\!&\!\!\!$\frac{31}{2}$\!\!\!&\!\!\!16\!\!\!&\!\!\!$\frac{33}{2}$\!\!\!&\!\!\!17\!\!\!&\!\!\!$\frac{35}{2}$\!\!\!\\
\hline
\end{tabular}}}
\caption{Non vanishing BPS numbers $N^d_{j_L,j_R}$ of local ${\cal O}(-3)\rightarrow \mathbb{P}^2$ up to $d=7$.}
%\end{sidewaystable}
\label{bpstable}
\end{table}
Up to small typos in \cite{IKV} the results up to $d=5$ agree with the results of the generalized
vertex~\cite{IKV}.

\subsection{The local Calabi-Yau
manifold ${\cal O}(-2,-2) \rightarrow \mathbb{P}^1\times \mathbb{P}^1$}

Here we describe explicitly the refinement of the five dimensional
index  for the local Calabi-Yau manifold ${\cal O}(-2,-2)
\rightarrow \mathbb{P}^1\times \mathbb{P}^1$. The geometry is physically
very interesting as it contains the refinement of the 4d $N=2$ Seiberg
Witten gauge theory~\cite{KKV1}, the refinement of the 3d Chern Simons
theory on the lens space $L(1,2)$\cite{Aganagic:2002wv} and a potential
refinement of the ABJM theory~\cite{Drukker:2010nc}\cite{Marino:2009jd} .

For the case at hand we have five chiral fields
$X_i$, $i=1,\ldots, 5$ and a gauge group $U(1)^{(1)}
\times U(1)^{(2)}$ under which the fields have charges
\begin{equation}
Q^{(1)}=(-2,1,1,0,0),\qquad  Q^{(2)}=(-2,0,0,1,1)\ ,
\end{equation}
respectively. The vanishing locus of the Stanley-Reisner ideal is
${\cal Sr}=\{ x_1=x_2=0\} \cup \{ x_3=x_4=0\}$

The elliptic curve of the mirror is given as
\begin{equation}
H(x,y)=1+x + \frac{u_1}{x} +y + \frac{u_2}{y}=0\ .
\end{equation}

These periods integrals are annihilated
by the two Picard-Fuchs operators
$\theta_i=u_i \frac{d}{d u_i}$
\begin{equation}
 \begin{array}{rl}
{\cal L}^{(1)}=& \theta_1^2-2(\theta_1 +\theta_2-1) (2 \theta_1+2 \theta_2-1)u_1\\
{\cal L}^{(2)}=& \theta_2^2-2(\theta_1 +\theta_2-1) (2 \theta_1+2 \theta_2-1)u_2 \ ,
 \end{array}
\label{PFp1p1}
\end{equation}
which have a constant solution and two logarithmic solutions
$t_1=\log(u_1)+ \Sigma(u_1,u_2)$ and  $t_2=\log(u_2)+ \Sigma(u_1,u_2)$.
This suggests to change parameters and introduce $u=u_1$
and
\begin{equation}
\Lambda_s=\log(u_1)-\log(u_2),
\end{equation}
which is a trivial solution.
We can now separate the derivatives in the operators
(\ref{PFp1p1}) and capture the system by one differential
operator of third order in $u$, where we understand
$m=e^{\Lambda_s}$ now as a deformation parameter.
This situation is similar to rank one $N=2$ Seiberg-Witten
(gauge) theories. The latter have one
coupling constant related to the complex structure
and thereby to elliptic integrals of the curve
and up to 9 mass parameters for hypermultiplet fields.
Indeed the geometry of ${\cal O}(-2,-2)
\rightarrow \mathbb{P}^1\times \mathbb{P}^1$ there
has a Seiberg-Witten limit~\cite{KKV1}, with a $SU(2)$
gauge group and without hypermultiplets. In the above
parametrization it is  at $(m,u)=(0,1/4)$ and the
decoupling of the mass scale $m$ becomes very simple in
the $(m,u)$ variables.

After same changes of variables~\cite{Klemm:2012ii} we can
parametrize the curve (\ref{curve}) as
\begin{equation}
y^2+x^2-y-\frac{xy}{\sqrt{u}}-m x^2 y=0
\end{equation}
and bring it into Weierstrass form (\ref{weierstrass})
using Nagells algorithm, with
\begin{equation}
\begin{array}{rl}
g_2=& 27 u^4 \left(16 u^2 \left(m^2-m+1\right)-8 u (m+1)+1\right)\\ [2 mm]
g_3=&  -27 u^6 (-1 + 4 u (1 + m)) (1 - 8 u (1 + m) + 8 u^2 (2 - 5 m + 2 m^2))\ .
\end{array}
\end{equation}
This yields a $J$-invariant
\begin{equation}
J=\frac{\left(16 \left(m^2-m+1\right) u^2-8 (m+1) u+1\right)^3}{m^2 u^4 \left(16 (m-1)^2 u^2-8 (m+1) u+1\right)}\ .
\label{jfunction}
\end{equation}
The coefficients in (\ref{genus1a},\ref{genus1b}) are  given by  $a=7,b=\frac{7}{2},c=-2$ and
$d=-1$.

With this information the direct integration determines
\begin{equation}
\begin{array}{rl}
\Delta^2F^{(0,2)}=&
\frac{2560 X^3}{81 u^6}+ \frac{16 X^2 \left(3 n_1^2 u^2+5 (m+1) u-8\right)}{27 u^4}+\\
&\frac{X \left(\left(m^2+50 m+1\right) u^2+6 n_1^2 (m+1) u^3-20 (m+1) u+13\right)}{54 u^2}+ \\
& \frac{9 n_1^2 \left(2 m^2+5 m+2\right) u^4-3 \left(5 m^2+76 m+5\right) u^2-\left(31 m^3-168 m^2-168 m+31\right) u^3+51 (m+1) u-23}{6480} , \\
\Delta^2F^{(1,1)}=&
\frac{32 X^2 n_2}{27 u^4}+\frac{X \left(\left(m^2-10 m+1\right) u^2-n_1^2 (m+1) u^3+(m+1) u-1\right)}{9 u^2}- \\
&\frac{n_1^2\left(43 m^2+130 m+43\right) u^4+\left(90 m^2-548 m+90\right) u^2-4 \left(29 m^3-127 m^2-127 m+29\right) u^3+4 (m+1)
u+13}{8640}\ , \\
\Delta^2F^{(2,0)}=& \frac{X n_2^2}{54 u^2}-
\frac{n_1^2 \left(17 m^2-370 m+17\right) u^4+2 \left(75 m^2+514 m+75\right) u^2-\left(84 m^3+508 m^2+508 m+84\right) u^3-116 (m+1)u+33}{34560}
\end{array}
\label{f2}
\end{equation}
and all higher genus amplitudes. Here we introduced $n_1=(m-1)$ and $n_2= (1-m u-u)$ and
rescaled $u$ by $u\rightarrow u/4$.

Here we note that in order to implement the gap condition
we introduce the conifold variable $\tilde u$ by
\begin{equation}
u=\frac{1}{4 \left(\sqrt{m}+1\right)^2}-\frac{\tilde u}{16}\
\label{conifoldparameter}
\end{equation}
and expand around small $\tilde u$, which means close to the conifold.
One property of the coordinate $\tilde u$ that follows from (\ref{jfunction})
is that ${\rm lim}_{\tilde u\rightarrow 0} \frac{1}{J(m,\tilde u)}=0$ independent
of $m$. As a consequence we can invert (\ref{jfunction})
near $\tilde u\sim 0$ and $q\sim 0$ for $q(\tilde m,\tilde u)$ and obtain
(\ref{nonlogperiod}) and (\ref{Xdef}) as expansions in $\tilde u$, whose coefficients
are exact rational functions in $m^{1/4}$.

\subsubsection{Nekrasov's 4d partition function at weak and strong coupling}
If we change coordinates to~\cite{Kachru:1995fv,KKV1}
\begin{equation}
u=\frac{1}{4}(1-\varepsilon^2 u_{sw}), \quad m=\frac{e^{\Lambda_s}}
{\Lambda_{Sw}^4(1-\varepsilon^2 u_{sw})}
\label{seibergwittenparameter}
\end{equation}
we obtain in leading order in $\epsilon$ from (\ref{nonlogperiod})
the Seiberg-Witten $a$ period and the leading $\varepsilon$
order of $Z$ reproduced exactly Nekrasov's partition function.
Similarly using the variable (\ref{conifoldparameter}) and expanding
near $m=0$ in $m$ and near $u=\mp \Lambda^2$  we obtain to leading
order in  $\varepsilon$ the partition function $Z$ of $N=2$
Seiberg-Witten theory, in the strong coupling region, i.e. at
the monopole and dyon point. The relation between refined string
theory on del Pezzo surfaces and $N=2$ field theory
is  more interesting for higher del Pezzo surfaces  and
will be further discussed in~\cite{HuangKlemm}.

\subsubsection{BPS invariants for ${\cal O}(-2,-2) \rightarrow \mathbb{P}^1\times \mathbb{P}^1$
in the large volume limit}

We did the recursion up to genus 9 and observe for the refined BPS invariants and report
first some of the $n_{g_R,g_L}^\beta$

\begin{table}[!h]
\centering
\begin{tabular}[h]{|c|ccccccc|}
\hline
      & $d_1$ &   0 &     1 &     2 &     3 &     4 &        5   \\
\hline
$d_2$ &       &     &       &       &       &        &          \\
0     &       &     &    -2 &     0 &     0 &      0 &       0   \\
1     &       &  -2 &    -4 &    -6 &    -8 &    -10 &     -12  \\
2     &       &   0 &    -6 &   -32 &  -110 &   -288 &    -644   \\
3     &       &   0 &    -8 &  -110 &  -756 &  -3556 &  -13072   \\
4     &       &   0 &   -10 &  -288 & -3556 & -27264 & -153324  \\
5     &       &   0 &   -12 &  -644 &-13072 &-153324 &-1252040  \\
\hline
\end{tabular}
\caption{Instanton numbers $n^{d_1d_2}_{0,0}$ of ${\cal O}(-K)\rightarrow \mathbb{P}^1\times\mathbb{P}^1$.}
\end{table}

\begin{table}[!h]
\centering
\begin{tabular}[h]{|c|ccccccc|}
\hline
      & $d_1$ &   0 &     1 &     2 &     3 &     4 &        5   \\
\hline
$d_2$ &       &     &       &       &       &        &          \\
0     &       &     &     1 &     0 &     0 &      0 &       0   \\
1     &       &   1 &    10 &    35 &    84 &    165 &     286  \\
2     &       &   0 &    35 &    359&   1987&	7620&	23414   \\
3     &       &   0 &    84 &   1987&	20554&	134882&	657672	\\
4     &       &   0 &   165 &   7620&	134882&	1392751& 10110954\\
5     &       &   0 &   286 &   23414& 657672	& 10110954&104334092\\
\hline
\end{tabular}
\caption{Instanton numbers $n^{d_1d_2}_{1,0}$ of ${\cal O}(-K)\rightarrow \mathbb{P}^1\times\mathbb{P}^1$.}
\end{table}

\begin{table}[!h]
\centering
\begin{tabular}[h]{|c|ccccccc|}
\hline
      & $d_1$ &   0 &     1 &     2 &     3 &     4 &        5   \\
\hline
$d_2$ &       &     &       &       &       &        &          \\
0     &       &     &       &       &       &       &          \\
1     &       &     &       &   9 &     68 &	 300&	   988  \\
2     &       &     &       & 68&	1016&	7792&	41376	\\
3     &       &     &       & 300&	7792&	95313&	760764\\
4     &       &     &       & 988&	41376&	760764&	8695048\\
5     &       &     &       & 2698&	172124&	4552692& 71859628\\
\hline
\end{tabular}
\caption{Instanton numbers $n^{d_1d_2}_{0,1}$ of ${\cal O}(-K)\rightarrow \mathbb{P}^1\times\mathbb{P}^1$.}
\end{table}

\begin{table}[!h]
\centering
\begin{tabular}[h]{|c|ccccccc|}
\hline
      & $d_1$ &   0 &     1 &     2 &     3 &     4 &        5   \\
\hline
$d_2$ &       &     &       &       &       &        &          \\
0     &       &     &       &       &       &        &           \\
1     &       &     &    -6&	-56&	-252&	-792&	-2002 \\
2     &       &     &    -56&	-1232&	-11396&	-65268&	-278564\\
3     &       &     &  -252&	-11396&	-184722&-1726770& -11307496\\
4     &       &     &   -792&	-65268&	-1726770& -24555200&-233289152\\
5     &       &     &   -2002&	-278564&-11307496& -233289152& -3087009512\\
\hline
\end{tabular}
\caption{Instanton numbers $n^{d_1d_2}_{2,0}$ of ${\cal O}(-K)\rightarrow \mathbb{P}^1\times\mathbb{P}^1$.}
\end{table}

\begin{table}[!h]
\centering
\begin{tabular}[h]{|c|ccccccc|}
\hline
      & $d_1$ &   0 &     1 &     2 &     3 &     4 &        5   \\
\hline
$d_2$ &       &     &       &       &       &        &          \\
0     &       &     &       &       &       &        &          \\
1     &       &     &       &       &       &        &          \\
2     &       &     &       &   -120 &	-1484&	-9632&	-43732 \\
3     &       &     &       & -1484&	-33856&	-364908&-2580992\\
4     &       &     &       &   -9632&	-364908&-6064608& -62822028\\
5     &       &     &       &    -43732& -2580992&-62822028& -912904128\\
\hline
\end{tabular}
\caption{Instanton numbers $n^{d_1d_2}_{1,1}$ of ${\cal O}(-K)\rightarrow \mathbb{P}^1\times\mathbb{P}^1$.}
\end{table}

\begin{table}[!h]
\centering
\begin{tabular}[h]{|c|ccccccc|}
\hline
      & $d_1$ &   0 &     1 &     2 &     3 &     4 &        5   \\
\hline
$d_2$ &       &     &       &       &       &        &          \\
0     &       &     &      &      &      &       &          \\
1     &       &    &       &      &      &       &  \\
2     &       &    &     &   	& -12	& -116	& -628\\
3     &       &     &   &   -12	& -580	& -8042	& -64624\\
4     &       &     &   &    -116& 	-8042& 	-167936	& -1964440\\
5     &       &     &   &    -628& 	-64624& 	-1964440& -32242268 \\
\hline
\end{tabular}
\caption{Instanton numbers $n^{d_1d_2}_{0,2}$ of ${\cal O}(-K)\rightarrow \mathbb{P}^1\times\mathbb{P}^1$.}
\end{table}

Changing the basis according to (\ref{basis}) yields
\begin{equation}
 N^{(1,d)}=\begin{cases}
1 & {\rm if  }\ \ j_L=0, j_R=\frac{1}{2}+ d \\
0 & {\rm otherwise}
\end{cases}
\end{equation}
as well as the refined invariants reported in Table~\ref{bpstable2}.

\begin{table}[h!]
%\begin{sidewaystable}[h]
\centering{{
\begin{tabular}[h]{|c|c|cccccccccccccccccccc|}
\hline
$(d_1,d_2)$\!&\!\!$j_L\backslash j_R$\!&\!\!0\!&\!\!\!$\frac{1}{2}$\!\!\!&\!\!\!1\!\!\!&\!\!\!$\frac{3}{2}$\!\!\!&\!\!\!2\!\!\!&\!\!\!$\frac{5}{2}$\!\!\!&\!\!\!3\!\!\!&\!\!\!$\frac{7}{2}$\!\!\!&\!\!\!4\!\!\!&
\!\!\!$\frac{9}{2}$\!\!\!&\!\!\!5\!\!\!&\!\!\!$\frac{11}{2}$\!\!\!&\!\!\!6\!\!\!&\!\!\!$\frac{13}{2}$\!\!\!&\!\!\!7\!\!\!&\!\!\!$\frac{15}{2}$\!\!\!&\!\!\!8\!\!\!&\!\!\!$\frac{17}{2}$\!\!\!
&\!\!\!9\!\!\!&\!\!\!$\frac{19}{2}$\!\!\! \\
\hline
$(2,2)$\!&\!\!0\!&\!\! \!&\!\!\!\!\!\!&\!\!\!\!\!\!&\!\!\!\!\!\!&\!\!\!\!\!\!&\!\!\!1\!\!\!&\!\!\!\!\!\!&\!\!\!1\!\!\!&\!\!\!\!\!\!&
\!\!\!\!\!\!&\!\!\!\!\!\!&\!\!\!\!\!\!&\!\!\!\!\!\!&\!\!\!\!\!\!&\!\!\!\!\!\!&\!\!\!\!\!\!&\!\!\!\!\!\!&\!\!\!\!\!\!
&\!\!\!\!\!\!&\!\!\!\!\!\! \\
\!&\!\!$\frac{1}{2}$\!&\!\! \!&\!\!\!\!\!\!&\!\!\!\!\!\!&\!\!\!\!\!\!&\!\!\!\!\!\!&\!\!\!\!\!\!&\!\!\!\!\!\!&\!\!\!\!\!\!&\!\!\!1\!\!\!&
\!\!\!\!\!\!&\!\!\!\!\!\!&\!\!\!\!\!\!&\!\!\!\!\!\!&\!\!\!\!\!\!&\!\!\!\!\!\!&\!\!\!\!\!\!&\!\!\!\!\!\!&\!\!\!\!\!\!
&\!\!\!\!\!\!&\!\!\!\!\!\! \\
\hline
\hline
$(2,3)$\!&\!\!0\!&\!\! \!&\!\!\!\!\!\!&\!\!\!\!\!\!&\!\!\!\!\!\!&\!\!\!\!\!\!&\!\!\!1\!\!\!&\!\!\!\!\!\!&\!\!\!1\!\!\!&\!\!\!\!\!\!&
\!\!\!2\!\!\!&\!\!\!\!\!\!&\!\!\!\!\!\!&\!\!\!\!\!\!&\!\!\!\!\!\!&\!\!\!\!\!\!&\!\!\!\!\!\!&\!\!\!\!\!\!&\!\!\!\!\!\!
&\!\!\!\!\!\!&\!\!\!\!\!\! \\
\!&\!\!$\frac{1}{2}$\!&\!\! \!&\!\!\!\!\!\!&\!\!\!\!\!\!&\!\!\!\!\!\!&\!\!\!\!\!\!&\!\!\!\!\!\!&\!\!\!\!\!\!&\!\!\!\!\!\!&\!\!\!1\!\!\!&
\!\!\!\!\!\!&\!\!\!1\!\!\!&\!\!\!\!\!\!&\!\!\!\!\!\!&\!\!\!\!\!\!&\!\!\!\!\!\!&\!\!\!\!\!\!&\!\!\!\!\!\!&\!\!\!\!\!\!
&\!\!\!\!\!\!&\!\!\!\!\!\! \\
\!&\!\!$1$\!&\!\! \!&\!\!\!\!\!\!&\!\!\!\!\!\!&\!\!\!\!\!\!&\!\!\!\!\!\!&\!\!\!\!\!\!&\!\!\!\!\!\!&\!\!\!\!\!\!&\!\!\!\!\!\!&
\!\!\!\!\!\!&\!\!\!\!\!\!&\!\!\!1\!\!\!&\!\!\!\!\!\!&\!\!\!\!\!\!&\!\!\!\!\!\!&\!\!\!\!\!\!&\!\!\!\!\!\!&\!\!\!\!\!\!
&\!\!\!\!\!\!&\!\!\!\!\!\! \\
\hline
\hline
$(3,3)$\!&\!\!0\!&\!\! \!&\!\!\!\!\!\!&\!\!\!\!\!\!&\!\!\!1\!\!\!&\!\!\!\!\!\!&\!\!\!1\!\!\!&\!\!\!\!\!\!&\!\!\!3\!\!\!&\!\!\!\!\!\!&
\!\!\!3\!\!\!&\!\!\!\!\!\!&\!\!\!4\!\!\!&\!\!\!\!\!\!&\!\!\!\!\!\!&\!\!\!\!\!\!&\!\!\!\!\!\!&\!\!\!\!\!\!&\!\!\!\!\!\!
&\!\!\!\!\!\!&\!\!\!\!\!\! \\
\!&\!\!$\frac{1}{2}$\!&\!\! \!&\!\!\!\!\!\!&\!\!\!\!\!\!&\!\!\!\!\!\!&\!\!\!\!\!\!&\!\!\!\!\!\!&\!\!\!1\!\!\!&\!\!\!\!\!\!&\!\!\!2\!\!\!&
\!\!\!\!\!\!&\!\!\!3\!\!\!&\!\!\!\!\!\!&\!\!\!3\!\!\!&\!\!\!\!\!\!&\!\!\!\!1\!\!&\!\!\!\!\!\!&\!\!\!\!\!\!&\!\!\!\!\!\!
&\!\!\!\!\!\!&\!\!\!\!\!\! \\
\!&\!\!$1$\!&\!\! \!&\!\!\!\!\!\!&\!\!\!\!\!\!&\!\!\!\!\!\!&\!\!\!\!\!\!&\!\!\!\!\!\!&\!\!\!\!\!\!&\!\!\!\!\!\!&\!\!\!\!\!\!&
\!\!\!1\!\!\!&\!\!\!\!\!\!&\!\!\!2\!\!\!&\!\!\!\!\!\!&\!\!\!3\!\!\!&\!\!\!\!\!\!&\!\!\!\!\!\!&\!\!\!\!\!\!&\!\!\!\!\!\!
&\!\!\!\!\!\!&\!\!\!\!\!\! \\
\!&\!\!$\frac{3}{2}$\!&\!\! \!&\!\!\!\!\!\!&\!\!\!\!\!\!&\!\!\!\!\!\!&\!\!\!\!\!\!&\!\!\!\!\!\!&\!\!\!\!\!\!&\!\!\!\!\!\!&\!\!\!\!\!\!&
\!\!\!\!\!\!&\!\!\!\!\!\!&\!\!\!\!\!\!&\!\!\!1\!\!\!&\!\!\!\!\!\!&\!\!\!1\!\!\!&\!\!\!\!\!\!&\!\!\!\!\!\!&\!\!\!\!\!\!
&\!\!\!\!\!\!&\!\!\!\!\!\! \\
\!&\!\!$2$\!&\!\! \!&\!\!\!\!\!\!&\!\!\!\!\!\!&\!\!\!\!\!\!&\!\!\!\!\!\!&\!\!\!\!\!\!&\!\!\!\!\!\!&\!\!\!\!\!\!&\!\!\!\!\!\!&
\!\!\!\!\!\!&\!\!\!\!\!\!&\!\!\!\!\!\!&\!\!\!\!\!\!&\!\!\!\!\!\!&\!\!\!\!\!\!&\!\!\!1\!\!\!&\!\!\!\!\!\!&\!\!\!\!\!\!
&\!\!\!\!\!\!&\!\!\!\!\!\! \\
\hline
\hline
$(3,4)$\!&\!\!0\!&\!\! \!&\!\!\!1\!\!\!&\!\!\!\!\!\!&\!\!\!1\!\!\!&\!\!\!\!\!\!&\!\!\!3\!\!\!&\!\!\!\!\!\!&\!\!\!4\!\!\!&\!\!\!\!\!\!&
\!\!\!7\!\!\!&\!\!\!\!\!\!&\!\!\!6\!\!\!&\!\!\!\!\!\!&\!\!\!7\!\!\!&\!\!\!\!\!\!&\!\!\!1\!\!\!&\!\!\!\!\!\!&\!\!\!1\!\!\!
&\!\!\!\!\!\!&\!\!\!\!\!\! \\
\!&\!\!$\frac{1}{2}$\!&\!\! \!&\!\!\!\!\!\!&\!\!\!\!\!\!&\!\!\!\!\!\!&\!\!\!\!\!\!&\!\!\!\!\!\!&\!\!\!1\!\!\!&\!\!\!\!\!\!&\!\!\!2\!\!\!&
\!\!\!\!\!\!&\!\!\!4\!\!\!&\!\!\!\!\!\!&\!\!\!6\!\!\!&\!\!\!\!\!\!&\!\!\!\!8\!\!&\!\!\!\!\!\!&\!\!\!2\!\!\!&\!\!\!\!\!\!
&\!\!\!\!\!\!&\!\!\!\!\!\! \\
\!&\!\!$1$\!&\!\! \!&\!\!\!\!\!\!&\!\!\!\!\!\!&\!\!\!\!\!\!&\!\!\!\!\!\!&\!\!\!\!\!\!&\!\!\!\!\!\!&\!\!\!1\!\!\!&\!\!\!\!\!\!&
\!\!\!2\!\!\!&\!\!\!\!\!\!&\!\!\!5\!\!\!&\!\!\!\!\!\!&\!\!\!6\!\!\!&\!\!\!\!\!\!&\!\!\!7\!\!\!&\!\!\!\!\!\!&\!\!\!1\!\!\!
&\!\!\!\!\!\!&\!\!\!\!\!\! \\
\!&\!\!$\frac{3}{2}$\!&\!\! \!&\!\!\!\!\!\!&\!\!\!\!\!\!&\!\!\!\!\!\!&\!\!\!\!\!\!&\!\!\!\!\!\!&\!\!\!\!\!\!&\!\!\!\!\!\!&\!\!\!\!\!\!&
\!\!\!\!\!\!&\!\!\!\!\!\!&\!\!\!\!\!\!&\!\!\!1\!\!\!&\!\!\!\!\!\!&\!\!\!2\!\!\!&\!\!\!\!\!\!&\!\!\!4\!\!\!&\!\!\!\!\!\!
&\!\!\!1\!\!\!&\!\!\!\!\!\! \\
\!&\!\!$2$\!&\!\! \!&\!\!\!\!\!\!&\!\!\!\!\!\!&\!\!\!\!\!\!&\!\!\!\!\!\!&\!\!\!\!\!\!&\!\!\!\!\!\!&\!\!\!\!\!\!&\!\!\!\!\!\!&
\!\!\!\!\!\!&\!\!\!\!\!\!&\!\!\!\!\!\!&\!\!\!\!\!\!&\!\!\!1\!\!\!&\!\!\!\!\!\!&\!\!\!2\!\!\!&\!\!\!\!\!\!&\!\!\!3\!\!\!
&\!\!\!\!\!\!&\!\!\!\!\!\! \\
\!&\!\!$\frac{5}{2}$\!&\!\! \!&\!\!\!\!\!\!&\!\!\!\!\!\!&\!\!\!\!\!\!&\!\!\!\!\!\!&\!\!\!\!\!\!&\!\!\!\!\!\!&\!\!\!\!\!\!&\!\!\!\!\!\!&
\!\!\!\!\!\!&\!\!\!\!\!\!&\!\!\!\!\!\!&\!\!\!\!\!\!&\!\!\!\!\!\!&\!\!\!\!\!\!&\!\!\!\!\!\!&\!\!\!1\!\!\!&\!\!\!\!\!\!
&\!\!\!1\!\!\!&\!\!\!\!\!\! \\
\!&\!\!$3$\!&\!\! \!&\!\!\!\!\!\!&\!\!\!\!\!\!&\!\!\!\!\!\!&\!\!\!\!\!\!&\!\!\!\!\!\!&\!\!\!\!\!\!&\!\!\!\!\!\!&\!\!\!\!\!\!&
\!\!\!\!\!\!&\!\!\!\!\!\!&\!\!\!\!\!\!&\!\!\!\!\!\!&\!\!\!\!\!\!&\!\!\!\!\!\!&\!\!\!\!\!\!&\!\!\!\!\!\!&\!\!\!\!\!\!
&\!\!\!\!\!\!&\!\!\!1\!\!\! \\
\hline
\end{tabular}}}
\caption{Non vanishing BPS numbers $N^{(d_1,d_2)}_{j_L,j_R}$ of local ${\cal O}(-2,-2)\rightarrow \mathbb{P}^1\times \mathbb{P}^1$}
%\end{sidewaystable}
\label{bpstable2}
\end{table}
These results of Table~\ref{bpstable2} agree with the ones of~\cite{IKV}.

\subsubsection{The refinement of perturbative CS theory on the Lens space $L(2,1)$}

In the above parametrization the ABJM slice~\cite{Marino:2009jd}\cite{Drukker:2010nc}
\cite{Klemm:2012ii} is given by $m=1$ and in particular the Chern-Simons
theory on the  lens space  point $L(2,1)$~\cite{Aganagic:2002wv} is
at $(m,u)=(1,\infty)$. The analysis and the choice of
variables is quite similar to~\cite{Aganagic:2002wv} except
that we do not have to solve differential equations, as
we can infer all properties of the local cusp expansions from the
universal relation (\ref{jfunction}) between the complex
structure  parameter $\tau$, defining the periods up
to normalization, and (\ref{nonlogperiod}).

Unlike at the conifold (\ref{conifoldparameter}) one has to
evaluate expressions like (\ref{f2}) at a point in the $u$-plane
where $1/J$ does not vanish for generic $m$, but only for $m=1$,
which is therefore the cusp point. To get the correct double
scaling limit near the orbifold point, the local parameters
$(\tilde m,\tilde u)$ can be defined as
\begin{equation}
m=1-\tilde m, \qquad u=\frac{1}{\tilde u^2 \tilde m^2}\ .
\end{equation}
Since $1/J$ small we can invert (\ref{jfunction}) for
$(\tilde m, \tilde u)$.

We further need to express the $\tilde m$, $\tilde u$  in
terms of the flat coordinates, which are given by the periods
$\Lambda_s$ and the period $a(\tilde u,\tilde m)$,
which can be calculated from (\ref{nonlogperiod}).
There is a subtlety in the latter calculation, because
as we mentioned we have normalized $g_2$ and $g_3$ so
that $a(u,m)$ is a solution to the system (\ref{PFp1p1}) at
$(u,m)=(0,0)$. Clearly scaling $g_2\rightarrow f^2(u,m) g_2$ and
$g_3\rightarrow f^3(u,m) g_3$ does not change
the $J$-function and hence the relation between $q$ and $u,m$.
From (\ref{nonlogperiod}) is is however clear that the
scaling  changes the normalization of the
period $\frac{{\rm d} t}{{\rm d} u}$, that vanished at
the cusp, by the factor $1/f^\frac{1}{2}$.
To get the correctly normalized solution we set
$f(u,m)=\frac{1}{\tilde u^6 \tilde m^4}$. That yields
\begin{equation}
a(\tilde u,\tilde m)=\tilde m \tilde u+\frac{1}{4} {\tilde m}^2 \tilde u+\frac{9}{64}  {\tilde m}^3 u+\frac{25}{256}
{\tilde m}^4 u+ \left(\frac{1225 {\tilde m}^5 \tilde u}{16384}+\frac{{\tilde m}^3 {\tilde u}^3}{192}\right)+\ldots
\end{equation}
as one solution and we chose $-\Lambda_s$ as the second.
We get then
\begin{equation}
\begin{array}{rl}
\tilde m&=1-\exp(-\Lambda_s)=\Lambda_s+{\cal O}(\Lambda_s^2)\\
\tilde u&=\frac{a }{\Lambda}+\frac{a}{4}+\frac{1}{192} a  \Lambda_s -\frac{1}{256} \left(a \Lambda_s^2\right)- \left(\frac{a^3}{192 \Lambda_s}
+\frac{49 a \Lambda_s^3}{737280}\right)+
\left(\frac{17 a \Lambda_s^4}{196608}-\frac{a^3}{768}\right)+O\left(\epsilon^7\right)\ .
\end{array}
\label{orbifoldperiods}
\end{equation}
Because the small parameters are $\tilde m\sim \tilde u\sim \epsilon $ we have $a\sim \epsilon^2$
while $\Lambda_s\sim \epsilon$. This defines the order and convergence of (\ref{orbifoldperiods}).
Defining as in~\cite{Aganagic:2002wv}
\begin{equation}
N_1=S_1=\frac{1}{4}(\Lambda + a),\qquad  N_2=S_2=\frac{1}{4}(\Lambda - a)
\end{equation}
we get the genus 0 partition function from integrating $F_0=\int {\rm d} a \int {\rm d} a \tau $
\begin{equation}
\begin{array}{rl}
F^{(0,0)}=&\frac{1}{2}(S_1^2 \log(S_1) + S_2^2 \log(S_2)) + \frac{1}{288}\left( S_1^4 + 6 S_1^3 S_2 + 18 S_1^2 S_2^2+ \ldots\right)-\\ [2 mm]
    & \frac{1}{345600} \left( 4 S_1 + 45 S_1^5 S_2 + 225 S_1^4 S_2^2 + 1500 S_1^3 S_2^3+\ldots \right)  + {\cal O }(S^6)\ ,
\end{array}
\end{equation}
which agrees with the results~\cite{Aganagic:2002wv}. Here and in the following
the $\ldots$ mean addition of symmetric terms in $S_1$ and $S_2$. To match with the
matrix model, the genus counting parameters of the topological  string
and the Chern Simons matrix model was related~\cite{Aganagic:2002wv} by
$g_s^{top}=2 i \hat g_s$. We extend this to the refined model by
by setting
\begin{equation}
\epsilon^{top}_i = \sqrt{2 i} \epsilon^m_i \qquad i=1,2\ .
\end{equation}
The rational for this is to reproduce $F^{(0,g)}$ and more general
the leading terms (\ref{fngconstant}) of the $F^{(n,g)}$.

With this definition we can extract the refined amplitudes
and calculate first
\begin{equation}
\begin{array}{rl}
F^{(1,0)}=&\frac{1}{24}(\log(S_1) +\log(S_2)) + \frac{1}{576} (S_1^2+30 S_2 S_1+S_2^2)-\\ [2 mm]
          &\frac{1}{138240}(2 S_1^4-255 S_2 S_1^3+1530 S_2^2 S_1^2+\ldots )+ \\ [2 mm]
          &\frac{1}{34836480}(8 S_1^6+945 S_2 S_1^5-43470 S_2^2 S_1^4+150570 S_2^3 S_1^3-\ldots )+{\cal O}(S^8)\ .
\end{array}
\end{equation}
The result for $F^{(0,1)}$ agrees with~\cite{Aganagic:2002wv}
\begin{equation}
\begin{array}{rl}
F^{(0,1)}=&-\frac{1}{12}(\log(S_1) +\log(S_2)) - \frac{1}{288} \left(S_1^2-6 S_2 S_1+S_2^2\right)+\\ [2 mm]
& \frac{1}{69120} (S_1^4+105 S_2 S_1^3-90 S_2^2 S_1^2+\ldots)+ {\cal O}(S^6)\ .
\end{array}
\end{equation}
For $n+g=2$ we obtain the results for the refinement
\begin{equation}
\begin{array}{rl}
F^{(2,0)}=&-\frac{7}{5760}\left( \frac{1}{S_1^2}+ \frac{1}{S_2^2}\right)+\frac{1}{192} + \frac{6047 S_1^2-26430 S_1 S_2 +6047 S_2^2}{5529600}+\\[2 mm]
&\frac{3653 S_1^4-78912 S_1^3 S_2+216054 S_1^2 S_2^2-\ldots}{39813120}+\\[2 mm]
&\frac{193952 S_1^6-15472305 S_1^5 S_1+161797725 S_1^4 S_2-351759000 S_1^3 S_2^3+\ldots}{63700992000}\ .
\end{array}
\end{equation}
as well as
\begin{equation}
\begin{array}{rl}
F^{(1,1)}=&\frac{7}{1440}\left( \frac{1}{S_1^2}+ \frac{1}{S_2^2}\right)+\frac{1}{96} +\frac{2053 S_1^2-5970 S_1 S_2+2053 S_2^2}{1382400} +\\[2 mm]
&\frac{1207 S_1^4-13428 S_1^3S_2+17226 S_1^2 S_2^2-\ldots}{9953280}+ \\ [2 mm]
&\frac{65248 S_1^6-2704095 S_1^5 S_2+8059275 S_2^4 S_2^2+1839000 S_2^3 S_1^3+\ldots}{15925248000}\ .
\end{array}
\end{equation}
The result for  $F^{(0,2)}$
\begin{equation}
\begin{array}{rl}
F^{(0,2)}=&-\frac{1}{240}\left( \frac{1}{S_1^2}+\frac{1}{S_2^2}\right)-\frac{S_1^2+60 S_1 S_2+S_2^2}{57600} +\\[2 mm]
& \frac{S_1^4+126 S_1^3 S_2+378 S_1^2 S_2^2+126 S_1 S_2^3 +S_2^4}{1451520}- \\ [2 mm]
& \frac{64 S_1^6+38385 S_2 S_1^5+334575 S_2^2 S_1^4+124500 S_2^3 S_1^3+\ldots}{2654208000}          \ .
\end{array}
\end{equation}
agrees with the result of~\cite{Aganagic:2002wv}.
Using our exact results for $F^{(n,g)}(X,m,u)$  these expansions
are available up to $n+g=9$. We note the  $F^{(0,g)}$
have no constant terms. This is expected from
the matrix model description of~\cite{Marino:2002fk} and
its large $N$-expansion. In fact these constants are canceled
if we set $\chi=4$ in $N^0_{00}$ contributing
via (\ref{schwingerloope1e2}) to the constants at
infinity. Similarly the classical terms at large radius  at
order $g_s^{-2}$ and  $g_s^{0}$ can be fixed from the
matrix model expansion. In the $\epsilon_R=0$ slice we
also checked to higher order that the perturbative expansion of the
Chern-Simons matrix~\cite{Marino:2002fk} agrees with
the topological string according to the expectations
in~\cite{Aganagic:2002wv}.

The constant terms in $F^{(n>0,g)}$ are not zero.
This is already to be expected from the fact that the refined
Chern-Simons matrix model~\cite{Aganagic:2012au} on $S^3$ involves
a shift in the K\"ahler parameter, relative to the refined
topological string, due to a different choice of ${\cal R}$.
We have evidence that the shift $\Lambda_s+\epsilon_+$ leads
to right parameters to compare with the matrix model
description. A more precise parameter map for the
full model is under investigation. It is also
noticeable that for the choices $b=1/2$ and $b=2$
the expansions simplify. Such specialization of the refined ensemble were
recently studied in~\cite{Krefl:2012ei}.

%---------------------------------------------------------------------------
\section{Enumerative invariants of Calabi-Yau threefolds}
%---------------------------------------------------------------------------
\label{sec:enum}

In this section, we review the enumerative invariants of Calabi-Yau threefolds
that we will use: the stable pair invariants of Pandharipande and Thomas,
and the Gopakumar-Vafa invariants.

%\subsection{Generalized Donaldson-Thomas invariants}
%---------------------------------------------------------------------------
\subsection{Pandharipande-Thomas invariants}\label{ptinvariants}
%---------------------------------------------------------------------------

We begin by explaining the theory
of stable pairs due to Pandharipande and Thomas \cite{pt,ptbps}.  Stable
pairs
clarify the assertions made in \cite{kkv} and and also provide mathematical
proofs.  We then return to the refined invariants with the benefit of stable pairs.

\begin{defn}\label{def:stablepair}
A {\em stable pair\/} on a smooth
threefold $X$ consists of a sheaf $\cF$ on $X$ and a
section $s\in H^0(\cF)$ such that
\begin{itemize}
\item $\cF$ is pure of dimension 1
\item $s$ generates $\cF$ outside of a finite set of points
\end{itemize}
\end{defn}

A stable pair is a D6-D2-D0 brane bound state, and can be written as a complex
\[\cI^\bullet:\cO_X\stackrel{s}{\to}\cF.\]

Let $P_n(X,\beta)$ denote the moduli space of stable pairs with
$\mathrm{ch}_2(\cF)=\beta,\ \chi(\cF)=n$.  Then if $X$ is Calabi-Yau,
$P_n(X,\beta)$ supports a {\em symmetric\/} obstruction theory.  See
\cite{behrend} for the definitions and basic properties of symmetric
obstruction theories.

There are only a few things that we need to know about symmetric obstruction
theories.  The basic idea of a symmetric obstruction theory is that the obstructions are dual to their deformations.
For stable pairs, the space of first order deformations is
$\mathrm{Ext}^1(\cI^\bullet,\cI^\bullet)$ and the space of obstructions is
$\mathrm{Ext}^2(\cI^\bullet,\cI^\bullet)$.  These are dual by Serre
duality.

An important feature of symmetric obstruction theories is that they have
virtual dimension~0, since deformations and obstructions have the same
dimension.

If $M$ is the moduli space associated with a symmetric obstruction theory
and $M$ is smooth, then the corresponding virtual number is
$(-1)^{\mathrm{dim}(M)}e(M)$, where $e(M)$ is the topological euler
characteristic.  This is because the bundle describing the
deformations is the tangent bundle of $M$, so the obstruction bundle must
be the cotangent bundle of $M$, and the euler class of the cotangent bundle
is $(-1)^{\mathrm{dim}(M)}e(M)$.

In general, the virtual number is a weighted euler
characteristic.  See \cite{behrend} for more details.

Now let $X$ be Calabi-Yau and let
$P_n(X,\beta)$ be the moduli space of stable pairs with
$\mathrm{ch}_2(F)=
\beta$ and $\chi(F)=n$, and let $P_{n,\beta}$ be the associated invariant, i.e.\
the degree of the virtual fundamental class of $P_n(X,\beta)$.  These
invariants can be arranged in a generating function
\[
Z_{PT}=\sum_{n,\beta}P_{n,\beta}q^nQ^\beta.
\]

We let $Z_{GW}$ be the generating function for disconnected Gromov-Witten invariants:
\[
Z_{GW}=\mathrm{exp} \left(F'_{GW}(\lambda, Q)\right), \hspace{1em} F'_{GW}(\lambda, Q)=\sum_{\beta\ne 0} \sum_{g}N_{g,\beta}\lambda^{2g-2}Q^\beta,
\]
where $N_{g,\beta}$ is the Gromov-Witten invariant.
The fundamental conjecture from which everything will follow is

\begin{conj}
After the change of variables $q=-e^{i\lambda}$, we have
$Z_{PT}=Z_{GW}$.
\label{conj:ptgw}
\end{conj}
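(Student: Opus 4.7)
The plan is to prove the conjecture indirectly by factoring it through the Donaldson-Thomas theory. Let $Z_{DT}(q,Q)=\sum_{n,\beta}I_{n,\beta}q^nQ^\beta$ denote the generating series of Donaldson-Thomas invariants counting ideal sheaves of curves in $X$. The strategy is to establish two separate correspondences whose composition yields the desired identity: first the DT/PT wall-crossing $Z_{DT}=M(-q)^{\chi(X)}\cdot Z_{PT}$, and then the GW/DT correspondence of Maulik-Nekrasov-Okounkov-Pandharipande, which identifies the reduced DT series with $Z_{GW}$ under $q=-e^{i\lambda}$ (the MacMahon factor $M(-q)^{\chi(X)}$ corresponds exactly to the constant map contribution on the GW side, which we have chosen to exclude from $F'_{GW}$).

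For the first step, I would realize both PT and DT moduli spaces as moduli of Bridgeland-stable objects in a suitable heart of the derived category $D^b(X)$. A one-parameter family of stability conditions interpolates between the two: PT-stable objects are complexes $\mathcal{O}_X\to \cF$ with $\cF$ pure of dimension one, while DT-stable objects are ideal sheaves $\cI_Z\hookrightarrow \cO_X$. Crossing the walls produces Hall-algebra identities via the Joyce-Song (or Kontsevich-Soibelman) wall-crossing formula, and the zero-dimensional substacks contribute precisely the MacMahon factor, yielding the factorization above. This is the approach carried out by Bridgeland and independently Toda.

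For the second step, I would invoke the GW/DT correspondence, which in the toric setting (sufficient for the examples of local $\PP^2$ and local $\PP^1\times\PP^1$ discussed later in the paper) follows from matching the combinatorial topological vertex of PT/DT theory---expressed via weighted sums over 3D partitions---with the GW vertex built from Hodge integrals, as proved by Maulik-Oblomkov-Okounkov-Pandharipande using equivariant localization with respect to the toric $(\IC^*)^3$ action. Combining the two identities and cancelling the degree-zero contributions yields $Z_{PT}=Z_{GW}$ after $q=-e^{i\lambda}$.

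The principal obstacle is the GW/DT vertex correspondence itself: matching the Hodge integrals on $\Mgn[g]{n}$ with the box-counting partition function requires deep combinatorial identities that are not transparent from either side, and for non-toric Calabi-Yau threefolds no direct proof is known. A secondary subtlety is ensuring that the non-compactness of $X$ does not invalidate the symmetric obstruction theory or the wall-crossing arguments; this is handled by working equivariantly with respect to the Calabi-Yau-preserving torus action and checking that all generating series remain well-defined as formal power series in $Q$ whose coefficients are rational functions of the equivariant parameters.
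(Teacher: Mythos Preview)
The paper does not prove this statement: it is explicitly stated as a \emph{conjecture} (Conjecture~\ref{conj:ptgw}), and the paper's only remark is that it ``is known to be true in the toric case'' with a citation to \cite{ptvertex}. There is therefore no proof in the paper to compare your proposal against. The authors invoke the conjecture (in its toric incarnation) as input for deriving the product formula (\ref{productGV}) and for the KKV method, but they neither claim nor sketch a proof.

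Your proposal is a reasonable roadmap of how the \emph{toric} case is established in the literature: the DT/PT wall-crossing of Bridgeland and Toda combined with the toric GW/DT correspondence of Maulik--Oblomkov--Okounkov--Pandharipande does yield $Z_{PT}=Z_{GW}$ for toric Calabi--Yau threefolds, and you correctly flag that the non-toric case remains open. However, you should be clear that this is a survey of existing results rather than a proof you are supplying, and that the general conjecture is still open. In particular, presenting this as ``the plan is to prove the conjecture'' overstates what your outline achieves: the vertex-matching step you defer to MOOP is precisely the hard content, and for a general (non-toric) Calabi--Yau threefold your Step~2 simply does not go through with current technology. A more accurate framing would be that you are explaining why the conjecture is a theorem in the toric setting relevant to the paper's applications (local $\PP^2$ and local $\PP^1\times\PP^1$), while acknowledging that the general statement remains conjectural.
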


Conjecture~\ref{conj:ptgw} is known to be true in the toric case \cite{ptvertex}.

In low degree, the stable pair moduli spaces have simpler descriptions, as they are isomorphic to
relative Hilbert schemes.

First of all, on a smooth surface $S$, the stable pair moduli spaces are isomorphic to relative Hilbert
schemes.  Let $\beta\in H_2(S,\ZZ)$ and let $p_a$ be the arithmetic genus of curves of class $\beta$.  Let
$\cC^{[n]}$ be the relative Hilbert scheme parametrizing curves $C$ of class $\beta$ and $n$ points on $C$ (more
precisely, a subscheme $Z\subset C$ of length $n$).

\begin{prop} \cite{ptbps} $P_{1-p_a+n}(S,\beta)\simeq\cC^{[n]}$
for any $n\ge0$.
\end{prop}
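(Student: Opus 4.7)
The plan is to construct mutually inverse morphisms between $\cC^{[n]}$ and $P_{1-p_a+n}(S,\beta)$, functorial in families. The key structural input is that every effective curve $C\subset S$ in class $\beta$ is a Cartier divisor on the smooth surface $S$, hence a local complete intersection and therefore Gorenstein: the dualizing sheaf $\omega_C$ is a line bundle and local duality applies.

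For the map $\Phi:\cC^{[n]}\to P_{1-p_a+n}(S,\beta)$, given $(C,Z)$ with $Z\subset C$ a length-$n$ subscheme, let $\cI_{Z/C}\subset\cO_C$ be its ideal sheaf in $C$ and set $\cF:=\Hom_{\cO_C}(\cI_{Z/C},\cO_C)$, pushed forward to $S$. Since $\cI_{Z/C}$ is a rank-one torsion-free sheaf on the Cohen--Macaulay curve $C$, its $\cO_C$-dual $\cF$ is pure of dimension one with scheme-theoretic support $C$. Applying $\Hom_{\cO_C}(-,\cO_C)$ to the sequence $0\to\cI_{Z/C}\to\cO_C\to\cO_Z\to 0$ and using that $\Hom_{\cO_C}(\cO_Z,\cO_C)=0$ (torsion into torsion-free) yields
\[
0\to\cO_C\xrightarrow{\,\bar s\,}\cF\to\cext^1_{\cO_C}(\cO_Z,\cO_C)\to 0,
\]
and by Gorenstein local duality $\cext^1_{\cO_C}(\cO_Z,\cO_C)$ is zero-dimensional of length $n$. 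Composing $\bar s$ with the surjection $\cO_S\twoheadrightarrow\cO_C$ produces the required section, and $(\cF,s)$ is a stable pair with $\chi(\cF)=(1-p_a)+n$ and class $\beta$.

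For the inverse $\Psi$, given a stable pair $(\cF,s)$, let $C$ be the scheme-theoretic support of $\cF$; purity together with $\mathrm{ch}_2(\cF)=\beta$ forces $C$ to be a Cartier divisor of class $\beta$. The section $s$ factors through an injection $\bar s:\cO_C\hookrightarrow\cF$, and I define $\cI_Z:=\Hom_{\cO_C}(\cF,\cO_C)$, which embeds into $\cO_C$ via the $\cO_C$-dual of $\bar s$. A straightforward length count then recovers $\mathrm{length}(\cO_Z)=n$.

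The main obstacle is verifying $\Phi\circ\Psi=\mathrm{id}$ and $\Psi\circ\Phi=\mathrm{id}$, both of which reduce to the reflexivity statement that for any rank-one torsion-free sheaf $\cG$ on the Gorenstein curve $C$, the biduality map $\cG\to\Hom_{\cO_C}(\Hom_{\cO_C}(\cG,\cO_C),\cO_C)$ is an isomorphism. Applied to $\cF$ and to $\cI_{Z/C}$ in turn, this closes the loop. Finally, upgrading this pointwise bijection to an isomorphism of schemes requires checking that both constructions commute with arbitrary base change, which follows from the flatness of the universal curve over $\cC^{[n]}$ and of the universal sheaf over $P_{1-p_a+n}(S,\beta)$; this last functoriality check is the only genuinely delicate step, as the pointwise arguments all rely on standard Gorenstein duality.
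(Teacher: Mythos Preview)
The paper does not give its own proof of this proposition; it simply records the result and cites \cite{ptbps}, where the argument appears (essentially in Appendix~B). Your proposal is correct and reproduces exactly the approach of \cite{ptbps}: identify the scheme-theoretic support $C$ of $\cF$ with the divisor cut out by $\ker(s)$ (your purity argument that $I_C\cdot\cF=0$ is the key point here, and it works because multiplication by $f\in I_C$ factors through the zero-dimensional cokernel and hence vanishes into the pure sheaf $\cF$), then use that $C$ is Gorenstein to dualize back and forth between rank-one torsion-free sheaves and ideal sheaves, with reflexivity closing the loop. The length bookkeeping via $\cext^1_{\cO_C}(\cO_Z,\cO_C)$ and the vanishing $\cext^1_{\cO_C}(\cF,\cO_C)=0$ for maximal Cohen--Macaulay $\cF$ are exactly the ingredients Pandharipande--Thomas use. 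Your caveat about the base-change/functoriality step being the genuinely delicate part is accurate and honest; in \cite{ptbps} this is handled by working with the universal families directly.
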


Next, we claim that if $S$ is Fano, then for each $\beta\in H_2(S,\ZZ)$,
stable pairs on the total space $X$ of $K_S$ are identified with stable
pairs on $S$, for small holomorphic euler characteristic.  We state the result for $\PP^2$.

\begin{prop} $P_{1-p_a+n}(\IP^2,d)=P_{1-p_a+n}(X,d)$
for $n\le d+2$.
\label{prop:justp2}
\end{prop}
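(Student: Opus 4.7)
The plan is to prove the proposition by showing that every stable pair $(F,s) \in P_{1-p_a+n}(X,d)$ with $n \le d+2$ has $F$ scheme-theoretically supported on the zero section $S = \mathbb{P}^2 \subset X$. Granted this, the closed immersion $i : S \hookrightarrow X$ induces mutually inverse morphisms $i_*$ and $i^*$ between $P_{1-p_a+n}(X,d)$ and $P_{1-p_a+n}(\mathbb{P}^2,d)$, and the preceding proposition realizes both as the relative Hilbert scheme $\cC^{[n]}$. Injectivity of $i_*$ is immediate, so the content is surjectivity.

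To show $F$ lives on $S$, I will analyze the $y$-filtration, where $y \in H^0(X, \pi^*\mathcal{O}(3))$ is the tautological section whose vanishing scheme is $S$. Set $G_k := \ker\bigl(y^k : F \to F \otimes \pi^*\mathcal{O}(3k)\bigr)$; compactness of $\Supp F$ forces $G_m = F$ for some $m \ge 1$, and scheme-theoretic support on $S$ is equivalent to $m = 1$. A short diagram chase shows that $y$ induces injections of the graded pieces $F_k := G_k/G_{k-1} \hookrightarrow F_{k-1} \otimes \mathcal{O}_S(3)$ for $k \ge 2$. Purity of $F$ together with the generation property of $s$ forces each $F_k$ to be a pure $1$-dimensional sheaf on $\mathbb{P}^2$ of some degree $d_k \ge 1$, with $d_m \le \cdots \le d_1$ and $\sum_k d_k = d$. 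Iterating gives $F_k \hookrightarrow F_1 \otimes \mathcal{O}_S\bigl(3(k-1)\bigr)$.

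Assuming for contradiction that $m \ge 2$, I will derive a lower bound on $\chi(F) = \sum_k \chi(F_k)$ incompatible with $\chi(F) = 1 - p_a + n \le 1 - p_a + d + 2$. Using the short exact sequences $0 \to F_k \to F_1(3(k-1)) \to Q_k \to 0$ and Riemann--Roch, the minimum of $\chi(F)$ over admissible configurations is realized by the thickened curve $F \cong \mathcal{O}_{V(y^m,f)}$ with $\deg f = d/m$, $d_k = d/m$, and $Q_k = 0$, yielding
\[
\chi(F) = m\bigl(1 - p_a(d/m)\bigr) + \tfrac{3}{2}(m-1)\,d.
\]
A direct numerical comparison then shows this strictly exceeds $1 - p_a(d) + d + 2$ whenever $m \ge 2$ and $d \ge 3$, producing the contradiction.

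The main obstacle will be justifying rigorously that the thickened curve is the true minimizer of $\chi(F)$ among pure $1$-dimensional sheaves of class $d$ with $y$-thickness $m$; this requires a semicontinuity-style argument controlling the degrees $d_k$ and cokernels $Q_k$, and showing that any non-extremal configuration strictly increases $\chi(F)$. A related subtlety is the compatibility of the section $s$ with the filtration: the image of $s$ cannot lie in $G_{m-1}$, since its cokernel is $0$-dimensional while $F/G_{m-1} = F_m$ is $1$-dimensional, and this in turn forces each $F_k$ to have generic rank $1$ on its support. The borderline case $d = 2$ is tight numerically and likely needs a separate direct analysis of the candidate $(\mathcal{O}_{V(y^2,f)}, 1)$.
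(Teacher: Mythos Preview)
Your overall strategy—reduce to showing the Cohen--Macaulay support $C$ lies in $S$ by bounding $\chi(\mathcal{O}_C)$ from below when $C\not\subset S$—matches the paper's. But your identification of the minimizing configuration is wrong, and this is the central gap. You claim the minimum $\chi$ over sheaves of $y$-thickness $m\ge 2$ is realized by the uniform thickening $\mathcal{O}_{V(y^m,f)}$ with $\deg f=d/m$. It is not: for $d=4$, the uniform double of a conic has $\chi=\chi(\mathcal{O}_Q)+\chi(\mathcal{O}_Q(3))=1+7=8$, while the Cohen--Macaulay curve with ideal $J=(y^2,yx,x^3)$—a triple line with one line thickened once into the fiber—has $\chi=\chi(\mathcal{O}_{3L})+\chi(\mathcal{O}_L(3))=0+4=4$. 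The paper locates this lopsided configuration $(d_0,d_1)=(d-1,1)$ as the true minimizer by directly optimizing $\sum_n\bigl(3nd_n+1-(d_n-1)(d_n-2)/2\bigr)$ over all degree sequences. Since the actual minimum already equals $1-p_a+(d+2)$ for this example, the strict inequality you propose cannot hold, and your numerical comparison proves nothing about the genuine extremal curves.

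Two structural problems also prevent your kernel filtration $G_k=\ker(y^k)$ from producing the needed lower bound. First, the tautological section lies in $H^0(X,\pi^*\mathcal{O}(-3))$, not $\pi^*\mathcal{O}(3)$, since $N_{S/X}=K_S$; this reverses the sign of all your twists. Second and more seriously, the graded pieces $F_k=G_k/G_{k-1}$ are not structure sheaves of plane curves—for $J=(y^2,yx,x^{d-1})$ one finds $G_1=(x,y)\mathcal{O}_C$, which needs two generators at the origin and is not of the form $\mathcal{O}_D$—so there is no Riemann--Roch formula for $\chi(F_k)$ in terms of $d_k$ alone, and your short exact sequences yield only upper bounds on $\chi(F_k)$. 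The paper sidesteps both issues by filtering $\mathcal{O}_C$ with the \emph{image} filtration $I^n\mathcal{O}_C$: each graded piece is then a quotient of the line bundle $I^n/I^{n+1}\simeq\mathcal{O}_S(3n)$, hence of the form $(\mathcal{O}_{\mathbb{P}^2}/K_n)(3n)$ for an ideal sheaf $K_n$, and this is exactly what makes the piecewise lower bound $\chi\ge 3nd_n+1-(d_n-1)(d_n-2)/2$ available.
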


To prove Proposition~\ref{prop:justp2},
we first make the following claim:

\medskip\noindent
{\bf Claim}: If $C\subset X$ is a
Cohen-Macaulay curve of degree $d$ which is not contained in $\IP^2$
scheme-theoretically, then $\chi(\cO_C)\ge 1-p_a+(d+3)$.

\smallskip
To prove this claim, we first establish some notation.
let $J\subset \cO_X$ be the
ideal sheaf of $C$ and let $I\subset\cO_X$ be
the ideal sheaf of $\IP^2$.  For later use, $I$ can be generated by
a single section $p\in\cO_X(-3)$ which vanishes precisely along $\IP^2$.
Note that $J$ must contain $I^{k+1}$ for
some $k$, so that $J+I^{k+1}=J$.

From the filtration
\[
\cdots J+I^{n+1}\subset J+I^n\subset\cdots\subset J+I^2\subset J+I\subset
\cO_X
\]
we get exact sequences
\[
0\to \frac{J+I^n}{J+I^{n+1}}\to \frac{J+I^{n-1}}{J+I^{n+1}}\to
\frac{J+I^{n-1}}{J+I^{n}}\to 0
\]
which allow us to write
\[
\chi(\cO_C)=\chi(\cO_X/J)=\sum_{n=0}^k \chi\left(
\frac{J+I^n}{J+I^{n+1}} \right),
\]
Fixing $n$, we get a map
\begin{equation}
\phi_n:\frac{I^n}{I^{n+1}}\to (J+I^n)/(J+I^{n+1}).
\label{eq:ordern}
\end{equation}
Since $(J+I^n)/(J+I^{n+1})$ is generated by the image of $I^n$, we see that
$\phi_n$ is surjective.
Since $I^n/I^{n+1}\simeq\cO_{\IP^2}(3n)$ is locally free on $\IP^2$, we can
form $(I^n/I^{n+1})^*\otimes\ker\phi_n\subset \cO_{\IP^2}$, which is
necessarily an ideal sheaf $K_n$ of a (not necessarily Cohen-Macaulay)
plane curve.  We conclude from
(\ref{eq:ordern}) that $(J+I^n)/(J+I^{n+1})\simeq (\cO_{\IP^2}/K_n)(3n)$.

Let $d_n$ be the degree of the plane curve defined by $K_n$, so that
$d=\sum d_n$.  By Riemann-Roch we have
\[
\chi((J+I^n)/(J+I^{n+1}))\ge 3nd_n+1-(d_n-1)(d_n-2)/2.
\]
Then
\begin{equation}
\chi(\cO_C)=\sum_n \chi((J+I^n)/(J+I^{n+1}))\ge
\sum_n\left(
3nd_n+1-(d_n-1)(d_n-2)/2
\right).
\label{eq:boundchi}
\end{equation}
If $d_2=0$, then $C$ is a plane curve.  If $d_2>0$, then the smallest that the
bound (\ref{eq:boundchi})
can be is if $d_1=d-1$ and $d_2=1$, giving
$\chi(\cO_C)\le 1-(d-2)(d-3)/2+4=1-p_a+d+3$.

The bound is sharp, as can be
seen from the example $J=(p^2,px,x^{d-1})$, where $p\in\cO_X(-3)$ is a
section vanishing on $\IP^2$ and $x$ is a homogeneous coordinate on $\IP^2$.
Then $J+I=(p,x^{d-1})$ and $J+I^2=(p^2,px,x^{d-1})=J$,
$\cO_X/(J+I)\simeq \cO_{\IP^2}/(x^{d-1})$ is just a line $L$ with multiplicity
$d-1$, and $(J+I)/J\simeq\cO_L(3)$, giving $\chi(\cO_X/J)=\chi(\cO_{L^{d-1}})
+\chi(\cO_L(3))=1-(d-2)(d-3)/2+4=1-p_a+d+3$.

\medskip
Proposition~\ref{prop:justp2} follows immediately.  If $n\le d+2$ then by the claim, $C$ must be
supported on $\PP^2$ scheme-theoretically, and so a stable pair $\cO_X\to \cF$ with $F$ supported on $C$ can
be functorially identified with a stable pair $\cO_{\PP^2}\to \cF$.

\bigskip\noindent
{\bf Corollary.} $P_{1-p_a+n}(X,d)\simeq\cC^{[n]}$
for $n\le d+2$.

\subsection{Gopakumar-Vafa invariants}
By the BPS state counts in M-theory, integer-valued \emph{Gopakumar-Vafa invariants} $n_\beta^g$ of $X$ are proposed in \cite{GV2}. These are  are related to the Gromov-Witten invariants by the formula
\begin{equation}\label{eq:gvformula}
\sum_{\beta, g} N_{g,\beta} \lambda^{2g-2} Q^\beta
=\sum_{\substack{\beta, g, k\\ \beta\ne 0}}n^g_\beta \frac{1}{k}\left(2
\sin\left(\frac{k\lambda}{2}\right)\right) ^{2g-2}Q^{k\beta}.
\end{equation}
A priori, $n_\beta^g$ defined by above formula are rational numbers because the Gromov-Witten invariants are rational numbers. The \emph{integrality conjecture} is the assertion that the Gopakumar-Vafa invariants defined recursively via (\ref{eq:gvformula}) are integers.

If Conjecture \ref{conj:ptgw} holds, we can write \cite{katz04}
\begin{equation}
Z_{PT}=\prod_\beta\left(
\prod_{j=1}^\infty\left(1+(-1)^{j+1}q^j Q^\beta\right)^{jn^0_\beta}
\prod_{g=1}^\infty\prod_{k=0}^{2g-2}\left(1+\left(-1\right)^{g-k}q^{g-1-k}Q^\beta
\right)^{\left(-1\right)^{k+g}n^g_\beta {2g-2 \choose k}}\right)\ .
\label{productGV}
\end{equation}
Hence, Gopakumar-Vafa invariants can be deduced from Pandharipande-Thomas invariants.
See \cite{ptbps} for more details on this approach.

According to its origin in string theory, the GV invariants $n^g_\beta$ may be thought of as a virtual number of genus $g$ Jacobians inside the moduli space of
stable sheaves $\cF$ on $X$ of pure dimension~1 with $\mathrm{ch}_2(\cF)=\beta$.  This viewpoint led to a computational method for the GV invariants which we
will review in the next section and will refine in Section~\ref{refinement}.  Using a symmetric obstruction theory on this moduli space, the genus~0 GV invariants
$n^0_\beta$ can be directly defined mathematically as the associated virtual number~\cite{katz08}.

%---------------------------------------------------------------------------
\section{KKV approach}
%---------------------------------------------------------------------------
\label{sec:kkv}

In this section, we review the method of \cite{kkv} for the
geometric computation of the Gopakumar-Vafa invariants.  In
Section~\ref{geometricrefined} we will show
that the method readily extends to compute the refined invariants, using
the refinement of the Pandharipande-Thomas invariants which we will describe
in Section~\ref{refinement}.  Furthermore, the refined invariants can be used
to compute the $\su\times\su$ BPS invariants.
The computation will be implemented for
local $\IP^2$ in Section~\ref{refinedtobps}.

\subsection{Generalities}
\label{generalities}

The idea of \cite{kkv} was to compare the cohomology of the Hilbert schemes
$C^{[k]}$ of length $k$ subschemes (i.e.\ $k$ points counted with multiplicity)
of a smooth curve $C$ with the cohomology of its Jacobian $J(C)$.  The comparison
can be carried out either geometrically via the Abel-Jacobi mapping, or
representation-theoretically
by the associated Lefschetz actions of $SU(2)$ on the cohomologies of the
Hilbert scheme and the Jacobian.  It was further proposed that the relative
Hilbert scheme could be used to extend the comparison to families under
certain hypotheses.

Pandharipande and Thomas observed in
\cite{ptbps} that the methods of \cite{kkv} could be
generalized and made more rigorous using the moduli space of stable pairs in place of the relative
Hilbert scheme.  We will begin with the ideas of \cite{kkv} and then will
reformulate these ideas in the language of stable pairs, thereby supplying the
details of the observation of \cite{ptbps}.

We follow the conventions of physics and denote by $[k]$ the (half-integer) spin $k$ representation of $\su$, so that $\dim[k]=2k+1$.

Denote by $I_1=[\frac12]+2g[0]$ the $SU(2)$ content of the standard
Lefschetz decomposition of the cohomology $H^*(C)$ of a genus $g$
Riemann surface $C$.

It is easy to see that the Lefschetz action on a
genus $g$ Jacobian $J(C)$ is then
\begin{equation}
\label{jaclefschetz}
H^*(J(C))=I_g := \left(I_1\right)^{\otimes g}=
\bigoplus_{i=0}^g \left\{ {2g\choose g-i}-{2g\choose g-i-2} \right\}\left[\frac{i}2\right],
\end{equation}
as is easily proven by induction.

Let $C^{[k]}$ denote the Hilbert scheme of
length $k$ subschemes of $C$, which is
just the $k$-fold symmetric product of $C$, by the smoothness of $C$.
Noting that $[\frac12]$ corresponds to the even
cohomology of $C$ while $2g[0]$ corresponds
to the odd cohomology, it follows that
as $\su$ representations
\begin{equation}
\label{cklefschetz}
%\begin{array}{rl}
H^*\left(C^{[k]}\right)%&=\displaystyle{\mathrm{Sym}^k\left(\left[\frac12\right]+2g[0]\right)
=\bigoplus_i \mathrm{Sym}^i\left[\frac12\right]\otimes
\wedge^{k-i}\left(2g[0]\right)%} \\[4 mm]
%&
=\displaystyle{\bigoplus_i{2g\choose k-i}\left[\frac{i}2\right].}
%\end{array}
\end{equation}
Comparing (\ref{jaclefschetz}) and (\ref{cklefschetz}), we see that we have
as $\su$ representations
\begin{equation}
\label{cgidentity}
H^*(C^{[g]})=H^*(J(C))\oplus H^*(C^{[g-2]}).
\end{equation}

More identities can be inferred by comparing (\ref{jaclefschetz}) and
(\ref{cklefschetz}), but we need to establish some notation and conventions
first.

We introduce a linear operator $\theta$ on the representation ring of $\su$,
defined on generators as
\begin{equation}
\label{raising}
\theta\left(\left[\frac{k}2\right]\right)=\left\{
\begin{array}{cl}
\left[\frac{k-1}2\right] & k>0\\
0 & k=0\\
\end{array}\right. .
\end{equation}
In \cite{kkv}, $\theta$ was described in terms of the $\su$ raising operator,
which acts on $H^*(J(C))$ as cup product with the cohomology class of the
theta divisor of $J(C)$.

Again by comparing (\ref{jaclefschetz}) and (\ref{cklefschetz}), we
get equalities of $\su$ representations
\begin{equation}
\label{ajidentity}
H^*(C^{[k]})=\theta^{g-k}H^*(J(C))\oplus H^*(C^{[k-2]})
\end{equation}
for each $0\le k\le g$, where we understand $H^*(C^{[k-2]})=0$ for $k<2$ and
$C^{[0]}$ to be a point.
The case $k=g$ is just (\ref{cgidentity}).

By the structure of massive 5-dimensional BPS representations, the Hilbert
space of BPS states associated to M2-branes wrapping a homology class
$\beta\in H_2(X,\ZZ)$ can be written as
\begin{equation}
\label{hbetadef}
\left[\left(\frac12,0\right)\oplus 2\left(0,0\right)\right]\otimes
\hat{\cH}_\beta
\end{equation}
\noindent
for some $\su\times\su$ representation $\hat{\cH}_\beta$.  The $\su\times\su$
BPS invariants $N^\beta_{j_L,j_R}$ are then defined as the multiplicities of the
representations $[(j_L,j_R)]$ in $\hat{\cH}$:
\begin{equation}
\label{lrmults}
\hat{\cH}_\beta=\oplus_{j_L,j_R}N^\beta_{j_L,j_R}\left[\left(j_L,j_R\right)\right].
\end{equation}
The Gopakumar-Vafa invariants can be deduced from (\ref{lrmults}) by
\begin{equation}
\label{gvdef}
\mathrm{Tr}(-1)^{F_R}\hat{\cH}_\beta=\oplus_g n^g_\beta I_g,
\end{equation}
where as usual the operator $(-1)^{F_R}$ is the identity on integer spin
representations of $\su_R$ and is minus the identity on half-integer spin
representations.  Explicitly, we have
\be
\sum_{g=0}^\infty n^g_\beta I_g=\mathrm{Tr}(-1)^{F_R}\hat{\cH}_\beta=\sum_{j_L} N^\beta_{j_Lj_R} (-1)^{2j_R}(2j_R+ 1)\left[\frac{j_L}{2}\right].
\ee

\bigskip
For the rest of this paper, we consider
$X$ to be a local toric Calabi-Yau threefold, the total space of
the canonical bundle of a toric Fano surface $S$.  Let $\beta\in H_2(S,\ZZ)$
be an effective class, and  let $p_a$ be the arithmetic genus of the curves in
the divisor class $\beta$.

The ansatz of \cite{kkv} was that (\ref{ajidentity})
holds in families as follows.
Let $\cC$ be the universal curve of class $\beta$.  Let
$\cC^{[k]}$ denote the relative Hilbert scheme of $k$ points in the curves of
the family.  Let us further suppose that $\cC^{[k]}$ is smooth, so that its
cohomology supports an $\su$ representation via Lefschetz. Then the assertion
was that
\begin{equation}
H^*\left(\cC^{[k]}\right)=\left(\theta^{p_a-k}\hat{\cH}_\beta\right)_{\su_{\Delta}}\oplus H^*\left(\cC^{[k-2]}\right)+{\rm\ correction\ terms},
\label{eq:kkvid}
\end{equation}
where the correction terms arise from reducible curves.
These correction terms will
be made precise below from the theory of stable pairs.
In \cite{kkv}, the GV invariants were deduced from (\ref{eq:kkvid}) by
applying $\mathrm{Tr}(-1)^F$.

The identity (\ref{eq:kkvid}) is to be understood as an identity of
$\su$ representations.  The $\su$ representations on $H^*(\cC^{[k]})$
and $H^*(\cC^{[k-2]})$ are just the respective Lefschetz actions as
before so we just have to explain the meaning of
$(\theta^{p_a-k}\hat{\cH}_\beta)_{\su_{\Delta}}$.  On representations of
$\su_L\times\su_R$, we define $\theta$ by via the $\su_L$
representation, i.e.\
\[
\theta\left(\left[\left(j_L,j_R\right)\right]\right)
=\left(\theta\left(\left[j_L\right]\right)\right)\otimes\left[j_R\right].
\]
The subscript $\su_\Delta$ denotes that the resulting $\su\times\su$
representation should be restricted to the diagonal $\su\subset\su\times\su$.

We now update the method of \cite{kkv},
rigorously explaining the computation of the GV
invariants from the PT invariants,
assuming the product formula (\ref{productGV}).  While the
method of \cite{kkv} is not needed to compute the GV invariants from (\ref{productGV}), the use of this method will serve as a warm-up for our handling of
the refined invariants in Section~\ref{refinedtobps}.

In Section~\ref{ptinvariants} we saw
that if $\cC$ denotes the universal
curve of class $\beta$ in $S$, then we have
\[
P_{k+1-p_a}(X,\beta)=\cC^{[k]}
\]
for sufficiently small $k$.  For $S=\PP^2$ and curves of degree $d$, the bound is $k\le d+2$.
For general $k$, we need to use $P_{k+1-p_a}(X,\beta)$ in place of $\cC^{[k]}$ in
(\ref{eq:kkvid}).

Putting $r=p_a-k$ and continuing to assume for the
moment that the PT moduli spaces are smooth, (\ref{eq:kkvid}) becomes
\begin{equation}
\label{updatedkkv}
H^*\left(P_{1-r}(X,\beta)\right)=\left(\theta^{r}\hat{\cH}_\beta\right)_{\su_{\Delta}}
\oplus H^*\left(P_{-1-r}(X,\beta)\right)+{\rm\ correction\ terms}.
\end{equation}

If $P_{n}(X,\beta)$ is smooth, then we have\footnote{If we have a symmetric
obstruction theory on a smooth projective variety
$\M$ (e.g.\ the obstruction theory on the moduli
space of stable pairs) then the virtual fundamental class is just
$(-1)^{\dim \M}(e(\M))$.  In terms of the Lefschetz action, this is
$\mathrm{Tr}(-1)^FH^*(\M)$.}
\[
\mathrm{Tr}(-1)^FH^*(P_{n}(X,\beta))=P_{n,\beta}.
\]
Applying $\mathrm{Tr}(-1)^F$ to (\ref{eq:kkvid}), using (\ref{gvdef}),
and replacing the euler
characteristic of a smooth $P_{n}(X,\beta)$ more generally with
$P_{n,\beta}$, we arrive at a precise statement that can be
proven.

\medskip
\begin{prop}
\label{kkvprop}
There are explicit identities
\[
P_{1-r,\beta}-P_{-1-r,\beta}=\mathrm{Tr}(-1)^F\left(\theta^{r}\hat{\cH}_\beta\right)_{\su_{\Delta}}+O\left(\left(n^{h}_\gamma\right)^2\right)=\]
\[\sum_{g\ge1} n^g_\beta \left({2g-2\choose g-2+r}-{2g-2\choose g+r}\right)+\delta_{r,0}n^0_\beta+
O\left(\left(n^{h}_\gamma\right)^2\right),
\]
where the omitted terms $O\left(\left(n^{h}_\gamma\right)^2\right)$ are
explicit nonlinear terms in the $\{n^{h}_\gamma\}$ arising from the expansion
of (\ref{productGV}).
\end{prop}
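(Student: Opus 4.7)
The plan is to extract $P_{1-r,\beta}-P_{-1-r,\beta}$ directly from the product formula (\ref{productGV}) for $Z_{PT}$, which holds under Conjecture~\ref{conj:ptgw}. Taking $\log Z_{PT}$ and expanding each $\log(1+x)$ as $\sum_{m\ge 1}(-1)^{m-1}x^m/m$, the strictly linear-in-$n^g_\beta$ contribution to $[q^{n}Q^\beta]Z_{PT}$ comes from the $m=1$, $\gamma=\beta$ summand of $\log Z_{PT}$, with no mixing from higher powers of $\log Z_{PT}$ in $Z_{PT}=\exp(\log Z_{PT})$. Every other contribution is either polynomial in $\{n^h_\gamma\}$ of degree $\ge 2$ (from cross-multiplication of distinct-class factors or from $(\log Z_{PT})^k$ with $k\ge 2$) or involves GV invariants at divisor classes $\gamma=\beta/m$ for $m\ge 2$; these assemble into the explicit $O((n^h_\gamma)^2)$ remainder.

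For the linear extraction with $g\ge 1$, the signs $(-1)^{k+g}$ in the exponent and $(-1)^{g-k}$ from the $m=1$ term of the log combine to $1$, giving the clean contribution $n^g_\beta\sum_k\binom{2g-2}{k}q^{g-1-k}Q^\beta$. Matching the exponent to $1-r$ forces $k=g-2+r$ and to $-1-r$ forces $k=g+r$, so the difference of coefficients is $\sum_{g\ge 1}n^g_\beta\bigl[\binom{2g-2}{g-2+r}-\binom{2g-2}{g+r}\bigr]$. For $g=0$, the corresponding linear-in-$n^0_\beta$ coefficient of $q^nQ^\beta$ is $n(-1)^{n+1}n^0_\beta$, supported on $n\ge 1$, so in the range $r\ge 0$ it affects only $[q^{1-r}]$ at $r=0$, yielding $\delta_{r,0}n^0_\beta$ and completing the second equality.

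For the first equality, I combine (\ref{lrmults}) and (\ref{gvdef}) to write $\sum_{j_R}N^\beta_{j_L,j_R}(-1)^{2j_R}(2j_R+1)$ as the multiplicity of $[j_L]$ in $\sum_g n^g_\beta I_g$. Applying $\theta^r$ on the $\su_L$ factor shifts $j_L\mapsto j_L-r/2$ (killing summands with $2j_L<r$), and restricting to $\su_\Delta$ turns $\mathrm{Tr}(-1)^F$ into a product of characters at $t=-1$, factorizing the trace as $(-1)^{2j_L-r}(2j_L-r+1)\cdot(-1)^{2j_R}(2j_R+1)$. Summing over $j_R$ collapses the second factor via (\ref{gvdef}) and, with $i=2j_L$ and multiplicities from (\ref{jaclefschetz}), leaves
\[
\sum_g n^g_\beta\sum_{i\ge r}(-1)^{i-r}(i-r+1)\!\left(\binom{2g}{g-i}-\binom{2g}{g-i-2}\right).
\]
After the substitution $j=i-r$, the convolution identity $\sum_{j\ge 0}(-1)^j(j+1)x^j=(1+x)^{-2}$ gives $\sum_{j\ge 0}(-1)^j(j+1)\binom{2g}{a-j}=\binom{2g-2}{a}$; taking $a=g-r$ and using the palindromic symmetry $\binom{2g-2}{k}=\binom{2g-2}{2g-2-k}$ reproduces $\binom{2g-2}{g-2+r}-\binom{2g-2}{g+r}$, and the $g=0$ piece $n^0_\beta[0]$ contributes $\delta_{r,0}n^0_\beta$, matching the linear extraction.

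The main obstacle will be pinning down the sign conventions for $(-1)^F$ versus $(-1)^{F_L}(-1)^{F_R}$ and the interplay between the Lefschetz decomposition on moduli cohomology and the physical $\su_L\times\su_R$ action, so that the restriction-to-$\su_\Delta$ step and the passage to characters at $t=-1$ are unambiguous; once the conventions are fixed, the identification is combinatorial. A secondary point is that for non-primitive $\beta=m\beta'$ the $m\ge 2$ terms of (\ref{productGV}) produce contributions that are linear in $n^g_{\beta'}$ rather than genuinely bilinear; the $O((n^h_\gamma)^2)$ notation is to be read as "not contributing to the $n^g_\beta$-linear part," and one should check that these terms are nonetheless recoverable from the expansion when needed.
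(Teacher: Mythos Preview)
Your argument is correct. For the second equality you extract the coefficients of $q^{1-r}Q^\beta$ and $q^{-1-r}Q^\beta$ from the product formula exactly as the paper does in (\ref{linearterms}). For the first equality you take a slightly different route: the paper invokes (\ref{ajidentity}), rearranged as $\theta^r I_g = H^*(C^{[g-r]}) \ominus H^*(C^{[g-r-2]})$, and then reads off $\mathrm{Tr}(-1)^F H^*(C^{[k]}) = \binom{2g-2}{k}$ (equivalently $(-1)^k e(C^{[k]})$, via Macdonald's formula for symmetric products). You instead expand $I_g$ directly via (\ref{jaclefschetz}), use multiplicativity of characters to factor $\mathrm{Tr}(-1)^F$ on the diagonal as $(-1)^{2j_L-r}(2j_L-r+1)\cdot(-1)^{2j_R}(2j_R+1)$, collapse the $j_R$-sum via (\ref{gvdef}), and close with the convolution $\sum_{j\ge0}(-1)^j(j+1)\binom{2g}{a-j}=\binom{2g-2}{a}$. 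The paper's route is shorter and ties the identity back to the single-curve Abel--Jacobi picture that motivated the section; yours is self-contained and makes the character factorization explicit, which is precisely the structure exploited when the argument is run equivariantly in Section~\ref{kkvrefined}. Your secondary remark about non-primitive $\beta$ is also well taken: the $m\ge2$ terms of $\log(1+\cdots)$ for a divisor class $\beta/m$ contribute to $[Q^\beta]$ linearly in $n^h_{\beta/m}$, so the description of the remainder as ``nonlinear'' is a mild abuse of notation.
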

The terms $O\left(\left(n^{h}_\gamma\right)^2\right)$ are precisely the correction
terms of \cite{kkv}.  Thus, Proposition~\ref{kkvprop} is a mathematically
rigorous formulation of the validity of the KKV method.  The $\delta_{r,0}$
in the statement of the Proposition is the usual Kronecker delta.

\smallskip
Note that the smoothness of the PT moduli spaces is no longer assumed.

\bigskip
The proof of Proposition~\ref{kkvprop} is straightforward.  We focus on the
terms of class $\beta$ by writing
\[
Z_{\mathrm{PT}}=\sum_\beta Z_{\mathrm{PT}}^\beta Q^\beta.
\]
Then from (\ref{productGV})  we obviously have
\begin{equation}
\label{linearterms}
Z_{\mathrm{PT}}^\beta=\sum_{g\ge1}
n^g_\beta \left(\sum_k{2g-2\choose g-1-k}q^k\right)+\sum_{j=1}^\infty
(-1)^{j+1}jn^0_\beta q^j +O\left(\left(n^{h}_\gamma\right)^2\right).
\end{equation}

From (\ref{ajidentity}) and the definition (\ref{gvdef}) of the
$n^g_\beta$, we compute that
\begin{equation}
\label{traceinvariant}
\mathrm{Tr}(-1)^F\left(\theta^{r}\hat{\cH}_\beta\right)_{\su_{\Delta}}=
\sum_{g\ge1} n^g_\beta \left\{ {2g-2\choose g-r}-{2g-2\choose g-r-2} \right\}
+\delta_{0,r}n^0_\beta.
\end{equation}
The lemma follows immediately from (\ref{linearterms}) and
(\ref{traceinvariant}), with the $O\left(\left(n^{h}_\gamma\right)^2\right)$
terms in the statement of the lemma being precisely the difference of
the two explicit $O\left(\left(n^{h}_\gamma\right)^2\right)$ terms in the expansions of
$P_{1-r,\beta}$ and $P_{-1-r,\beta}$ arising from the
product formula (\ref{productGV}).

\subsection{Local $\PP^2$}
\label{localp2}
We now illustrate the low degree cases with $X$ equal to local $\PP^2$.
For curves of degree $d$, we have $p_a=p_a(d)=(d-1)(d-2)/2$. We will
assume that $n^g_d=0$ for $g>p_a(d)$.  As
explained in Section~\ref{ptinvariants},
the PT-moduli spaces will be
equal to the relative Hilbert schemes in the cases discussed below, so the
PT invariants can be calculated by hand.

\medskip\noindent
$d=1$.  Since lines have genus~0, we set $n^1_g=0$ for $g>0$ in the
generating function (\ref{productGV}), which gives
\[
Z_{PT}=\prod_{j=1}^\infty\left(1+(-1)^{j+1}q^jQ\right)^{jn^0_1}+O(Q^2)=1+Q
\left(n^0_1\sum_{n=0}^\infty(-1)^n(n+1)q^{n+1}\right)+O(Q^2)
\]
Comparing coefficients of $qQ$, we see that $P_{1,1}=n^0_1$ and there are no
correction terms, matching Proposition~\ref{kkvprop} with $\beta=1$, $r=0$,
and $\hat{\cH}_1=[0,1]$.  But $P_1(X,1)$
is just the moduli space of lines in $\PP^2$, itself a $\PP^2$.  So
$P_{1,1}=+e(P_1(X,1))=3$, so that $n^0_1=3$.

{$d=2$.}  Since degree~2 curves have arithmetic genus~0, we set
$n^g_2=0$ for $g>0$, and then the
expansion of $Z_{PT}$ has the form
\[
\prod_{d=1}^2
\prod_{j=1}^\infty\left(1+(-1)^{j+1}q^jQ^d\right)^{jn^0_d}+O(Q^3)=1+Q(\cdots)
+Q^2(n^0_2q+O(q^2))+O(Q^3).
\]
The coefficient of $qQ^2$ gives $P_{1,2}=n^0_2$.  However,
$P_1(X,2)$ is the space of conics in $\IP^2$ with no point, parametrized
by $\IP^5$.  Thus
$P_{1,2}=-e(\PP^5)=-6$, giving $n^0_2=-6$.  This is in agreement with
Proposition~\ref{kkvprop} with $d=2$ and $r=0$, and $\hat{\cH}_2=[0,5/2]$.

\medskip\noindent
{$d=3$.}  Since degree~3 curves have arithmetic genus~1, we set
$n^g_3=0$ for $g>1$, and then the
expansion of $Z_{PT}$ gives
$$
\left(\prod_{d=1}^3
  \prod_{j=1}^\infty\left(1+(-1)^{j+1}q^jQ^d\right)^{jn^0_d}\right)
\left(1-Q^3\right)^{-n^1_3}+O(Q^4)$$ $$=1+Q(\cdots)
+Q^2(\cdots)+Q^3(n^1_3+n^0_3q+O(q^2))+O(Q^4).$$
The coefficients of $Q^3$ and $qQ^3$ give
$P_{0,3}=n^1_3,\ P_{1,3}=n^0_3$.

Now $P_0(X,3)$ is the moduli space of cubics in $\IP^2$, which is
isomorphic to
$\IP^9$, so $n^0_3=P_{0,3}=-10$.
Next, $P_1(X,3)$ is the
universal cubic curve $\cC$.  Consider the map
$\cC\to\IP^2$, which forgets the curve and remembers the point,
$(C,p)\mapsto p$.  This exhibits $\cC$ as a
$\IP^8$-bundle over $\IP^2$, since the space of cubics through any point
of $\IP^2$ is a codimension~1 linear subspace of $\IP^9$, i.e.\ a $\IP^8$.
In particular $\cC$ is smooth.
Taking the euler characteristic gives $n^1_3=e(\cC)=(9)(3)=27$.
These results are consistent with Proposition~\ref{kkvprop} for $d=3$ and $r=0,1$
and $\hat{\cH}_3=[1/2,9/2]+[0,3]$.

\medskip\noindent
{$d=4$.} Since degree~4 curves have arithmetic genus~3, we set
$n^g_4=0$ for $g>3$, and then the
expansion of $Z_{PT}$ has the form
\[
1+Q(\cdots) +\ldots+
Q^4\left(n^3_4q^{-2}+\left(n^2_4+4n^3_4\right)q^{-1}+\right.
\]
\[\left.\left(n^1_4+2n^2_4+6n^3_4\right)
+\left(n^0_4+n^2_4+4n^3_4 +n^0_1n^1_3\right)q+\ldots \right)+O(Q^5).
\]

Comparing coefficients, we see
\begin{equation}
P_{-2,4}=n^3_4,\
P_{-1,4}=n^2_4+4n^3_4,\ P_{0,4}=n^1_4+2n^2_4+6n^3_4,\
P_{1,4}=n^0_4+n^2_4+4n^3_4 +n^0_1n^1_3
\label{deg4equations}
\end{equation}
This gives immediately
\begin{equation}
\label{deg4kkvequations}
P_{0,4}-P_{-2,4}=n^1_4+2n^2_4+5n^3_4,\ P_{1,4}-P_{-1,4}=n^0_4+n^0_1n^1_3.
\end{equation}

Note in particular the natural occurrence of $n^0_1n^1_3$ in the last equation
of (\ref{deg4kkvequations}).  This product was explained in \cite{kkv} as a
``correction term'' arising from quartic curves which factor into a line and a
cubic, but the equation makes its role very clear.

We now compute the degree~4 GV invariants.

Now $P_{-2}(X,4)$ is the moduli space of quartic plane curves, which is
isomorphic to
$\IP^{14}$; hence $P_{-2,4}=15$.

$P_{-1}(X,4)$ is the universal curve, a $\IP^{13}$ bundle over
$\IP^2$; hence $P_{-1,4}=-(3)(14)=-42$.

$P_{0}(X,4)$ is the
relative Hilbert scheme of length 2 subschemes.  Since any length 2 subscheme of $\IP^2$ (including the degenerate case of a single point with multiplicity 2) imposes
independent conditions on the space of degree~4 curves, $\cC^{[2]}$ is a $\IP^{12}$ bundle over the (smooth) Hilbert scheme
$(\IP^2)^{[2]}$ of length 2 subschemes $\IP^2$, hence smooth itself.

The euler numbers of the Hilbert scheme are
computed by
\begin{equation}
\sum_n e\left(
\left(\IP^2\right)^{[n]}\right) s^n = \eta(s)^3
\label{eulerhilb}
\end{equation}
and in particular $e((\IP^2)^{[2]})=9$. So $P_{0,4}=(9)(13)=117$.

Next, $P_1(X,4)$ is
the relative Hilbert scheme $\cC^{[3]}$, a $\IP^{11}$ bundle
over $(\IP^2)^{[3]}$, which has euler characteristic 22.  So
$P_{1,4}=-(22)(12)=-264$.

We then solve (\ref{deg4equations}) or equivalently (\ref{deg4kkvequations})
for the $n^g_4$ to get
\[
n^3_4=15
\]
\[
n^2_4=-42-4(15)=-102
\]
\[
n^1_4= 117-2(-102)-6(15)=231
\]
\[
n^0_4=-264-(-102)-4(15)-(3)(-10)=-192
\]

This is all consistent with Proposition~\ref{kkvprop} with $\hat{\cH}_4=[3/2,7]+
[1,11/2]+[1/2,6+5+4]+[0,13/2+9/2+7/2]$, comparing with the first two equalities
in (\ref{deg4equations}) (for $r=2,3$) and with (\ref{deg4kkvequations})
(for $r=0,1$).

The method can be applied without difficulty for $d=5$ and for $d=6,\ g>2$.
However, for $d=6$ and $g=2$, there is a problem because $P_{0}(X,6)$, the
relative Hilbert scheme $\cC^{[8]}$, is not obviously smooth.  This is not
an obstacle, since the PT invariants can be calculated anyway by localization
\cite{ptvertex}.  We will review the calculation in Section~\ref{pttoric}.

To see the problem with smoothness,
we project onto $(\IP^2)^{[8]}$ and compute the fibers.  If the
8 points are general, then they impose independent conditions and the fiber is
$\IP^{27-8}=\IP^{19}$.  But if the 8 points are contained in a line $L$,
then if they are also contained in
a degree 6 curve $C$, they are necessarily contained in the intersection $C\cap
L$. But if $C\cap L$ were finite, it could be no more than 6 points; hence
$C\cap L$ is infinite and $C$ must contain $L$.  We learn that the fiber
consists of all degree 6 curves $L\cup D$, where $D$ is a degree 5 curve.
But the space of all $D$ is a $\IP^{20}$ and so the fiber dimension jumps, and
we can no longer conclude smoothness.  \footnote{If 7 points are chosen to be contained in a line $L$ instead of 8, the same argument shows $C$ again contains $L$, but there will not be any jump in the fiber
dimension in that case.}

In general, for the family of degree $d$ curves, $\cC^{[k]}$ is smooth whenever $k\le d+1$.  We confirm this by checking the worst case, when $k=d+1$.  As usual, we have a
fibration $\cC^{[d+1]}\to\IP^{[d+1]}$ and want to check that the fibers are projective
spaces of the same dimension.  The generic fibers are projective spaces of dimension
$d(d+3)/2-(d+1)=(d+2)(d-1)/2$, since general points will impose $d+1$ independent
conditions on the space of all degree $d$ curves.  The worst possible case is when the $d+1$ points
are contained in a line $L$.  As in the degree 4 case, if these points are also contained in a degree $d$ curve, we conclude that $L$ is contained in $C$. Therefore $C$ is the union of $L$ and an arbitrary curve of degree $d-1$.  But curves of degree $d-1$ are parametrized by a projective space of dimension $(d+2)(d-1)/2$, and the fibers all
have the same dimension, as claimed.

\section{Pandharipande-Thomas for toric Calabi-Yau manifolds}
\label{pttoric}

\subsection{Combinatorics of the $T$-fixed loci}\label{toriccombinatorics}
In this section, we review the classification of torus fixed locus of $P_n(X,d)$ via box configurations studied in \cite{ptvertex}.
Let $s\colon \cO_X \to \cF$ be a torus fixed stable pair. Consider the associated exact sequence
\[0\to I_C\to \cO_X \to \cF\to Q\to 0. \]
Then the curve $C$ is a torus invariant curve in $X$ and the zero-dimensional cokernel $Q$ of $s$ is supported on torus fixed points. We start with a description of the torus invariant curves.

We may restrict our attention to affine torus invariant open sets containing a unique fixed point. For example, in local $\IP^2$ there are three fixed points and three affine open sets containing each of these fixed points.
Let $x_1, x_2, x_3$ be the coordinate functions on a affine torus invariant open set $U$ such that the torus $T\simeq (\IC^*)^3$ acts by \[(t_1,t_2,t_3)\cdot x_i=t_ix_i.\]

Since $C$ is $T$-fixed, it is defined by a monomial ideal $I$ of the polynomial ring $R=\IC[x_1,x_2,x_3]$. By the purity of $\cF$, $C$ is a Cohen-Macaulay curve, i.e.\ a pure one-dimensional curve with no embedded point.  Therefore
the ring $R/I$ is of pure dimension one.  The set of such monomial ideals $I$ is in one-to-one correspondence with the set of triples of three outgoing partitions as follows.

A monomial ideal $I$ in $R$ is associated to a three dimensional partition $\pi$ by considering a union of boxes corresponding to the weights of $R/I$ in the group of characters of $T$, identified with $\IZ^3$ in the usual way. The localizations
\[  (I)_{x_i}\subset \IC[x_1,x_2,x_3]_{x_i},\]
for $i=1,2,3$ are all $T$-fixed, and hence each corresponds to a two-dimensional partition $\pi^i$. One can think of $\pi^i$ as a cross-section of the three dimensional partition $\pi$ by a plane $x_i=c$ for a large integer $c$. We will call $\pi^i$ the outgoing partition of $\pi$. Conversely, given a triple $(\pi^1,\pi^2,\pi^3)$ of outgoing partitions, the monomial ideal $I$ is defined by a unique minimal three dimensional partition with outgoing partition $(\pi^1,\pi^2,\pi^3)$. The minimality assumption is due to the Cohen-Macaulay property of the curve $C$. We denote the curve corresponding to the outgoing partition $\vec{\pi}=(\pi^1,\pi^2,\pi^3)$ by $C_{\vec{\pi}}$.

\begin{prop}[\cite{pt}]\label{prop:ptequivalenttosheaf}
Let $\mathfrak{m}\subset \cO_C$ be the ideal of a zero dimensional subscheme of a Cohen-Macaulay curve $C$. A stable pair $(\cF,s)$ with support $C$ satisfying \[{\rm Support}^{red}(Q)\subset {\rm Support}(\cO_C/\mathfrak{m})\] is equivalent to a subsheaf of $\mathscr{H}om(\mathfrak{m}^r,\cO_C)/\cO_C$, for $r\gg 0$.
\end{prop}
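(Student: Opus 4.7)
The plan is to establish a functorial bijection between stable pairs $(\cF, s)$ on $X$ with scheme-theoretic support $C$ satisfying $\mathrm{Supp}^{\mathrm{red}}(Q) \subset \mathrm{Supp}(\cO_C/\mathfrak{m})$ and coherent subsheaves of $\mathscr{H}om(\mathfrak{m}^r, \cO_C)/\cO_C$ for $r$ sufficiently large.

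First I would show that any such stable pair fits into a short exact sequence $0 \to \cO_C \to \cF \to Q \to 0$. Since $\cF$ is pure of dimension $1$ with scheme-theoretic support $C$, the ideal $I_C$ annihilates $\cF$, so $s\colon \cO_X \to \cF$ factors through a map $\iota\colon \cO_C \to \cF$. By the stability condition, $\iota$ is surjective away from a finite set of points, so its cokernel $Q$ is of finite length and its kernel is generically zero. But $\cO_C$ is pure of dimension $1$ (the Cohen-Macaulay hypothesis on $C$), so any nonzero subsheaf has $1$-dimensional support; hence $\ker \iota = 0$ and $\iota$ is injective. The support hypothesis on $Q$ then reads $\mathrm{Supp}(Q) \subset V(\mathfrak{m})$.

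Next I would interpret $\mathscr{H}om_{\cO_C}(\mathfrak{m}^r, \cO_C)$. Multiplication gives an injection $\cO_C \hookrightarrow \mathscr{H}om(\mathfrak{m}^r, \cO_C)$ which is an isomorphism away from $V(\mathfrak{m})$, and the natural maps $\mathscr{H}om(\mathfrak{m}^r, \cO_C) \hookrightarrow \mathscr{H}om(\mathfrak{m}^{r+1}, \cO_C)$ induced by $\mathfrak{m}^{r+1} \hookrightarrow \mathfrak{m}^r$ realize the colimit as the subsheaf of the sheaf of total quotients of $\cO_C$ consisting of rational sections regular away from $V(\mathfrak{m})$. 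Since $\iota$ is an isomorphism outside the finite set $\mathrm{Supp}(Q) \subset V(\mathfrak{m})$, purity of $\cF$ yields a canonical injection of $\cF$ into this colimit, and finite length of $Q$ implies $\mathfrak{m}^r \cdot \cF \subset \cO_C$ for $r \gg 0$, upgrading this to $\cF \hookrightarrow \mathscr{H}om(\mathfrak{m}^r, \cO_C)$. Quotienting by the common subsheaf $\cO_C$ realizes $Q \cong \cF/\cO_C$ as a subsheaf of $\mathscr{H}om(\mathfrak{m}^r, \cO_C)/\cO_C$.

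For the converse, given a coherent subsheaf $\mathcal{G} \subset \mathscr{H}om(\mathfrak{m}^r, \cO_C)/\cO_C$ I would define $\cF$ as the preimage of $\mathcal{G}$ under the projection $\mathscr{H}om(\mathfrak{m}^r, \cO_C) \twoheadrightarrow \mathscr{H}om(\mathfrak{m}^r, \cO_C)/\cO_C$, giving the extension $0 \to \cO_C \to \cF \to \mathcal{G} \to 0$ together with the obvious section $s\colon \cO_X \twoheadrightarrow \cO_C \hookrightarrow \cF$. The main obstacle is verifying that $\cF$ is pure of dimension $1$: for this one shows $\mathscr{H}om(\mathfrak{m}^r, \cO_C)$ itself is pure, and this is exactly where the Cohen-Macaulayness of $C$ is essential, since an embedded point of $C$ would produce zero-dimensional torsion in the $\mathrm{Hom}$-sheaf and hence permit unstable pairs. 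Once purity is established, the two constructions are manifestly inverse to each other, and the apparent dependence on $r$ disappears because the compatible inclusions above stabilize the picture as soon as $r$ exceeds the length of $Q$.
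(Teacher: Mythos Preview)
The paper does not supply its own proof of this proposition: it is stated with the citation \cite{pt} and used as input, so there is nothing in the present paper to compare against. Your argument is essentially the standard one from Pandharipande--Thomas, and it is correct: the forward direction uses purity of $\cO_C$ to get injectivity of $\cO_C\to\cF$, then the fact that $\mathfrak{m}^r$ annihilates the finite-length sheaf $Q$ to land $\cF$ inside $\mathscr{H}om(\mathfrak{m}^r,\cO_C)$; the reverse direction pulls back along the quotient map and uses Cohen--Macaulayness of $C$ to guarantee purity of $\mathscr{H}om(\mathfrak{m}^r,\cO_C)$ and hence of $\cF$. The only place to tighten is the purity claim for $\mathscr{H}om(\mathfrak{m}^r,\cO_C)$: the cleanest justification is that $\mathfrak{m}^r$ contains a non-zerodivisor at every point of $C$ (depth one), so evaluation on such an element embeds $\mathscr{H}om(\mathfrak{m}^r,\cO_C)$ into the total ring of fractions of $\cO_C$, which has no finite-length subsheaves.
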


We have inclusions
\[ \mathscr{H}om(\mathfrak{m}^r,\cO_C)\to \mathscr{H}om(\mathfrak{m}^{r+1},\cO_C)\]
by the purity of $\cO_C$. Hence, by Proposition \ref{prop:ptequivalenttosheaf}, we may consider a stable pair as a subsheaf of the limit \[\varinjlim_{r} \mathscr{H}om(\mathfrak{m}^r,\cO_C)/\cO_C.\]
Under the equivalence of Proposition~\ref{prop:ptequivalenttosheaf}, the subsheaf of $\displaystyle \varinjlim_{r} \mathscr{H}om(\mathfrak{m}^r,\cO_C)/\cO_C$ is precisely the cokernel $Q$.

Let $\pi^1[x_2,x_3]$ be the monomial ideal of $\IC[x_2,x_3]$ defined by a partition $\pi^1$, and let
\[ M_1= \IC[x_1,x_1^{-1}]\otimes\left(\IC[x_2,x_3]/\pi^1[x_2,x_3]\right). \]
We define $M_2$ and $M_3$ similarly. Hence, $M_i$ may be viewed in the space of $T$-characters as an infinite cylinder $\textrm{Cyl}_i\in \IZ^3$ along the $x_i$ axis with cross-section $\pi^i$.

Then we have
\[
\varinjlim_{r} \mathscr{H}om(\mathfrak{m}^r,\cO_C)\simeq \bigoplus_{i=1}^3 M_i=: M.
\]
The submodule $\cO_C$ in $M$
is generated by $(1,1,1)$. Hence, the $T$-fixed stable pair $(\cF|_U, s|_U)$ corresponds to a finite dimensional $T$-invariant submodule of $M/\langle (1,1,1)\rangle$. In \cite{ptvertex}, this submodule is described by box configurations as follows.

There are three types of $T$-weights of $M/\langle (1,1,1)\rangle$:
\begin{itemize}
\item[(i)]  weights which are contained in exactly one cylinder $\textrm{Cyl}_i$ and have negative $i$-th coordinate. The set of all weights of these type is denoted by $\I$.
\item[(ii)] weights which are contained in exactly two and three cylinders. The sets of weights of these types are denoted by $\II$ and $\III$ respectively.
\end{itemize}
Let $\IC_w$ be the one dimensional weight space with the weight $w$. Then, we have by definition
\begin{equation}\label{eq:Mweights}
 M/\langle (1,1,1)\rangle = \bigoplus_{w\in \I\cup \II} \IC_w \oplus \bigoplus_{w\in \III} (\IC_w)^2.
\end{equation}
The $R$-module structure on $M/\langle (1,1,1)\rangle$ is such that multiplication by $x_i$ increases the $i$-th coordinate of the weight vector by one. See \cite{ptvertex} for more precise statements.  Therefore, the $T$-fixed pair on $U$ can be described by a $T$-invariant $R$-submodule of \eqref{eq:Mweights}, which by taking its $T$-weights, yields a \emph{labeled box configuration} in $\I\cup \II \cup \III$.

A labeled box configuration is a collection of a finite number of boxes supported on $\I\cup \II \cup \III$, where a box at a type $\III$ weight $w$ may be labeled by a one dimensional subspace of $(\IC_w)^2$ in \eqref{eq:Mweights}. A box indicates the inclusion of the corresponding $T$-weight in $Q$.  An unlabeled type $\III$ box indicates the inclusion of the entire two dimensional space $(\IC_w)^2$ in $Q$. Fixed point loci corresponding to box configurations not containing any labeled type $\III$ boxes are isolated.  Since box configurations correspond to submodules of $M/\langle (1,1,1)\rangle$, it must be invariant under multiplication by $x_i$. We refer to \cite{ptvertex} for more detail.

We return to the case where $X$ is local $\IP^2$. Local $\IP^2$ has three $T$-fixed points, denoted by $p_0$, $p_1$, and $p_2$. Let $L_{ij}$ be the $T$-invariant line connecting $p_i$ and $p_j$. Then, as there is no compact $T$-invariant line along the fiber direction, we do not have any type $\III$ boxes.\footnote{The same result holds for any local Calabi-Yau threefold based on a toric surface.}  It follows
that all $T$-fixed points in $P_n(X,d)$ are isolated. Since there are no labeled type $\III$ boxes, the condition that the labeled box configuration defines a $T$-invariant $R$-submodule is simply as follows.

\begin{itemize}
  \item[$(\dagger)$] For $w=(w_1,w_2,w_3)\in \I\cup \II$, if any of \[(w_1-1,w_2,w_3), (w_1,w_2-1,w_3), (w_1,w_2,w_3-1)\] support a box then $w$ must support a box.
\end{itemize}

\begin{figure}
\center
\includegraphics[width=10cm]{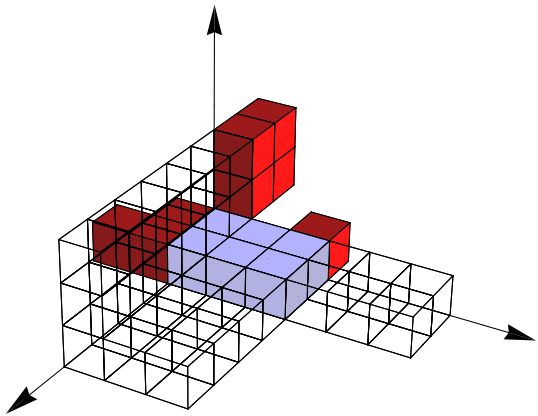}
\caption{The figure shows boxes of type $\I$ in red, boxes of type $\II$ in blue (light)
and two uncolored positive cylinders corresponding to the partitions
$(3,1,1)$ and $(1,1)$. In local toric Calabi-Yau manifolds there are only two cylinders and no boxes of type $\III$.}
\label{mon}
\end{figure}

The \emph{length} of the box configuration is defined by the dimension of corresponding submodule of $M/\langle (1,1,1)\rangle$ as a vector space, which in the case of local $\IP^2$, is the same as the number of boxes. Then by \cite{ptvertex}, we have the following.

\begin{prop}\label{prop:ptp2counting}
Torus fixed points in $P_n(X,d)$ are in one-to-one correspondence with tuples $(B_0, B_1, B_2)$ of three box configurations satisfying the rule $(\dagger)$ such that for some triple $\vec{\lambda}=(\lambda^{01},\lambda^{02},\lambda^{12})$ of partitions with $|\lambda^{01}|+|\lambda^{02}|+|\lambda^{12}|=d$, the outgoing partitions of $B_0, B_1, B_2$ are $(\lambda^{01},\lambda^{02},\emptyset)$, $(\lambda^{12},\lambda^{01},\emptyset)$, and $(\lambda^{02},\lambda^{12},\emptyset)$ respectively, and the sum of the lengths of $B_0, B_1, B_2$ is equal to $n-\chi(\cO_{C(\vec{\lambda})})$, where $C(\vec{\lambda})$ is the torus fixed curve on $\IP^2$ defined by the partitions $\lambda^{ij}$ along $T$-invariant line $L_{ij}$.
\end{prop}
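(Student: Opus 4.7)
The plan is to localize the classification of $T$-fixed stable pairs to the three affine $T$-invariant charts $U_i$ around the fixed points $p_i\in\mathbb{P}^2$, apply Proposition~\ref{prop:ptequivalenttosheaf} on each chart, and then assemble the local data using the compatibility conditions along the $T$-invariant edges $L_{ij}$. First I would observe that a $T$-fixed stable pair $s\colon \cO_X\to \cF$ fits in $0\to I_C\to \cO_X\to \cF\to Q\to 0$ with both $C$ and $Q$ $T$-invariant, so $C$ is a $T$-invariant Cohen--Macaulay curve and $Q$ is a zero-dimensional $T$-invariant sheaf supported on the $T$-fixed points of $X$. Because $X=K_{\mathbb{P}^2}$ has no compact $T$-invariant curves in the fiber direction, every compact $T$-invariant CM curve in $X$ is scheme-theoretically supported on $\mathbb{P}^2$ and is a thickening of a union of the three $T$-invariant lines $L_{ij}$. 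Classifying such thickenings around each $p_i$ amounts to giving a partition in each of the two surface directions and the empty partition in the fiber direction; gluing these on the overlaps $U_i\cap U_j$ forces the partition transverse to $L_{ij}$ to match from both ends, producing a single partition $\lambda^{ij}$ per edge. The degree constraint $\mathrm{ch}_2(\cF)=d$ then translates to $|\lambda^{01}|+|\lambda^{02}|+|\lambda^{12}|=d$ (each box transverse to $L_{ij}$ contributes one line's worth of class), fixing the global curve $C(\vec{\lambda})$.

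With $C$ fixed, Proposition~\ref{prop:ptequivalenttosheaf} realizes the restriction of the stable pair to $U_i$ as a finite-dimensional $T$-invariant $R$-submodule of $M/\langle(1,1,1)\rangle$, and this is precisely what is encoded by a labeled box configuration $B_i$ supported on $\mathrm{I}^-\cup\mathrm{II}\cup\mathrm{III}$. Since $X$ has no compact $T$-invariant curves in the fiber direction, no box of type $\mathrm{III}$ ever appears, so each $B_i$ is an \emph{unlabeled} box configuration; $T$-invariance under multiplication by the $x_\ell$ is then equivalent to the closure property $(\dagger)$, and the three local data are automatically compatible because they already share the common outgoing partitions $\lambda^{ij}$ along the edges. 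Conversely, from any triple $(B_0,B_1,B_2)$ satisfying the stated outgoing-partition conditions one glues the three local submodules into a global $T$-invariant quotient $\cF$ of $\cO_X$ on $C(\vec{\lambda})$, and Proposition~\ref{prop:ptequivalenttosheaf} guarantees this is a stable pair; hence the correspondence is a bijection.

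It remains to match the discrete invariant $n=\chi(\cF)$ with the length datum in the proposition. I would argue that by the exact sequence $0\to\cO_{C(\vec{\lambda})}\to \cF\to Q\to 0$, one has $\chi(\cF)=\chi(\cO_{C(\vec{\lambda})})+\chi(Q)$, and since $Q$ is $T$-fixed and zero-dimensional, $\chi(Q)$ equals the sum of the dimensions of the local modules at the $p_i$, which in the unlabeled case is exactly the sum of the numbers of boxes $|B_0|+|B_1|+|B_2|$. Imposing $\chi(\cF)=n$ therefore reproduces the length condition $\sum_i |B_i|=n-\chi(\cO_{C(\vec{\lambda})})$.

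The step I expect to be the main obstacle is the careful verification that gluing the three local $T$-invariant submodules does produce a genuine \emph{global} stable pair on $X$ (as opposed to only a coherent sheaf on a formal neighborhood of $C$), and that distinct tuples $(B_0,B_1,B_2)$ yield non-isomorphic stable pairs. Both are handled by exploiting that each $U_i$ already contains a neighborhood of the compact support of $Q_i$, so the local pair extends uniquely by $\cO_C$-linearity across the entire edge, and by invoking the purity of $\cO_{C(\vec{\lambda})}$ (Proposition~\ref{prop:ptequivalenttosheaf}) which makes the extension canonical; the injectivity then follows because different $B_i$ produce distinguishable $T$-weight decompositions of $Q$. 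The remaining bookkeeping --- that the outgoing partitions at $p_i$ along $L_{ij}$ really are $\lambda^{ij}$ and not some rotation or permutation --- is settled by the explicit orientation conventions on the toric chart $U_i$ as set up in Section~\ref{toriccombinatorics}.
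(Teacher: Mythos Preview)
Your overall approach is the same as the paper's: decompose a $T$-fixed stable pair into its Cohen--Macaulay support $C$ and the zero-dimensional cokernel $Q$, classify $C$ via edge partitions, classify $Q$ at each vertex via box configurations through Proposition~\ref{prop:ptequivalenttosheaf}, and read off the length condition from $\chi(\cF)=\chi(\cO_{C(\vec\lambda)})+\chi(Q)$. The paper's own proof is in fact much terser than yours, simply invoking the preceding discussion from \cite{ptvertex} for the vertex classification and the short exact sequence for the Euler-characteristic bookkeeping.

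There is, however, one genuine error in your argument. You assert that every compact $T$-invariant Cohen--Macaulay curve in $X=K_{\mathbb{P}^2}$ is \emph{scheme-theoretically} supported on $\mathbb{P}^2$. This is false: the edge partitions $\lambda^{ij}$ are two-dimensional partitions recording the cross-section of the thickening of $L_{ij}$ in the plane transverse to it, and that plane includes the non-compact fiber direction. Thus $C(\vec\lambda)$ may be thickened into the fiber direction and need not lie in $\mathbb{P}^2$ as a scheme; the lemma immediately following the proposition (where $\lambda_1$ is explicitly the order of thickening normal to $\mathbb{P}^2$) makes this clear. What the absence of compact $T$-invariant curves in the fiber direction actually buys you is only that the \emph{third outgoing partition} at each vertex is empty, hence $\mathrm{Cyl}_3=\emptyset$ and there are no type $\III$ boxes. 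Your classification of the support curve should read ``set-theoretically supported on $\bigcup L_{ij}$'' rather than ``scheme-theoretically supported on $\mathbb{P}^2$''; with that correction the rest of your argument goes through and matches the paper.
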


\begin{proof}
A $T$-fixed stable pair $(\cF,s)$ is supported on a $T$-invariant curve of degree $d$, which is given by $C(\vec{\lambda})$ for some triple $\vec{\lambda}$ of partitions.
By the exact sequence
\[ 0 \to \cO_{C(\vec{\lambda})} \to \cF \to Q\to 0,\]
we have
\[n=\chi(\cF) = \chi(Q) + \chi(\cO_{C(\vec{\lambda})}).\]
Moreover, $Q$ must be supported on fixed points, and by the above discussion, at each fixed point $Q$ corresponds to a box configuration satisfying the rule $(\dagger)$.
\end{proof}

For a triple $\vec{\pi}=(\pi^1,\pi^2,\pi^3)$ of partitions, define the \emph{renormalized volume} $|\vec{\pi}|$ by
\[|\vec{\pi}|= \#\left\{\pi \cap [0,\dots,N]^3 \right\}-
(N+1) \sum_{i=1}^3 |\pi^i| \,, \quad N\gg 0, \]
where $\pi$ is the three dimensional partition corresponding to the curve $C_{\vec{\pi}}$.

\begin{lem}In Proposition \ref{prop:ptp2counting},
\begin{align*}
  \chi(\cO_{C(\vec{\lambda})})=& \sum_{\lambda\in \{\lambda^{01},\lambda^{02},\lambda^{12}\}}\left(\sum_{j=1}^{l(\lambda)} \left(\frac{\lambda_j(3\lambda_j+1)}{2}-j\lambda_j\right)\right) \\
  &+|(\lambda^{01},\lambda^{02},\emptyset)|+|(\lambda^{12},\lambda^{01},\emptyset)|+|(\lambda^{02},\lambda^{12},\emptyset)|,
\end{align*}
where the partition $\lambda$ is written as $\lambda= \lambda_1+\cdots+\lambda_{l(\lambda)}$ with $\lambda_1\ge\lambda_2\ge\cdots\lambda_{l(\lambda)}$ and the
two-dimensional partitions $\lambda$ are oriented so that $\lambda_1$ represents
the order of thickening of the associated line in the non-compact direction normal to $\PP^2$.
\end{lem}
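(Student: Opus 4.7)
The plan is to decompose $C(\vec{\lambda})$ as the scheme-theoretic union of the three torus-invariant thickened lines $C_{01},C_{02},C_{12}$, where $C_{ij}\subset X$ is the Cohen-Macaulay curve supported on $L_{ij}$ whose ideal sheaf in each $T$-invariant affine chart is generated by the monomials outside the cylinder with cross-section $\lambda^{ij}$. The key observation is that on the affine chart $U_i$ around a vertex $p_i$, the minimal monomial ideal with outgoing partitions $(\lambda^{ij},\lambda^{ik},\emptyset)$ equals $I(C_{ij})\cap I(C_{ik})$: a monomial lies outside the union of two cylinders iff it lies outside each one separately. Hence globally $C(\vec{\lambda})=C_{01}\cup C_{02}\cup C_{12}$, the pairwise scheme-theoretic intersections $C_{ij}\cap C_{ik}$ are zero-dimensional and supported at the unique common vertex $p_i$, and the triple intersection is empty.

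Iterating the Mayer--Vietoris sequence $0\to\cO_{Y_1\cup Y_2}\to\cO_{Y_1}\oplus\cO_{Y_2}\to\cO_{Y_1\cap Y_2}\to 0$, while using the scheme-theoretic distributivity $(Y_1\cup Y_2)\cap Y_3=(Y_1\cap Y_3)\cup(Y_2\cap Y_3)$ that holds for monomial ideals, yields
\[
\chi(\cO_{C(\vec{\lambda})})=\sum_{ij}\chi(\cO_{C_{ij}})-\sum_i \mathrm{length}\,\cO_{C_{ij}\cap C_{ik}},
\]
reducing the lemma to two local computations. For the edge contribution, $C_\lambda$ is supported on $L\simeq\PP^1$ with normal bundle $N_{L/X}\simeq\cO(-3)\oplus\cO(1)$, where the $\cO(-3)$ summand is the non-compact direction (indexed by $a$) and $\cO(1)$ is the compact normal direction in $\PP^2$ (indexed by $b$). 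The structure sheaf then splits as $\cO_{C_\lambda}=\bigoplus_{(a,b)\in\lambda}\cO_{\PP^1}(3a-b)$, and summing $\chi(\cO_{\PP^1}(d))=d+1$ over the boxes of $\lambda$, parametrized by $(a,j-1)$ with $0\le a\le \lambda_j-1$ and $1\le j\le l(\lambda)$, yields $\sum_{j=1}^{l(\lambda)}\left[\lambda_j(3\lambda_j+1)/2-j\lambda_j\right]$ after an elementary rearrangement.

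For the vertex contribution at $p_i$, the ideal $I(C_{ij})+I(C_{ik})$ is generated by monomials outside the $\lambda^{ij}$-cylinder or the $\lambda^{ik}$-cylinder, so $\cO_{C_{ij}\cap C_{ik},p_i}$ has as a basis exactly those monomials lying in \emph{both} cylinders. Applying inclusion--exclusion to the boxes in $[0,N]^3$ of the minimal 3D partition $\pi$ at $p_i$ (which by the first paragraph is precisely the scheme-theoretic union of the two cylinders) gives, for $N\gg 0$,
\[
\#(\pi\cap[0,N]^3)=(N+1)(|\lambda^{ij}|+|\lambda^{ik}|)-\mathrm{length}\,\cO_{C_{ij}\cap C_{ik}},
\]
so by the definition of renormalized volume $|(\lambda^{ij},\lambda^{ik},\emptyset)|=-\mathrm{length}\,\cO_{C_{ij}\cap C_{ik}}$. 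Substituting back into the Mayer--Vietoris identity reproduces exactly the claimed expression. The main conceptual step is the scheme-theoretic identification $C(\vec{\lambda})=C_{01}\cup C_{02}\cup C_{12}$, together with the verification that this union is already Cohen-Macaulay with the correct outgoing partitions; once this is established, the remaining steps are Riemann--Roch on $\PP^1$ and counting lattice points.
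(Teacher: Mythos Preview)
Your proof is correct. The paper's own proof is a one-line citation of \cite[Lemma~5]{mnop1} together with the observation that each $T$-invariant line has normal bundle $\cO_{\PP^1}(1)\oplus\cO_{\PP^1}(-3)$; that lemma gives the general vertex-plus-edge decomposition of $\chi(\cO_C)$ for any $T$-fixed curve in a smooth toric threefold. What you have written is essentially a self-contained derivation of that lemma in the special case at hand: you identify $C(\vec\lambda)$ with the scheme-theoretic union of the three thickened lines (this is exactly the statement that the minimal Cohen--Macaulay 3D partition with one empty leg is the union of the two remaining cylinders), run Mayer--Vietoris, and compute the edge and vertex pieces directly. The distributivity $(Y_1\cup Y_2)\cap Y_3=(Y_1\cap Y_3)\cup(Y_2\cap Y_3)$ is indeed valid for monomial ideals, and your inclusion--exclusion identification of $|(\lambda^{ij},\lambda^{ik},\emptyset)|$ with minus the intersection length is precisely why the renormalized volume appears in the MNOP formula. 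The direct-sum decomposition $\cO_{C_\lambda}=\bigoplus_{(a,b)\in\lambda}\cO_{\PP^1}(3a-b)$ is in fact genuine (not merely a filtration), since the transition functions are diagonal in the monomial basis; this is a pleasant feature of monomial thickenings of $\PP^1$. So your approach and the paper's agree in substance, but yours unpacks the black box.
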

\begin{proof}
Each torus fixed line $L_{ij}$ is isomorphic to $\IP^1$, and has normal bundle $\cO_{\IP^1}(1)\oplus\cO_{\IP^1}(-3)$. This follows from an elementary computation by applying \cite[Lemma 5]{mnop1}.
\end{proof}

The case when $X$ is local $\PP^1\times \PP^1$ is similar. Local $\PP^1\times \PP^1$ has four $T$-fixed points $p_0$, $p_1$, $p_2$ and $p_3$, and four $T$ invariant lines $L_{ij}$ connecting them. The torus fixed points are given similarly by tuples of box configurations at four points whose outgoing partitions are given by partitions along four invariant lines. We state the results.

\begin{prop}\label{prop:p1p1}
Let $X$ be local $\PP^1\times \PP^1$. Torus fixed points in $P_n(X,(d_1,d_2))$ are in one-to-one correspondence with tuples $(B_0, B_1, B_2, B_3)$ of four box configurations satisfying the rule $(\dagger)$ that for some quadruple $\vec{\lambda}=(\lambda^{01},\lambda^{12},\lambda^{23},\lambda^{30})$ of partitions with $(|\lambda^{01}|+|\lambda^{23}|, |\lambda^{12}|+|\lambda^{30}|)=(d_1,d_2)$, the outgoing partitions of $B_0, B_1, B_2,$ and $B_3$ are $(\lambda^{30},\lambda^{01},\emptyset)$, $(\lambda^{01},\lambda^{12},\emptyset)$, $(\lambda^{12},\lambda^{23},\emptyset)$ and $(\lambda^{23},\lambda^{30},\emptyset)$ respectively, and the sum of the lengths of $B_0, B_1, B_2,$ and $B_3$ is equal to $n-\chi(\cO_{C(\vec{\lambda})})$, where $C(\vec{\lambda})$ is the torus fixed curve on $\IP^1\times \IP^1$ defined by the partition $\lambda^{ij}$ along $T$-invariant line $L_{ij}$.
\end{prop}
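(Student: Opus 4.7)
The proof follows exactly the pattern of Proposition~\ref{prop:ptp2counting}, once we describe the toric data of local $\PP^1\times\PP^1$ and check that the argument in Section~\ref{toriccombinatorics} transfers without change. The plan is to proceed in three steps.

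First, I would identify the toric skeleton. Local $\PP^1\times\PP^1$ has four $T$-fixed points $p_0,p_1,p_2,p_3$ on the zero section, with $p_i$ and $p_{i+1}$ (indices mod $4$) joined by the compact invariant line $L_{i,i+1}$; two of these lines lie in the first $\PP^1$-direction (contributing to $d_1$) and two in the second (contributing to $d_2$). Since the normal direction to $\PP^1\times\PP^1$ in $X$ is non-compact, there are no compact invariant lines in the fiber direction at any $p_i$. As in the local $\PP^2$ case, this immediately forces the absence of type $\III$ boxes, so all $T$-fixed points of $P_n(X,(d_1,d_2))$ are isolated.

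Second, I would classify $T$-invariant Cohen--Macaulay curves $C\subset X$. Working in the affine chart around each $p_i$ and applying the monomial-ideal/outgoing-partition correspondence reviewed in Section~\ref{toriccombinatorics}, a $T$-invariant Cohen--Macaulay curve is specified by a partition along each compact invariant line, since the third outgoing direction at each $p_i$ is non-compact (hence empty). The matching condition at each $p_i$ of the two partitions on the two lines meeting there determines the assignment of outgoing partition triples $(\lambda^{30},\lambda^{01},\emptyset)$, $(\lambda^{01},\lambda^{12},\emptyset)$, $(\lambda^{12},\lambda^{23},\emptyset)$, $(\lambda^{23},\lambda^{30},\emptyset)$ at $p_0,p_1,p_2,p_3$ respectively. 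The degree of such a curve, computed by pushing forward to $\PP^1\times\PP^1$, is
\begin{equation*}
(|\lambda^{01}|+|\lambda^{23}|,\ |\lambda^{12}|+|\lambda^{30}|) = (d_1,d_2),
\end{equation*}
since the partitions on the two horizontal lines contribute to the first factor degree and those on the two vertical lines to the second.

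Third, I would analyze the cokernel. For a $T$-fixed stable pair $(\cF,s)$ with associated Cohen--Macaulay support $C(\vec\lambda)$, the exact sequence
\begin{equation*}
0\to \cO_{C(\vec\lambda)}\to \cF\to Q\to 0
\end{equation*}
exhibits $Q$ as zero-dimensional and $T$-fixed, hence supported at the $p_i$. At each $p_i$, Proposition~\ref{prop:ptequivalenttosheaf} and the discussion in Section~\ref{toriccombinatorics} identify the local contribution of $Q$ with a box configuration $B_i$ in $\I\cup\II$ satisfying the module-stability rule $(\dagger)$, whose outgoing partitions are exactly those of $C(\vec\lambda)$ at $p_i$. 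Taking Euler characteristics gives
\begin{equation*}
n=\chi(\cF)=\chi(\cO_{C(\vec\lambda)})+\sum_{i=0}^3 \mathrm{length}(B_i),
\end{equation*}
which is the claimed length condition. The construction is manifestly reversible: any quadruple of box configurations with matching outgoing partitions glues to a $T$-invariant subsheaf of the universal quotient, hence to a $T$-fixed stable pair with the prescribed invariants.

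There is no real obstacle beyond careful bookkeeping of the cyclic labelling of the four fixed points and the orientation of partitions at each vertex; the only content that is genuinely different from Proposition~\ref{prop:ptp2counting} is the four-vertex combinatorics and the splitting of the degree into the bidegree $(d_1,d_2)$ according to which lines are horizontal versus vertical. Everything else, in particular the absence of type $\III$ boxes and the equivalence between $T$-fixed stable pairs and box configurations, is a direct consequence of the general setup already established for toric Calabi--Yau threefolds built from toric surfaces.
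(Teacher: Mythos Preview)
Your proposal is correct and follows exactly the same approach as the paper, which simply remarks that the case is similar to Proposition~\ref{prop:ptp2counting} with four fixed points and four invariant lines, then states the result without further detail. Your three steps---identifying the toric skeleton with no compact fiber lines (hence no type $\III$ boxes), classifying $T$-invariant Cohen--Macaulay curves via matched partitions along the four edges with the bidegree splitting, and reading off the box configurations and length condition from the exact sequence $0\to\cO_{C(\vec\lambda)}\to\cF\to Q\to 0$---are precisely the ingredients of the paper's proof of the $\PP^2$ case, adapted to the four-vertex square.
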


\begin{lem}In Proposition \ref{prop:p1p1},
\begin{align*}
  \chi(\cO_{C(\vec{\lambda})})=& \sum_{\lambda\in \{\lambda^{01},\lambda^{12},\lambda^{23},\lambda^{30}\}}\left(\sum_{j=1}^{l(\lambda)}  \lambda_j^2\right) \\
  &+|(\lambda^{30},\lambda^{01},\emptyset)|+|(\lambda^{01},\lambda^{12},\emptyset)|+|(\lambda^{12},\lambda^{23},\emptyset)|+|(\lambda^{23},\lambda^{30},\emptyset)|.
\end{align*}
As before, the
two-dimensional partitions $\lambda$ are oriented so that $\lambda_1$ represents
the order of thickening of the associated line in the non-compact direction
normal to $\PP^1\times\PP^1$.
\end{lem}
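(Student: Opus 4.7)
The plan is to parallel the argument for local $\PP^2$ in the previous lemma: decompose $\chi(\cO_{C(\vec\lambda)})$ into an edge contribution summed over the four torus-invariant lines $L_{ij}$ and a vertex correction summed over the four fixed points, then compute each piece.

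First I would identify the normal bundles. Each $L_{ij}\cong\PP^1$ is a fiber of one of the two projections $\PP^1\times\PP^1\to\PP^1$, so its normal bundle inside $\PP^1\times\PP^1$ is trivial. Since $K_{\PP^1\times\PP^1}=\cO(-2,-2)$ restricts to $\cO_{L_{ij}}(-2)$ on each $L_{ij}$, we obtain $N_{L_{ij}/X}\cong\cO_{L_{ij}}\oplus\cO_{L_{ij}}(-2)$, via the same application of \cite[Lemma 5]{mnop1} used in the previous lemma.

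Next I would compute the edge contribution. Fix $L=L_{ij}$ and let $C_\lambda$ denote the thickening of $L$ by $\lambda=\lambda^{ij}$. Filter $\cO_{C_\lambda}$ by powers of the ideal of $L$; with the stated orientation convention (so that $\lambda_j$ measures thickening in the $K$-direction, i.e.\ in the $\cO_L(-2)$ summand), the associated graded piece indexed by a box $(a,b)\in\lambda$ is isomorphic as an $\cO_L$-module to $(N_1^\ast)^{\otimes a}\otimes(N_2^\ast)^{\otimes b}\cong\cO_L(2b)$. Riemann--Roch on $\PP^1$ then gives
\[
\chi(\cO_{C_\lambda})=\sum_{j=1}^{l(\lambda)}\sum_{b=0}^{\lambda_j-1}(2b+1)=\sum_{j=1}^{l(\lambda)}\lambda_j^2,
\]
reproducing the first sum in the statement. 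As a sanity check, the analogous computation with the $\PP^2$ normal bundle $\cO_L(1)\oplus\cO_L(-3)$ produces exactly the summand $\lambda_j(3\lambda_j+1)/2-j\lambda_j$ of the previous lemma, confirming the bookkeeping.

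Finally I would glue the thickened edges using the $T$-equivariant Čech/vertex machinery of MNOP and \cite{ptvertex}: the $T$-character of $\cO_{C(\vec\lambda)}$ is assembled as a sum of local contributions at each $p_i$, whose outgoing cross-sections along the two adjacent torus-invariant lines are the corresponding $\lambda$'s and whose outgoing cross-section along the non-compact $K$-direction is $\emptyset$. After the standard cancellation of the non-compact equivariant parameter the character specializes to the sum of the four edge Euler characteristics just computed, plus a renormalized volume $|(\pi^{v,1},\pi^{v,2},\emptyset)|$ at each vertex, which is exactly the claimed formula. The main obstacle will be the vertex bookkeeping — verifying that the local $T$-character at each of the four vertices specializes, after the pole cancellation along the $K$-direction, to the renormalized volume in the particular slot assignment of the statement (namely $(\lambda^{30},\lambda^{01},\emptyset)$ at $p_0$, and cyclic relabelings at $p_1,p_2,p_3$). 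This is a direct generalization of the corresponding step for local $\PP^2$ and is cleaner here because the non-compact direction always enters with empty outgoing partition, so no genuinely three-legged vertex ever appears.
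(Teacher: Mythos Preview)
Your proposal is correct and follows exactly the approach the paper intends: the paper does not supply an explicit proof for this lemma but treats it as the direct analogue of the preceding $\PP^2$ lemma, whose proof consists of identifying the normal bundle of each $T$-invariant line and invoking \cite[Lemma~5]{mnop1}. You have simply written out those details, correctly computing $N_{L_{ij}/X}\cong\cO\oplus\cO(-2)$ and the resulting edge sum $\sum_b(2b+1)=\lambda_j^2$, and then assembling the vertex corrections via the standard renormalized-volume bookkeeping; this is precisely what the paper leaves implicit.
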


\subsection{Virtual equivariant tangent obstruction theory}
\label{virteqto}

The virtual tangent space of $P_n(X,d)$ at a point corresponding to a pair $\cI^\bullet:\cO_X\stackrel{s}{\to}\cF$ is given by $$\mathcal{T}_{\cI^\bullet}=\Ext^1(\cI^\bullet,\cI^\bullet)-\Ext^2(\cI^\bullet,\cI^\bullet).$$

Consider
\[\chi(\cI^\bullet,\cI^\bullet)=\sum_{i=0}^3(-1)^i \Ext^i(\cI^\bullet,\cI^\bullet).\]
The $T$-action on $\Ext^0(\cI^\bullet,\cI^\bullet)\simeq \IC$ is trivial. By Serre duality, $\Ext^3(\cI^\bullet,\cI^\bullet)$ is the one dimensional $T$-representation with $T$-weight $\delta^{-1}$, where $\delta$ is the $T$-weight of Calabi-Yau form. Hence, in the representation ring of the torus $T$, we have
\[\mathcal{T}_{\cI^\bullet}= 1- \delta^{-1} - \chi(\cI^\bullet,\cI^\bullet).\]
It is enough to compute the representation of $\chi(\cI^\bullet,\cI^\bullet)$.

For each affine invariant open set $U_\alpha$, we denote the $T$-character of $\Gamma(U_\alpha,\cF)$ by $F_\alpha$. Let $U_{\alpha\beta}$ be the intersection of $U_\alpha$ and $U_\beta$. Then, after reordering the indices if necessary, the $T$-character $F_{\alpha\beta}$ of $\Gamma(U_{\alpha\beta},\cF)$ is of the form
$$F_{\alpha\beta}= \delta(t_1)F_{\alpha\beta}(t_2,t_3),$$
where $t_1$, $t_2$ and $t_3$ are $T$-weights of three coordinate axis of $U_\alpha$.
Here $\delta(t_1)$ is the formal delta function at $t_1=1$
\[
\delta(t_1)=\displaystyle\sum_{n=-\infty}^{\infty}t_1^n=\frac{1}{1-t_1}+\frac{t_1^{-1}}{1-t_1^{-1}}
\]
and $F_{\alpha\beta}(t_2,t_3)= \displaystyle\sum_{(k_2,k_3)\in \pi^{\alpha\beta}} t_2^{k_2} t_3^{k_3},$ where $\pi^{\alpha\beta}$ is corresponding outgoing partition. These can be easily obtained from the description of $T$-fixed stable pair in Section \ref{toriccombinatorics}.

In \cite{ptvertex}, the representation of $\chi(\cI^\bullet,\cI^\bullet)$ is computed in terms of $F_\alpha$ and $F_{\alpha\beta}$ via the local-to-global spectral sequence and \v{C}ech complex with respect to open cover $\{U_\alpha\}$. It can be easily seen that we only need contributions from $U_\alpha$ and $U_{\alpha\beta}$.

Define the bar operation
\[
Q\in \IZ((t_1,t_2,t_3)) \mapsto \overline{Q}\in \IZ((t_1,t_2,t_3))
\]
by $t_i\mapsto t_i^{-1}$ on variables.
Let
\[
G_\alpha=F_\alpha-\frac{\overline{F}_\alpha}{t_1t_2t_3}+F_\alpha\overline{F}_\alpha\frac{(1-t_1)(1-t_2)(1-t_3)}{t_1t_2t_3}.
\]
This is the contribution to the representation of $\chi(\cI^\bullet,\cI^\bullet)$ from open set $U_\alpha$. In general, $G_\alpha$ is not a finite Laurent polynomial. To get a finite vertex contribution, a part of the edge contribution needs to be subtracted.
Let
\[
G_{\alpha\beta}=-F_{\alpha\beta}-\frac{\overline{F}_{\alpha\beta}}{t_2t_3}+F_{\alpha\beta}\overline{F}_{\alpha\beta}\frac{(1-t_2)(1-t_3)}{t_2t_3}.
\]
Then the edge contribution from the open set $U_{\alpha\beta}$ is $\delta (t_1) G_{\alpha\beta}$. We define a vertex character
\[
V_\alpha = F_\alpha + \sum_{\beta_i} \frac{G_{\alpha\beta_i}}{1-t_i},
\]
where the summation runs over the three neighboring vertices $\beta_i$. We denote the remaining edge contribution by $E_{\alpha\beta}$. Then it is shown in \cite{ptvertex} that $V_\alpha$ and $E_{\alpha\beta}$ are finite Laurent polynomials. The former is called the vertex contribution and the latter the edge contribution.  Hence we obtain the $T$-character of $\mathcal{T}_{\cI^\bullet}$.

\begin{prop}
In the representation ring of the torus $T$, we have
\begin{equation}\label{eq:Trep}
\mathcal{T}_{\cI^\bullet}= \sum_{\alpha} V_\alpha +\sum_{\alpha\beta}E_{\alpha\beta}.
\end{equation}
\end{prop}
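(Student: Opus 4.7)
The plan is to compute $\mathcal{T}_{\cI^\bullet}=1-\delta^{-1}-\chi(\cI^\bullet,\cI^\bullet)$ by a local-to-global calculation on the toric open cover $\{U_\alpha\}$ and then rearrange the result so that all characters that appear are finite Laurent polynomials. First, I would use Serre duality on $X$ together with $\Ext^0(\cI^\bullet,\cI^\bullet)\simeq \IC$ to obtain the initial identity $\mathcal{T}_{\cI^\bullet}=1-\delta^{-1}-\chi(\cI^\bullet,\cI^\bullet)$, which has already been recorded; this reduces the problem to computing the equivariant character of $R\mathrm{Hom}(\cI^\bullet,\cI^\bullet)$.

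Next I would invoke the local-to-global spectral sequence (or equivalently the \v{C}ech complex) for $R\mathcal{H}om(\cI^\bullet,\cI^\bullet)$ with respect to $\{U_\alpha\}$. For a toric Calabi-Yau threefold that is the total space of the canonical bundle on a toric surface, every $T$-invariant compact curve joins exactly two fixed points and there are no triple intersections of distinguished affine charts supporting nontrivial characters, so only vertex ($\chi_\alpha$) and edge ($\chi_{\alpha\beta}$) terms survive. On each $U_\alpha$ the $T$-character of $\cF$ is $F_\alpha$, and taking a free resolution of the two-term complex $\cO_X\to\cF$ and contracting with its dual (which introduces the Calabi-Yau weight $\delta=t_1t_2t_3$) yields the claimed local formula $\chi_\alpha=G_\alpha$. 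The analogous computation on $U_{\alpha\beta}\simeq \IC^*\times \IC^2$, using $F_{\alpha\beta}=\delta(t_1)F_{\alpha\beta}(t_2,t_3)$, produces $\chi_{\alpha\beta}=\delta(t_1)G_{\alpha\beta}$. At this point one formally has
\begin{equation*}
\chi(\cI^\bullet,\cI^\bullet)=\sum_\alpha G_\alpha - \sum_{\alpha\beta}\delta(t_1)G_{\alpha\beta},
\end{equation*}
but every individual term is an infinite Laurent series because $F_\alpha$ encodes three infinite cylinders and $\delta(t_1)$ is a formal delta at $t_1=1$.

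The decisive step is the \emph{redistribution}: split $\delta(t_1)=(1-t_1)^{-1}+t_1^{-1}(1-t_1^{-1})^{-1}$ into the two half-delta pieces and push each of them onto the vertex of $U_{\alpha\beta}$ to which its pole is directed. Absorbing the pushed pieces into the vertex contributions produces the definition $V_\alpha=F_\alpha+\sum_{\beta_i}G_{\alpha\beta_i}/(1-t_i)$ recorded above (with the three sums running over the three coordinate directions at $p_\alpha$), and the residual unassigned part along each edge is $E_{\alpha\beta}$. So by construction $\sum_\alpha V_\alpha+\sum_{\alpha\beta}E_{\alpha\beta}$ equals the formal expression for $1-\delta^{-1}-\chi(\cI^\bullet,\cI^\bullet)=\mathcal{T}_{\cI^\bullet}$.

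The main obstacle, and the only content left after bookkeeping, is verifying that $V_\alpha$ and $E_{\alpha\beta}$ are honest finite Laurent polynomials rather than formal series. For $V_\alpha$ the tail of $F_\alpha$ in the $t_i$ direction is asymptotically a copy of $F_{\alpha\beta_i}(t_j,t_k)$ tensored with $(1-t_i)^{-1}$, so subtracting $G_{\alpha\beta_i}/(1-t_i)$ cancels the divergent cylindrical part direction by direction; what remains is the finite "corner" at the fixed point. For $E_{\alpha\beta}$, after the two half-deltas have been removed, the character is controlled by the normal bundle $\cO(a)\oplus\cO(b)$ of the compact line $L_{\alpha\beta}\simeq\PP^1$; the Calabi-Yau condition $a+b=-2$ forces a cancellation that leaves a finite expression in the edge partition data, exactly as in \cite{ptvertex}. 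Once finiteness is established direction by direction, the equality of characters is immediate from the construction above, and the proposition follows.
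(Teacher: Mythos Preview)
Your proposal is correct and follows essentially the same approach as the paper, which in turn is a summary of the argument in \cite{ptvertex}: compute $\chi(\cI^\bullet,\cI^\bullet)$ via the local-to-global spectral sequence and the \v{C}ech cover $\{U_\alpha\}$, obtain the infinite formal characters $G_\alpha$ and $\delta(t_1)G_{\alpha\beta}$, then redistribute the half-deltas to define the finite vertex and edge characters $V_\alpha$ and $E_{\alpha\beta}$. The paper does not give more detail than this and simply cites \cite{ptvertex} for the finiteness of $V_\alpha$ and $E_{\alpha\beta}$; your sketch of why the cylindrical tails cancel direction by direction, and why the Calabi--Yau condition $a+b=-2$ on the normal bundle forces the edge remainder to be finite, is exactly the content of that reference.
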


We restrict to a subtorus $\{\delta=1\}\subset T$ preserving the Calabi-Yau form. Then, \eqref{eq:Trep} is of the form $\mathcal{T}_{\cI^\bullet}= Tan_{\cI^\bullet} - Obs_{\cI^\bullet}$, where $Tan_{\cI^\bullet}$ and $Obs_{\cI^\bullet}$ are $T$-representations of $\Ext^1(\cI^\bullet,\cI^\bullet)$ and $\Ext^2(\cI^\bullet,\cI^\bullet)$ respectively and are dual to each other. In the next section, we use $Tan_{\cI^\bullet}$ to define the {\em virtual Bialynicki-Birula decomposition\/} of $P_n(X,d)$.

We have written a Mathematica program to compute the equivariant obstruction
theory of the $P_n(X,d)$.

\section{Refinement of the Pandharipande-Thomas invariants}
\label{refinement}

In this section, we define the refined PT invariants, first explaining the
motivic viewpoint, and then the equivariant viewpoint.  In
Section~\ref{refinedtobps} we will see that applying the procedure of
Section~\ref{sec:kkv} to the refined invariants leads to a geometric
calculation of the $\su\times\su$ BPS invariants, assuming the equivariant
product formula (\ref{productrefined}) for the refined invariants.
Alternatively, the verification of
relationships between refined PT invariants using wallcrossing methods can
lead, at least in principle, to a proof of the equivariant product formula.

\subsection{Virtual Bialynicki-Birula decomposition}

We first recall the Bialynicki-Birula decomposition.  Let $M$ be a smooth
$n$-dimensional projective variety over $\CC$ admitting a $\CC^*$ action
with finitely many fixed points.
For each $p\in M^{\CC^*}$, there
is an induced action of $\CC^*$ on $T_pM$, leading to a decomposition into
eigenspaces of the characters $\chi$ of $\CC^*$
\begin{equation}
T_pM=\bigoplus_{\chi\in X(\CC^*)}T_p^\chi .
\label{eigenspaces}
\end{equation}
Put $T_p^+=\oplus_{\chi>0}T_p^\chi$ and $T_p^-=\oplus_{\chi<0}T_p^\chi$ and let
$d_p^+=\dim T_p^+$, $d_p^-=\dim T_p^-$.  Zero is not an eigencharacter since
$p$ is an isolated fixed point.

Now let
\begin{equation}
U_p=\left\{x\in M\mid \lim_{t\to0}t\cdot x =p\right\}.
\label{BBcell}
\end{equation}
Then $U_p$ is a cell of dimension $d^+_p$. The collection of all cells $U_p$
constitutes a cell decomposition, the Bialynicki-Birula decomposition.  It
follows that the absolute motive $[M]$ of $M$ is given by
\begin{equation}
[M]=\sum_{p\in M^{\CC^*}}\LL^{d^+_p},
\label{bbabsmotive}
\end{equation}
a polynomial in the absolute motive $\LL=[\CC]$ of $\CC$.

Let's now endow $M$ with its canonical symmetric obstruction theory, with
obstruction bundle $T^*M$.  We can view this as the symmetric obstruction
theory associated to the trivial superpotential $W=0$.  Then the construction
of \cite{bbs} associates to $(M,W=0)$ a virtual motive\footnote{Our sign
convention is chosen to match both the conventions of physics and the work
\cite{no} of Nekrasov and Okounkov to be discussion in Section~\ref{eqindex}.}
\begin{equation}
[M]^{\mathrm{vir}}=\left(-\LL^{-1/2}\right)^n[M]=\left(-\LL^{-1/2}\right)^n\sum_{p\in M^{\CC^*}}\LL^{d^+_p}
=\sum_{p\in M^{\CC^*}}\left(-\LL^{-1/2}\right)^{d^+_p-d^-_p}.
\label{bbvirmotive}
\end{equation}

\noindent
{\bf Remark}.  (\ref{bbvirmotive}) shows that $[M]^{\mathrm{vir}}$ is independent
of the choice of $\CC^*$ action on $M$ with finitely many fixed points.  It is
an open question as to whether the virtual motive depends on the choice of a
superpotential on a smooth variety containing $M$ which induces the same
symmetric obstruction theory.  It is shown in \cite{bbs} that the relative
virtual motive is $(-\LL^{-1/2})^n$, independent of choices.  Furthermore,
the argument comes close to proving the desired independence for the virtual
motive itself.

We now return to the situation where $X$ is a
local toric Calabi-Yau threefold.  Consider the PT moduli space $P_n(X,\beta)$
and let $P_n(X,\beta)^T$ be its fixed-point locus.  In
Section~\ref{toriccombinatorics}, we saw that $P_n(X,\beta)^T$ is
isolated.  Pick a 1-parameter subgroup $\CC^*\subset T$ sufficiently generic so
that $P_n(X,\beta)^{\CC^*}=P_n(X,\beta)^{T}$.  If $P_n(X,\beta)$ is smooth,
there is an associated
Bialynicki-Birula decomposition, leading to a formula for the absolute motive
$[P_n(X,\beta)]$ as a polynomial in $\LL$, independent of the choice of
$\CC^*\subset T$.
The virtual
motive associated with $(P_n(X,\beta),\ W=0)$ is then
\begin{equation}
\label{smoothvirmotive}
[P_n(X,\beta)]^\mathrm{vir}=\sum_{Z\in P_n(X,\beta)^{\CC^*}}
\left(-\LL^{-1/2}\right)^{d^+_Z-d^-_Z},
\end{equation}
a Laurent polynomial in $\LL^{1/2}$, and independent of the choice of
$\CC^*\subset T$.

We will associate to any $P_n(X,\beta)$, whether smooth or not, a {\em virtual
Bialynicki-Birula decomposition\/}, leading to a virtual Bialynicki-Birula
motive $[P_n(X,\beta)]^{vir}$
given by (\ref{smoothvirmotive}).

In the smooth case, each term in the
virtual motive is equal to the
absolute motive of
the corresponding cell in the Bialynicki-Birula decomposition times a power
of $-\LL^{-1/2}$.  In general, the exponent of $\LL^{-1/2}$
is the dimension of the tangent space to
$P_n(X,\beta)$ at $Z$.  Since $P_n(X,\beta)$ is not assumed smooth, the
exponents can be different for different fixed points.

We will see in Section~\ref{eqindex} that $[P_n(X,\beta)]^{vir}$ is independent
of the choice of $\CC^*\subset T_0\subset T$, where $T_0\subset T$ is the
subtorus which preserves the holomorphic 3-form.  We see this indirectly,
by translating from the motivic to the
equivariant setting and using \cite{no}.

We remark that this sort of calculation was first attempted in \cite{DS}.

\bigskip\noindent
{\bf Remark}.  A natural question to ask is whether there is a
systematic way to define $P_n(X,\beta)$ as the
critical point locus of a superpotential $W$ on a smooth space.  This would
lead to an intrinsic definition of the virtual motive $[P_n(X,\beta)]^{vir}$
and would make it feasible to find a motivic proof of its independence of the
choice of $\CC^*\subset T_0$.

\subsection{Nekrasov and Okounkov's equivariant virtual index}
\label{eqindex}

In this section, we review the results of \cite{no}, which will allow
us to present our results in a more rigorous manner.

We begin with a quick
discussion of the virtual holomorphic Euler characteristic and virtual
Riemann-Roch.  A reference for this part is \cite{fg}.

Let $M$ admit a perfect obstruction theory.  Then $M$ has a virtual
structure sheaf $\mathcal{O}_M^\mathrm{vir}$, a class in $K^0(M)$.  Given
any class $V\in K^0(M)$, its virtual holomorphic euler characteristic
can be defined as
\begin{equation}
\chi^\mathrm{vir}(M,V):=\chi(M,V\otimes\mathcal{O}_M^\mathrm{vir}).
\label{virholeuler}
\end{equation}

The right hand side of (\ref{virholeuler}) can be computed by Riemann-Roch on
$M$, which can be recast as virtual Riemann-Roch on $M$ taken together with
its obstruction theory.

Now suppose that $M$ admits a perfect obstruction theory, so that in particular
its virtual dimension is~0 and $M$ has a Donaldson-Thomas-type invariant which
we write as
$\#^\mathrm{vir}(M)$.  In this situation, the virtual canonical bundle $K_M^{\mathrm{vir}}$
has a square root $(K_M^{\mathrm{vir}})^{1/2}$
(for example, if $M$ is
smooth, then $(K_M^{\mathrm{vir}})^{1/2}$ is just the ordinary canonical
bundle of $M$).  Then the {\em virtual index\/} of $M$ is
\begin{equation}
\chi^\mathrm{vir}(M,(K_M^{\mathrm{vir}})^{1/2})=\#^\mathrm{vir}(M).
\label{virchivir}
\end{equation}

Now suppose that $M$ and its obstruction theory $E^\bullet$
are $G$-equivariant, with $(E^\bullet)^\vee\simeq E^\bullet\otimes\delta$,
for some character $\delta$ of $G$.  While $K_M^{\mathrm{vir}}$ inherits
inherits a $G$ action, $(K_M^{\mathrm{vir}})^{1/2}$ need not and so we may need
to go to a double cover $\tilde{G}$ of $G$ to define an equivariant virtual
index, a representation of $\tilde{G}$ rather than just a number.  The
fundamental result is that
\begin{equation}
\label{equivariantindex}
\chi^{\mathrm{vir}}(M,(K_M^{\mathrm{vir}})^{1/2})\in \ZZ[\delta^{1/2},\delta^{-1/2}],
\end{equation}
where $\delta^{1/2}$ is a character of $\tilde{G}$ which is a
square root of $\delta$.

Now suppose that we have a 1-parameter subgroup $\sigma:\CC^*\hookrightarrow G$
contained in the kernel of $\delta$ such that $M^{\CC^*}$ is finite.  Then
localization is particularly simple.

\begin{prop}
\begin{equation}
\chi^{\mathrm{vir}}(M,(K_M^{\mathrm{vir}})^{1/2})=\sum_{p\in M^{\CC^*}}
\left(-\delta^{1/2}\right)^{d^+_p-d^-_p}.
\label{locindex}
\end{equation}
\label{eqloc}
\end{prop}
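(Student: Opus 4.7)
The plan is to combine virtual equivariant K-theoretic localization with the symmetric structure of the obstruction theory, then reduce to a single limit in the $\CC^*$-parameter. First I would rewrite the integrand in terms of the symmetrized K-theoretic Euler class $\hat a$, defined on a virtual bundle $V$ with characters $L_j$ by $\hat a(V)=\prod_j(L_j^{1/2}-L_j^{-1/2})$ and extended multiplicatively. The elementary identity $\Lambda^\bullet V^\vee=\hat a(V)(\det V^\vee)^{1/2}$ together with $(K_M^{\mathrm{vir}})^{1/2}|_p=(\det T^{\mathrm{vir}}_p)^{-1/2}$ shows that the local contribution at any fixed point reduces to $1/\hat a(T^{\mathrm{vir}}_p)$. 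Applied to the virtual structure sheaf twisted by $(K_M^{\mathrm{vir}})^{1/2}$, K-theoretic virtual localization (in the spirit of Graber--Pandharipande and Fantechi--G\"ottsche, extended to symmetric obstruction theories) then produces
\[
\chi^{\mathrm{vir}}(M,(K_M^{\mathrm{vir}})^{1/2})\;=\;\sum_{p\in M^{\CC^*}}\frac{1}{\hat a(T^{\mathrm{vir}}_p)},
\]
regarded as an element of the representation ring of the double cover $\tilde G$ on which $(K_M^{\mathrm{vir}})^{1/2}$ is well-defined.

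Second, I would use the symmetry $(E^\bullet)^\vee\simeq E^\bullet\otimes\delta$ to unpack $\hat a(T^{\mathrm{vir}}_p)$ at each fixed point. If $L_1,\dots,L_n$ are the $G$-characters of $T^0_p$, the symmetry forces those of $T^1_p$ to be $L_1^{-1}\delta,\dots,L_n^{-1}\delta$, so multiplicativity of $\hat a$ gives
\[
\hat a(T^{\mathrm{vir}}_p)\;=\;\prod_{j=1}^n\frac{L_j^{1/2}-L_j^{-1/2}}{L_j^{-1/2}\delta^{1/2}-L_j^{1/2}\delta^{-1/2}}\;=\;\prod_{j=1}^n\frac{-\delta^{1/2}(L_j-1)}{L_j-\delta}.
\]
Parametrizing $L_j=t^{a_j}\delta^{b_j}$ with $t$ the coordinate on $\sigma(\CC^*)$, the hypothesis that $p$ is isolated forces every $a_j\neq 0$, so for generic $\delta$ each factor is regular at $t=0$.

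Finally I would invoke the integrality statement \eqref{equivariantindex}: the full sum lies in $\ZZ[\delta^{\pm 1/2}]$, and is therefore \emph{constant} as a function of $t$. It consequently equals its $t\to 0$ limit, which may be computed factor by factor because each factor has a well-defined limit. A direct calculation of the limit $t\to0$ of each factor of $1/\hat a(T^{\mathrm{vir}}_p)$ gives $-\delta^{1/2}$ when $a_j>0$ and $-\delta^{-1/2}$ when $a_j<0$; multiplying and using $(-1)^{d^+_p+d^-_p}=(-1)^{d^+_p-d^-_p}$ yields
\[
\lim_{t\to 0}\frac{1}{\hat a(T^{\mathrm{vir}}_p)}\;=\;(-1)^{d^+_p-d^-_p}\,\delta^{(d^+_p-d^-_p)/2}\;=\;(-\delta^{1/2})^{d^+_p-d^-_p},
\]
which is the proposition after summing over $p$.

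The main obstacle will be the careful setup of the K-theoretic virtual localization formula in the symmetric-obstruction-theory setting, together with the verification that $(K_M^{\mathrm{vir}})^{1/2}$ makes equivariant sense after passage to the double cover $\tilde G$ and that the integrality statement survives this passage. Once those foundations are in place, all that remains is the rational-function simplification made transparent by the $\hat a$ normalization and the single $t\to 0$ limit made legitimate by constancy in $t$.
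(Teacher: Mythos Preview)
The paper does not supply its own proof of this proposition; it is stated as part of a review of results attributed to Nekrasov and Okounkov \cite{no}, which was still ``in preparation'' at the time of writing. Your approach is correct and is precisely the rigidity argument that appears in the subsequently published account by Nekrasov and Okounkov (\emph{Membranes and sheaves}): rewrite the localized contribution as $1/\hat a(T^{\mathrm{vir}}_p)$, invoke the integrality \eqref{equivariantindex} to conclude the sum is constant in the auxiliary $\CC^*$-parameter, and compute the limit $t\to 0$ term by term.

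One small remark on presentation: your parametrization $L_j=t^{a_j}\delta^{b_j}$ implicitly treats the character lattice as having rank two, whereas $G$ may be larger. This is harmless, since only the $\CC^*$-weight $a_j$ enters the $t\to 0$ limit, and the residual dependence on characters other than $\delta$ must cancel in the sum by \eqref{equivariantindex}; but it would be cleaner to say $L_j=t^{a_j}\cdot(\text{character of }G/\sigma(\CC^*))$ and observe that the limiting factor is $-\delta^{\pm 1/2}$ regardless of the second piece. Your acknowledgment that the foundations (equivariant existence of $(K_M^{\mathrm{vir}})^{1/2}$ on passing to $\tilde G$, and K-theoretic virtual localization in this setting) require care is apt; these are exactly the points the cited reference supplies.
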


Note that (\ref{locindex}) is precisely the same as (\ref{bbvirmotive}), with
$\delta$ in place of $\LL$.  Since the equivariant virtual index has been
defined without a choice of 1~parameter subgroup, we see that the right hand
side of (\ref{eqloc}) is independent of the choice of 1-parameter subgroup
$\sigma$ contained in the kernel of $\delta$.  Thus the virtual
BB motive is
independent of the choice of $\sigma$ as well.

\smallskip
We now apply this to the situation of the PT invariants of a local toric
Calabi-Yau threefold.  Since each $P_n(X,\beta)$ is compact, the theory of
\cite{no} applies.  We have $G=T=(\mathrm{\CC^*})^3$ and we let $\delta$ be
the character of the natural $T$-action on the $T$-invariant holomorphic
3-form on $X$ as in Section~\ref{virteqto}.
Then the PT obstruction theory $E^\bullet$ on $P_n(X,\beta)$
satisfies $(E^\bullet)^\vee\simeq E^\bullet\otimes\delta$.  We can therefore
define and compute the equivariant index for PT invariants by the results of
this section.  This index, the refined PT invariant,
will be denoted by $P^\mathrm{r}_{n,\beta}(X)\in\ZZ[\delta^{1/2},\delta^{1/2}]
\simeq\ZZ[\LL^{1/2},\LL^{-1/2}]$.

\subsection{Geometric calculation of refined PT-invariants}
\label{geometricrefined}

The calculation of the refined PT invariants is now straightforward.
We choose a generic 1-parameter subgroup $\CC^*\subset T_0$ and apply
localization using the results of Section~\ref{pttoric}.  The refined
invariants are then given by Proposition~\ref{eqloc} in the
equivariant language or (\ref{bbvirmotive}) in the motivic language.
We have implemented these calculations on a computer.

\section{Calculation of the BPS invariants from the refined PT-invariants}
\label{refinedtobps}

The basic idea is to repeat the argument of Section~\ref{sec:kkv}
equivariantly, using (\ref{productrefined}) in place of (\ref{productGV}).
For clarity, we rewrite (\ref{productrefined}) here in more geometric
language.
\begin{equation}
Z^r_{PT}=\prod_{\beta,j_L,j_R}\prod_{m_{L/R}=-j_{L/R}}^{j_{L/R}}
\prod_{m=1}^\infty\prod_{j=0}^{m-1} \left(1 - \LL^{-m/2 + 1/2+j-m_R}
(-q)^{m-2m_L}Q^\beta\right)^{(-1)^{2(j_L+j_R)}N^\beta_{j_L,j_R}},
\label{motivicproduct}
\end{equation}
where we have made the changes of variables
\[
q_L^{1/2}\to -q^{-1},\ q_R\to \LL^{-1},\ e^{\epsilon_1}\to \LL^{-1/2}\left(-q\right),\ e^{\epsilon_2}\to \LL^{1/2}\left(-q\right)
\]
starting from (\ref{productrefined}).  We have also changed the index variables by
identifying $m_1+m_2$ and $m_1-m_2$ in (\ref{productrefined}) with $m+1$ and $m+1-2j$,
respectively.

Clearly the $\su\times\su$ BPS invariants $N^\beta_{j_L,j_R}$
together with (\ref{motivicproduct}) determine the refined PT invariants.  In
this section, we show how to reverse the procedure, determining the
$\su\times\su$ BPS invariants from the knowledge of the refined PT invariants
and (\ref{productrefined}).

\smallskip\noindent
{\em Claim:\/} The $N^\beta_{j_L,j_R}$ are uniquely determined from the refined PT
invariants and (\ref{motivicproduct}).

For convenience, we put
\[
[j_R]_\LL:=\LL^{-j_R}+\LL^{-j_R+1}+\ldots+\LL^{j_R-1}+\LL^{j_R}.
\]
Note that the $\{[j_R]_\LL\}$ form an additive basis for the vector space of Laurent
polynomials in $\LL$ and $\LL^{-1}$ which are symmetric under the $\ZZ_2$
symmetry $\LL\leftrightarrow\LL^{-1}$.  The refined PT invariants always respect
this symmetry.
Furthermore, since
$[j_R]_\LL$ is just the character of the $[j_R]$ representation of $\su$,
the map $[j_R]\mapsto[j_R]_\LL$
from the representation ring of $\su$ to the ring of polynomials
$\ZZ[\LL^{1/2},\LL^{-1/2}]$ is a ring homomorphism.  This allows one to
easily multiply the expressions $[j_R]_\LL$.  For example, we have
\[
\left[j_R\right]_\LL\left[\frac12\right]_\LL=
\left[j_R+\frac12\right]_\LL+\left[j_R-\frac12\right]_\LL
\]
by the analogous identity in the representation ring of $\su$.

We prove the claim by induction on $\beta$ and $k$.  We start from small $\beta$,
choosing any ordering of the $\beta$ refining
the partial ordering
$\beta_1\ge\beta_2$ if $\beta_1-\beta_2$ is effective, and proceed inductively
using this ordering of the $\beta$.

The for fixed $\beta$, we look at which factors in (\ref{motivicproduct}) can contribute to
$P^\mathrm{r}_{k,\beta}$.  We have $k=m-2m_L$.  The smallest value
of $k$ that can occur is $k=1-p_a$, for $m=1$ and $m_L=p_a/2$.
We then compare coefficients of $q^{1-p_a}Q^\beta$ in
(\ref{motivicproduct}) and get
\begin{equation}
PT^r_{1-p_a,\beta}=\sum_{j_R}\left(-1\right)^{2j_R}N^\beta_{p_a/2,j_R}[j_R]_{\LL},
\label{topspinrefined}
\end{equation}
where we have combined the relevant terms in the sum over $m_R$ in
\ref{motivicproduct} to get the factor of $[j_R]_\LL$ in
(\ref{topspinrefined}).
After rewriting $PT^r_{1-p_a,\beta}$ in the $[j_R]_\LL$ basis, the $N^\beta_{p_a/2,j_R}$
are read off immediately from (\ref{topspinrefined}).

Continuing with this fixed $\beta$, we now proceed with a downward induction on $j_L$.
Suppose we have found $N^{\beta'}_{j_L,j_R}$ for all $\beta'<\beta$ and for
$\beta'=\beta$ and all $j_L >J_L$.  We put $k=1-2J_L$ in (\ref{motivicproduct})
and see that
\begin{equation}
PT^r_{1-2J_L,\beta}=\sum_{j_R}\left(-1\right)^{2j_R}N^\beta_{J_L,j_R}[j_r]_{\LL}+S,
\label{allspinrefined}
\end{equation}
where the omitted terms $S$ only depend on the
$N^{\beta'}_{j_L,j_R}$ for all $\beta'<\beta$ and for
$\beta'=\beta$ and all $j_L >J_L$, which are known by the inductive procedure.  After
substituting in these known quantities and expressing $PT^r_{1-2J_L,\beta}$ in the
$[j_R]_\LL$ basis, the $N^\beta_{J_L,j_R}$
are read off immediately from (\ref{allspinrefined}).

\smallskip\noindent
{\bf Remark.}  It is not immediately obvious from the geometric procedure why the
$N^\beta_{j_L,j_R}$ should be nonnegative integers.

\smallskip
For local $\PP^1$,  it is well known that the only nonvanishing BPS invariant
is $N^1_{0,0}=1$.  In that case, (\ref{motivicproduct}) simplifies to
the motivic product formula of \cite{mmns}.

By the computer program, we have checked \eqref{motivicproduct} for local $\PP^2$ when
\begin{align*}
  d=3 \text{ and } n\le 10, \\
  d=4 \text{ and } n\le 9,\\
  d=5 \text{ and } n\le 1,
\end{align*}
and for local $\PP^1\times\PP^1$ for degree $(d_1,d_2)$ with $d_1+d_2\le 5$ and $n\le 6$.

\bigskip
We now take up the case where $X$ is
local $\IP^2$ and show how the refined PT invariants can be
used to deduce the BPS invariants.

\subsection{Refined invariants for local $\mathbb{P}^2$}

We now illustrate the method for low degree in $\PP^2$.
For curves of degree $d$, we have $p_a=p_a(d)=(d-1)(d-2)/2$. We will
assume that $n^g_{d}=0$ for $g>p_a(d)$.  As
explained in Section~\ref{ptinvariants},
the PT-moduli spaces will be
equal to the relative Hilbert schemes in the cases discussed below, so the
PT invariants can be calculated by hand.  We will also repeatedly use
the fact that
the moduli space of curves of degree $d$ on $\PP^2$ is
isomorphic to $\PP^{(d^2+3d)/2}$, with virtual Bialynicki-Birula
decomposition or equivariant index
$[(d^2+3d)/4]_\LL$.

\bigskip\noindent
$d=1$.  We have $p_a=0$, so we get from (\ref{topspinrefined}),
an application of
(\ref{smoothvirmotive}) and either $P_1(X,1)\simeq\PP^2$ or the
computation of Section~\ref{pttoric}
\[
\left[1\right]_\LL=P^{\mathrm{r}}_{1,1}=\sum_{j_R}(-1)^{2j_R}N^1_{0,j_R}[j_R]_\LL,
\]
which leads immediately to
\[
N^1_{0,j_R}=\left\{
\begin{array}{cl}
1 & j_R=1\\
0 & {\rm otherwise}
\end{array}\right. ,
\]
in agreement with the B-model result.

\bigskip\noindent
$d=2$.  We have $p_a=0$, so we get from (\ref{topspinrefined}), an
application of
(\ref{smoothvirmotive}), and either $P_1(X,2)\simeq\PP^5$ or the computation of Section~\ref{pttoric}
\[
-\left[\frac52\right]_\LL=P^{\mathrm{r}}_{1,2}=\sum_{j_R}\left(-1\right)^{2j_R}N^2_{0,j_R}[j_R]_\LL,
\]
which leads immediately to
\[
N^2_{0,j_R}=\left\{
\begin{array}{cl}
1 & j_R=5/2\\
0 & {\rm otherwise}
\end{array}\right.
\]
in agreement with the B-model results.

\bigskip\noindent
$d=3$.  We have $p_a=1$, so we get from (\ref{topspinrefined}),
an application of
(\ref{smoothvirmotive}), and either $P_0(X,3)\simeq\PP^9$ or the computation of Section~\ref{pttoric}
\[
-\left[\frac92\right]_\LL=P^{\mathrm{r}}_{0,3}=\sum_{j_R}\left(-1\right)^{2j_R}
N^3_{1/2,j_R}[j_R]_\LL,
\]
which leads immediately to
\begin{equation}
N^3_{1/2,j_R}=\left\{
\begin{array}{cl}
1 & j_R=9/2\\
0 & {\rm otherwise}
\end{array}\right. .
\label{d3spin12}
\end{equation}
Next, expanding the coefficient of $qQ^3$ in (\ref{motivicproduct}) gives for
(\ref{allspinrefined})
\begin{equation}
[5]_\LL+[4]_\LL+[3]_\LL=P^{\mathrm{r}}_{1,3}=
\sum_{j_R}\left(\left(-1\right)^{2j_R}N^3_{0,j_R}[j_R]_\LL+
\left(-1\right)^{2j_R+1}N^3_{1/2,j_R}[1/2]_\LL[j_R]_\LL\right),
\label{d3chi1first}
\end{equation}
where the factor of $[1/2]_\LL$ comes from combining the terms of the sum
over $m_L$ in (\ref{motivicproduct}) when $j_L=1/2$.
The left hand side of (\ref{d3chi1first}) arises since $P_1(X,3)$ is a
$\PP^8$ bundle over $\PP^2$ and $[4][1]=[5]+[4]+[3]$, or by the computations of
Section~\ref{pttoric}.  Then (\ref{d3chi1first})
simplifies to
\begin{equation}
[5]_\LL+[4]_\LL+[3]_\LL=
\sum_{j_R}\left(-1\right)^{2j_R}N^3_{0,j_R}[j_R]_\LL+
\left[\frac12\right]_\LL\left[\frac92\right]_\LL,
\label{d3chi1}
\end{equation}
where we have used (\ref{d3spin12}) to get (\ref{d3chi1}).  Since
$[1/2][9/2]=[5]+[4]$, we conclude that
\begin{equation}
N^3_{0,j_R}=\left\{
\begin{array}{cl}
1 & j_R=3\\
0 & {\rm otherwise}
\end{array}\right. ,
\label{d3spin0}
\end{equation}
in agreement with the B-model results.

Note that the last equality in (\ref{d3chi1}) agrees with (\ref{allspinrefined}) with
$S=[1/2]_\LL[9/2]_\LL$ (this last expression only requiring the knowledge of
$N^3_{1/2,j_R}$ and not any other $N^\beta_{j_L,j_R}$).

The higher degrees are done in essentially the same way.
For degrees 4 and 5 as with degree
3, the term $S$ in the proof of the claim can actually
be written as a sum of products
of various $[j_R]_\LL$.  In Section~\ref{kkvrefined} we will explain this
observation by
using a refinement of the KKV method.  In general, the calculation can be
implemented on a computer without the need for this algebraic simplification.
In general, the refined PT invariants can be computed algorithmically
by the torus localization method of Section~\ref{pttoric}.

\subsection{Refined invariants for local $\PP^1\times\PP^1$}
\label{p1p1refined}

For curves of bidegree $(d_1,d_2)$, we have $p_a=p_a(d_1,d_2)=(d_1-1)(d_2-1)$. We will
assume that $n^g_{d_1,d_2}=0$ for $g>p_a(d_1,d_2)$.  As
explained in Section~\ref{ptinvariants},
the PT-moduli spaces will be
equal to the relative Hilbert schemes in the cases discussed below, so the
PT invariants can be calculated by hand.  We will also repeatedly use that
the moduli space of curves of bidegree $(d_1,d_2)$ on $\PP^1\times\PP^1$ is
isomorphic to $\PP^{(d_1+1)(d_2+1)-1}$, with virtual Bialynicki-Birula
decomposition or equivariant index
$[((d_1+1)(d_2+1)-1)/2]_\LL$.

We will find it convenient to rewrite $Q^\beta$ with $\beta=(d_1,d_2)$ as
$Q_1^{d_1}Q_2^{d_2}$.

\bigskip\noindent
$(d_1,d_2)=(0,1)$.  We have $p_a=0$, so we get from (\ref{topspinrefined}) and an application of
(\ref{smoothvirmotive}) or the computation of Section~\ref{pttoric}
\[
-\left[\frac12\right]_\LL=P^{\mathrm{r}}_{1,(0,1)}=\sum_{j_R}(-1)^{2j_R}
N^{0,1}_{0,j_R}[j_R]_\LL,
\]
which leads immediately to
\[
N^{0,1}_{0,j_R}=\left\{
\begin{array}{cl}
1 & j_R=1/2\\
0 & {\rm otherwise}
\end{array}\right. ,
\]
in agreement with the B-model result.

\bigskip\noindent
$(d_1,d_2)=(1,0)$.  By symmetry we see immediately that
\[
N^{1,0}_{0,j_R}=\left\{
\begin{array}{cl}
1 & j_R=1/2\\
0 & {\rm otherwise}
\end{array}\right. ,
\]

\bigskip\noindent
$(d_1,d_2)=(1,d_2), d_2\ge 1$.  We have $p_a=0$ and the moduli space of
these curves is $\PP^{2d_2+1}$, so we get as above
\[
-\left[d_2+\frac12\right]_\LL=P^{\mathrm{r}}_{1,(1,d_2)}=
\sum_{j_R}(-1)^{2j_R}
N^{1,d_2}_{0,j_R}[j_R]_\LL,
\]
which leads immediately to
\[
N^{1,d_2}_{0,j_R}=\left\{
\begin{array}{cl}
1 & j_R=d_2+1/2\\
0 & {\rm otherwise}
\end{array}\right. ,
\]
in agreement with the B-model results as far as we have checked.

\bigskip\noindent
$(d_1,d_2)=(d_1,1), d_1\ge 1$.  By symmetry we have immediately
\[
N^{d_1,1}_{0,j_R}=\left\{
\begin{array}{cl}
1 & j_R=d_1+1/2\\
0 & {\rm otherwise}
\end{array}\right. .
\]

\bigskip\noindent
$(d_1,d_2)=(2,2)$.  Now $p_a=1$.  For $j_L=1/2$, we can again use
(\ref{topspinrefined}) and the $\PP^8$ moduli space to get
\[
\left[4\right]_\LL=P^{\mathrm{r}}_{0,(2,2)}=
\sum_{j_R}(-1)^{2j_R}
N^{2,2}_{1/2,j_R}[j_R]_\LL,
\]
which leads immediately to
\[
N^{2,2}_{1/2,j_R}=\left\{
\begin{array}{cl}
1 & j_R=4\\
0 & {\rm otherwise}
\end{array}\right. .
\]
For $j_L=0$, we need to use the product formula and examine the coefficient
of $qQ_1^2Q_2^2$ using the above result for the $N^{2,2}_{j_L,j_R}$.

The moduli space $P_1(X,(2,2))$ is a $\PP^7$ bundle over $\PP^1\times\PP^1$,
with virtual Bialynicki-Birula decomposition or
equivariant index
\begin{equation}
\left[\frac72\right]_\LL\left[\frac12\right]_\LL\left[\frac12\right]_\LL=
\left[\frac92\right]_\LL+2\left[\frac72\right]_\LL+\left[\frac52\right]_\LL.
\label{p122}
\end{equation}

The expansion gives for the coefficient of $qQ_1^2Q_2^2$
\begin{equation}
\left[4\right]_\LL\left[\frac12\right]_\LL+\sum_{j_R}(-1)^{2j_R}
N^{2,2}_{0,j_R}\left[j_R\right]_\LL
\label{co122}
\end{equation}
Equating (\ref{p122}) and (\ref{co122}) gives
\[
N^{2,2}_{0,j_R}=\left\{
\begin{array}{cl}
1 & j_R=7/2,\ 5/2\\
0 & {\rm otherwise}
\end{array}\right. .
\]

As for $\PP^2$, the higher bidegrees are done in essentially the same way,
using the torus localization method of Section~\ref{pttoric} to compute the
refined PT invariants algorithmically by computer instead of by classical
algebraic geometry.

\section{Refinement of the KKV approach}
\label{kkvrefined}

%-----------------------------------------------------------------------------
\subsection{Local $\PP^2$ and Asymptotic Behavior of the Refined PT-invariants}
%-----------------------------------------------------------------------------

The basic idea is that for $k<d-1$, the method of \cite{kkv} says that
we can compute the refined PT invariants $PT^r_{1-p_a+k,d}$
by elementary projective geometry, without the need for the more elaborate toric computation
of Section~\ref{pttoric}.  In addition to the increased simplicity, infinite collections
of results can be put into closed form, providing asymptotic formulae.

Recall from the end of Section~\ref{sec:kkv} that $\cC^{[k]}$ is smooth for $k$ in this
range.  Hence the virtual BB decomposition arises from the usual BB decomposition, and
therefore the refined invariants arise from the Lefschetz action, as was already
well known in the physics literature.

Furthermore, since correction terms come from products of PT invariants of lower degree,
and for any degree $d$ the minimum holomorphic euler characteristic that can occur is
$1-p_a(d)=1-(d-1)(d-2)/2$, we see that the minimum holomorphic euler characteristic with a
correction term is
\begin{equation}
\mathrm{min}\left\{\left(1-p_a(d_1)\right)+\left(1-p_a(d_2)\right)\mid d_1+d_2=d\right\}.
\label{mincorr}
\end{equation}
The minimum of (\ref{mincorr}) occurs when $d_1=1$ or $d_2=1$, in which case it simplifies
to $1-p_a(d)+(d-1)$.  We conclude that there are no correction terms if $k<d-1$, in
which case (\ref{eq:kkvid}) simplifies to
\begin{equation}
H^*\left(\cC^{[k]}\right)=\left(\theta^{p_a-k}\hat{\cH}_\beta\right)_{\su_{\Delta}}\oplus H^*\left(\cC^{[k-2]}\right).
\label{simpkkvid}
\end{equation}

Furthermore (\ref{simpkkvid}), with $\theta^{p_a-k}\hat{\cH}_\beta$ interpreted as being
defined by the refined PT invariants, is rigorously proven by Proposition~\ref{kkvprop}.

To apply (\ref{simpkkvid}), we only need to compute the $PT^r_{1-p_a+k,d}$ for $k<d-1$.
But this is easy: $P_{1-p_a+k}(X,d)$ is a $\PP^{d(d+3)/2-k}$-bundle over
$(\PP^2)^{[k]}$.  So its Lefschetz representation is immediately computed as a product
of the Lefschetz representations of $\PP^{d(d+3)/2-k}$ and $(\PP^2)^{[k]}$.  This is
an equivariant/motivic extension of the method of Section~\ref{sec:kkv}.

Before turning to the asymptotic formulae, we illustrate with low degree examples.  All
the results agree with the B-model methods and the computations of Section~\ref{refinedtobps}.

$d=1$.
Putting $g=0, k=0$ in (\ref{simpkkvid}) we get
\[
\left(\hat{\cH}_1\right)_{\su_{\Delta}}=H^*(\cC^{0})=H^*(\IP^2).
\]
The Lefschetz of $H^*(\IP^2)$ is $[1]$.  Since the left spin can only be $[0]$, we
conclude that the representation is $\hat{\cH}_1=[0,1]$.  Since $\mathrm {Tr}(-1)^{F_R}\hat{\cH}_1
=3[0]=3I_0$, we obtain $n^0_1=3$ and $n^g_1=0$ for $g>0$ for the
GV invariants.

\medskip\noindent
$d=2$.  We similarly have
\[
\left(\hat{\cH}_2\right)_{\su_{\Delta}}=H^*(\cC^{0})=H^*(\IP^5).
\]
we conclude that $\hat{\cH}_1=[0,5/2]$. It follows that $n^0_2=-6$ and $n^g_2=0$ for $g>0$ for the
GV invariants.

\medskip\noindent
{$d=3$.}
Now we can have a left spin of $1/2$.
First (\ref{simpkkvid}) gives
\[
\left(\theta\hat{\cH}_3\right)_{\su_{\Delta}}=H^*(\cC^{[0]})=H^*(P_0(X,3))=H^*(\IP^9).
\]
The Lefschetz of $\IP^9$ is $[9/2]$.
We conclude that
\[
\hat{\cH}_3=[1/2,9/2]\oplus[0,R_0]
\]
for some representation $R_0$ to be determined.

We now apply (\ref{simpkkvid}) again and get
\[
\left(\hat{\cH}_3\right)_{\su_{\Delta}}=H^*(\cC^{[1]})=H^*(P_1(X,3)).
\]
The Lefschetz of this $\IP^8$-bundle over $\IP^2$ is
\[
[4]\otimes[1]=[5]\oplus[4]\oplus[3].
\]
Restricting $[1/2,9/2]\oplus[0,R_0]$ to ${\su}_\Delta$ gives
\[
[1/2]\otimes[9/2]\oplus[0]\otimes[R_0]=[5]\oplus[4]\oplus
R_0
\]
Comparing, we see that $R_0=3$ and conclude that
\[
\hat{\cH}_3=[1/2,9/2]\oplus[0,3].
\]
Then
\[\mathrm {Tr}(-1)^{F_R}\hat{\cH}_3 = - 10 [1/2] + 7 [0] = -10 I_1 + 27 I_0.\]
Therefore
\[
n^g_{3}=\left\{
\begin{array}{cl}
27 & g=0\\
-10 & g=1\\
0 & g\ge2
\end{array}\right.
\]
for the GV invariants.

\medskip\noindent
{$d=4$.}
We start with
\[
\left(\theta^3\hat{\cH}_4\right)_{\su_{\Delta}}=H^*(\cC^{[0]})=H^*(P_{-2}(X,4))=H^*(\IP^{14})
\]
and since $H^*(\IP^{14})=[7]$ we see that
\[
\cH_4=[3/2,7]+[1,R_1]+[1/2,R_{1/2}]+[0,R_0].
\]
Then we get
\[
\theta^2\hat{\cH_4}=H^*(\cC^{[1]})=H^*(P_{-1}(X,4)).
\]
We have seen that $\cC^{[1]}$ is a $\IP^{13}$-bundle over $\IP^2$, with
Lefschetz
\[
[13/2]\otimes[1]=[15/2]+[13/2]+[11/2].
\]
This must be equal to the restriction of
\[
\theta^2\left([3/2,7]+[1,R_1]+[1/2,R_{1/2}]+[0,R_0]\right)=
[1/2,7]\oplus [0,R_1]
\]
to the diagonal, which is
\[([1/2]\otimes[7])+([0]\otimes R_1) =[15/2]+[13/2]+R_1.\]
We infer that $R_1=[11/2]$.

To get $R_{1/2}$, we use
$\hat{\cH_4}=[3/2,7]+[1,11/2]+[1/2,R_{1/2}]+[0,R_0]$ and
\[
\left(\theta\hat{\cH}_4\right)_{\su_{\Delta}}=H^*(\cC^{[2]})-H^*(\cC^{[0]})=H^*(P_{0}(X,4))-
H^*(P_{-2}(X,4)).
\]
Now $P_{0}(X,4)$ is a $\IP^{12}$-bundle over $(\IP^2)^{[2]}$.  The
Betti numbers of the Hilbert scheme are found from
\[
\sum_{m,n} H^{m}\left(\left(\IP^2\right)^{[n]}\right)s^n y^m=\left(
\left(1-y^{2m-2}t^m\right)\left(1-y^{2m}t^m\right)
\left(1-y^{2m+2}t^m\right)\right)^{-1}.
\]
The Lefschetz $\su$ of $(\IP^2)^{[2]}$ is easily deduced from the Betti
numbers of $(\IP^2)^{[2]}$ as $[2]+[1]+[0]$.  So the Lefschetz of $P_0(X,4)$
is
\[
[6]\otimes\left([2]+[1]+[0]\right)=[8]+2[7]+3[6]+2[5]+[4].
\]
Thus
\[
H^*(\cC^{[2]})-H^*(\cC^{[0]})=[8]+[7]+3[6]+2[5]+[4].
\]
Comparing to the restriction to the diagonal of $[1,7]\oplus[1/2,11/2]\oplus
[0,R_{1/2}]$, which is
\[
=[1]\otimes[7]+[1/2]\otimes[11/2]+[0]\otimes R_{1/2}=
[8]+[7]+2[6]+[5]+R_{1/2},
\]
we conclude that $R_{1/2}=[6]+[5]+[4]$.

This is as far as we can get from (\ref{simpkkvid}) for $d=4$.  We briefly digress from
our main development to show how we can complete the calculation by reverting to
(\ref{eq:kkvid}) and computing equivariantly.

We have
\begin{equation}
\left(\hat{\cH}_4\right)_{\su_{\Delta}}=H^*(P_1(X,4))-H^*(P_{-1}(X,4))-H^*(P_0(X,3)\times P_1(X,1)).
\label{eq:h4}
\end{equation}
In \cite{kkv}, the product $P_0(X,3)\times P_1(X,1)$ appeared as a correction
term, but now we see from stable pairs theory that it is a natural
occurrence.  Now $P_1(X,4)$ is a $\IP^{11}$-bundle over $(\IP^2)^{[3]}$.
The Lefschetz $\su$ of $(\IP^2)^{[3]}$ is easily deduced from
the Betti numbers of $(\IP^2)^{[3]}$ as $[3]+[2]+3[1]+[0]$. So we get the
Lefschetz of $P_1(X,4)$ as
\[[\frac{11}2]\otimes\left([3]+[2]+3[1]+[0]\right)\]
\[=[\frac{17}2]+2[\frac{15}2]+5[\frac{13}2]+6[\frac{11}2]+5[\frac{9}2]+2[\frac{7}2]+[\frac{5}2]
\]
and therefore from (\ref{eq:h4}) and previously computed
representations we get
\[
\left(\hat{\cH}_4\right)_{\su_{\Delta}}=[\frac{17}2]+[\frac{15}2]+4[\frac{13}2]+4[\frac{11}2]+4[\frac{9}2]+[\frac{7}2]+[\frac{5}2].
\]
We have to compare to the diagonal restriction, which is
$=[\frac32]\otimes[7]+[1]\otimes[\frac{11}2]+[\frac12]\otimes\left([6]+[5]+4]\right)+[0]\otimes
R_0$.

We solve to get $R_0=[13/2]+[9/2]+[5/2]$, and conclude

\[
\hat{\cH}_4=[3/2,7]+[1,11/2]+[1/2,6+5+4]+[0,13/2+9/2+5/2].
\]

Then
\[\mathrm {Tr}(-1)^{F_R}\hat{\cH}_4 = 15 [3/2] -12 [1] + 33 [1/2] - 30 [0] = 15 I_3 -102 I_2 +231I_1 -192 I_0.\]
Therefore
\[
n^g_{4}=\left\{
\begin{array}{cl}
-192 & g=0\\
231 & g=1\\
-102 & g=2\\
15 & g=3\\
0 & g\ge4
\end{array}\right.
\]
for the GV invariants.

We can now generalize the above computations to compute asymptotic
formulae for the $\su\times\su$ invariants for large $d$.  For degree
$d$, the maximum genus is $g=g(d)=(d-1)(d-2)/2$, so the maximum left
spin is $\left[\frac{g}{2}\right]$.
The basic idea is that for
fixed $k$, we have that for $d$ sufficiently large,
$P_{1-g(d)+k}(X,d)$ is a $\PP^{d(d+3)/2-k}$ bundle over
the Hilbert scheme $(\PP^2)^{[k]}$, so computations can be done uniformly in
$d$.

We start by writing the $\su_L\times\su_R$
representation as
\[
\sum_{i=0}^g\left[\frac{i}{2},R_i\right]
\]
and solving for $R_i$ in decreasing order.

We have
\[
\theta^g\hat{\cH}_d=[0,R_g]=\cC^{[0]}=\IP^{d(d+3)/2},
\]
which has Lefschetz representation $[d(d+3)/4]$.  To simplify notation,
let us define $D:=d(d+3)/2$.  This gives

\begin{equation}
R_{g}=\left[\frac{D}2\right].
\label{eq:rg}
\end{equation}

Note that the bottom row in Table~\ref{bpstable} contains
only the representation $\left[g/2,D/2\right]$ with multiplicity one,
agreeing with (\ref{eq:rg}).

For the second row from the bottom we have from (\ref{simpkkvid})
\[
\theta^{g-1}\hat{\cH}_d=[0,R_{g-1}]+\left[\frac12,R_g\right]=\cC^{[1]}.
\]
By the usual argument, the universal curve is a $\IP^{D-1}$-bundle over $\IP^2$,
with Lefschetz
\[
\left[1\right]\otimes\left[\frac{D-1}2\right]=\left[\frac{D+1}2\right]+\left[\frac{D-1}2\right]+\left[\frac{D-3}2\right],
\]
Restricting $\left[0,R_{g-1}\right]+\left[1/2,R_g\right]$ to the diagonal and using (\ref{eq:rg}) gives
\[
R_{g-1}+\left[\frac{D+1}2\right]+\left[\frac{D-1}2\right].
\]
Equating these last two expressions gives
\begin{equation}
R_{g-1}=\left[\frac{D-3}2\right],
\label{eq:rg1}
\end{equation}
the asymptotic expression for the second to the bottom row (valid for $d\ge3$).

For the next row we have
\begin{equation}
\theta^{g-2}\hat{\cH}_d=\left[0,R_{g-2}\right]+\left[\frac12,R_{g-1}\right]+\left[1,R_g\right]=\cC^{[2]}-\cC^{[0]}.                                                           \label{eq:3rdrow}
\end{equation}
By the usual argument, $\cC^{[2]}$ is a $\IP^{D-2}$-bundle over $(\IP^2)^{[2]}$,
with Lefschetz
\[
\left([2]+[1]+[0]\right)\otimes\left[\frac{D-2}2\right]=\left[\frac{D+2}2\right]+2\left[\frac{D}2\right]+3\left[\frac{D-2}2\right]+2\left[\frac{D-4}2\right]+\left[\frac{D-6}2\right],
\]
so the right hand side of (\ref{eq:3rdrow}) is
\[
\left([2]+[1]+[0]\right)\otimes\left[\frac{D-2}2\right]=\left[\frac{D+2}2\right]+\left[\frac{D}2\right]+3\left[\frac{D-2}2\right]+2\left[\frac{D-4}2\right]+\left[\frac{D-6}2\right].
\]
Restricting the left  hand side of (\ref{eq:3rdrow}) to the diagonal and using (\ref{eq:rg}) and (\ref{eq:rg1}) gives
\[
R_{g-2}+\left(\left[\frac{D-2}2\right]+\left[\frac{D-4}2\right]\right)+\left(\left[\frac{D+2}2\right]+\left[\frac{D}2\right]+\left[\frac{D-2}2\right]\right).
\]
Equating these last two formulas and solving gives
\begin{equation}
R_{g-2}=\left[\frac{D-2}2\right]+\left[\frac{D-4}2\right]+\left[\frac{D-6}2\right],
\label{eq:rg2}
\end{equation}
the third asymptotic row from the bottom, valid for $d\ge4$.

\begin{table}[h]
\centering{
\begin{tabular}[h]{|c|rrrrrrrrrrrrrrrrrrrrrrrrr|}
\hline
$i/k$&\!\!\!-12\!\!\!&\!\!\!-$\frac{23}{2}$\!\!\!&\!\!\!-11
\!\!\!&\!\!\!-$\frac{21}{2}$\!\!\!&\!\!\!-10\!\!\!&\!\!\!-$\frac{19}{2}$\!\!\!&\!\!\!-9\!\!\!&\!\!\!-$\frac{17}{2}$\!\!\!&\!\!\!-8\!\!\!&\!\!\!-$\frac{15}{2}$\!\!\!
&\!\!\!-7\!\!\!&\!\!\!-$\frac{13}{2}$\!\!\!&\!\!\!-6\!\!\!&\!\!\!-$\frac{11}{2}$\!\!\!&\!\!\! -5\!\!\!&\!\!\! -$\frac{9}{2}$\!\!\!&\!\!\! -4\!\!\!&\!\!\! -$\frac{7}{2}$\!\!\!&\!\!\! -3\!\!\!&\!\!\! -$\frac{5}{2} $\!\!\!
&\!\!\! -2\!\!\!&\!\!\! -$\frac{3}{2}$\!\!\!&\!\!\! -1\!\!\! &\!\!\! -$\frac{1}{2}$\!\!\!&\!\!\!0 \\
\hline
$ 0 $&\!\!\!     \!\!\!&\!\!\!         \!\!\!&\!\!\!    \!\!\!&\!\!\!        \!\!\!&\!\!\!       \!\!\!&\!\!\!     \!\!\!&\!\!\!   \!\!\!&\!\!\!   \!\!\!&\!\!\!  \!\!\!&\!\!\!    \!\!\!
&\!\!\!     \!\!\!&\!\!\!     \!\!\!&\!\!\!   \!\!\!&\!\!\!     \!\!\!&\!\!\!  \!\!\!
&\!\!\!        \!\!\!&\!\!\!        \!\!\!&\!\!\!      \!\!\!&\!\!\!      \!\!\!&\!\!\!      \!\!\!
&\!\!\!         \!\!\!&\!\!\!    \!\!\!&\!\!\!    \!\!\!&\!\!\!   \!\!\!& \!\!\!    1 \\
$ -1 $&\!\!\!     \!\!\!&\!\!\!         \!\!\!&\!\!\!    \!\!\!&\!\!\!        \!\!\!&\!\!\!       \!\!\!&\!\!\!     \!\!\!&\!\!\!   \!\!\!&\!\!\!   \!\!\!&\!\!\!  \!\!\!&\!\!\!    \!\!\!
&\!\!\!     \!\!\!&\!\!\!     \!\!\!&\!\!\!   \!\!\!&\!\!\!     \!\!\!&\!\!\!  \!\!\!
&\!\!\!        \!\!\!&\!\!\!        \!\!\!&\!\!\!      \!\!\!&\!\!\!      \!\!\!&\!\!\!      \!\!\!
&\!\!\!         \!\!\!&\!\!\!  1  \!\!\!&\!\!\!    \!\!\!&\!\!\!   \!\!\!& \!\!\!     \\
$ -2 $&\!\!\!     \!\!\!&\!\!\!         \!\!\!&\!\!\!    \!\!\!&\!\!\!        \!\!\!&\!\!\!       \!\!\!&\!\!\!     \!\!\!&\!\!\!   \!\!\!&\!\!\!   \!\!\!&\!\!\!  \!\!\!&\!\!\!    \!\!\!
&\!\!\!     \!\!\!&\!\!\!     \!\!\!&\!\!\!   \!\!\!&\!\!\!     \!\!\!&\!\!\!  \!\!\!
&\!\!\!        \!\!\!&\!\!\!        \!\!\!&\!\!\!      \!\!\!&\!\!\!   1   \!\!\!&\!\!\!      \!\!\!
&\!\!\!     1    \!\!\!&\!\!\!    \!\!\!&\!\!\!  1  \!\!\!&\!\!\!   \!\!\!& \!\!\!     \\
$ -3 $&\!\!\!     \!\!\!&\!\!\!         \!\!\!&\!\!\!    \!\!\!&\!\!\!        \!\!\!&\!\!\!       \!\!\!&\!\!\!     \!\!\!&\!\!\!   \!\!\!&\!\!\!   \!\!\!&\!\!\!  \!\!\!&\!\!\!    \!\!\!
&\!\!\!     \!\!\!&\!\!\!     \!\!\!&\!\!\!   \!\!\!&\!\!\!     \!\!\!&\!\!\!  \!\!\!
&\!\!\!     1   \!\!\!&\!\!\!        \!\!\!&\!\!\!   1   \!\!\!&\!\!\!      \!\!\!&\!\!\! 2     \!\!\!
&\!\!\!         \!\!\!&\!\!\! 1   \!\!\!&\!\!\!   \!\!\!&\!\!\! 1  \!\!\!& \!\!\!     \\
$ -4 $&\!\!\!     \!\!\!&\!\!\!         \!\!\!&\!\!\!    \!\!\!&\!\!\!        \!\!\!&\!\!\!       \!\!\!&\!\!\!     \!\!\!&\!\!\!   \!\!\!&\!\!\!   \!\!\!&\!\!\!  \!\!\!&\!\!\!    \!\!\!
&\!\!\!     \!\!\!&\!\!\!     \!\!\!&\!\!\!   \!\!\!&\!\!\!     \!\!\!&\!\!\! 1 \!\!\!
&\!\!\!        \!\!\!&\!\!\!     3   \!\!\!&\!\!\!      \!\!\!&\!\!\!    3  \!\!\!&\!\!\!      \!\!\!
&\!\!\!       3  \!\!\!&\!\!\!    \!\!\!&\!\!\!  1 \!\!\!&\!\!\!   \!\!\!& \!\!\!     \\
$ -5 $&\!\!\!     \!\!\!&\!\!\!         \!\!\!&\!\!\!    \!\!\!&\!\!\!        \!\!\!&\!\!\!       \!\!\!&\!\!\!     \!\!\!&\!\!\!   \!\!\!&\!\!\!   \!\!\!&\!\!\!  \!\!\!&\!\!\!  1  \!\!\!
&\!\!\!     \!\!\!&\!\!\!  1   \!\!\!&\!\!\!   \!\!\!&\!\!\!   3  \!\!\!&\!\!\!  \!\!\!
&\!\!\!     4   \!\!\!&\!\!\!        \!\!\!&\!\!\!   6   \!\!\!&\!\!\!      \!\!\!&\!\!\! 4     \!\!\!
&\!\!\!         \!\!\!&\!\!\! 2   \!\!\!&\!\!\!   \!\!\!&\!\!\!   \!\!\!& \!\!\!     \\
$ -6 $&\!\!\!     \!\!\!&\!\!\!         \!\!\!&\!\!\!    \!\!\!&\!\!\!        \!\!\!&\!\!\!       \!\!\!&\!\!\!     \!\!\!&\!\!\!   \!\!\!&\!\!\!   \!\!\!&\!\!\! 1 \!\!\!&\!\!\!   \!\!\!
&\!\!\!  1   \!\!\!&\!\!\!     \!\!\!&\!\!\! 3  \!\!\!&\!\!\!     \!\!\!&\!\!\! 5 \!\!\!
&\!\!\!        \!\!\!&\!\!\!     9   \!\!\!&\!\!\!      \!\!\!&\!\!\!   8   \!\!\!&\!\!\!      \!\!\!
&\!\!\!      3   \!\!\!&\!\!\!    \!\!\!&\!\!\! 1  \!\!\!&\!\!\!   \!\!\!& \!\!\!     \\
$ -7 $&\!\!\!     \!\!\!&\!\!\!         \!\!\!&\!\!\!    \!\!\!&\!\!\!    1    \!\!\!&\!\!\!       \!\!\!&\!\!\!   1  \!\!\!&\!\!\!   \!\!\!&\!\!\! 3  \!\!\!&\!\!\!  \!\!\!&\!\!\! 5  \!\!\!
&\!\!\!     \!\!\!&\!\!\! 10    \!\!\!&\!\!\!   \!\!\!&\!\!\!  13   \!\!\!&\!\!\!  \!\!\!
&\!\!\!      16  \!\!\!&\!\!\!        \!\!\!&\!\!\!  11    \!\!\!&\!\!\!      \!\!\!&\!\!\!    6  \!\!\!
&\!\!\!         \!\!\!&\!\!\!   1 \!\!\!&\!\!\!   \!\!\!&\!\!\!   \!\!\!& \!\!\!     \\
$ -8 $&\!\!\!  1   \!\!\!&\!\!\!         \!\!\!&\!\!\!1    \!\!\!&\!\!\!        \!\!\!&\!\!\!   3    \!\!\!&\!\!\!     \!\!\!&\!\!\!  5 \!\!\!&\!\!\!   \!\!\!&\!\!\!  11\!\!\!&\!\!\!   \!\!\!
&\!\!\!   16  \!\!\!&\!\!\!    \!\!\!&\!\!\! 24  \!\!\!&\!\!\!     \!\!\!&\!\!\! 24 \!\!\!
&\!\!\!        \!\!\!&\!\!\!    20    \!\!\!&\!\!\!     \!\!\!&\!\!\!   9   \!\!\!&\!\!\!      \!\!\!
&\!\!\!      3   \!\!\!&\!\!\!    \!\!\!&\!\!\!   \!\!\!&\!\!\!   \!\!\!& \!\!\!     \\
\hline
\end{tabular}
\caption{The asymptotic BPS numbers $N^*_{(g_{max}+i)/2,(D/2+k)}$}}
\label{asymptotictable}
\end{table}

Table~\ref{asymptotictable} gives the asymptotic rows observed in the
B-model calculation with the ordering reversed, top to bottom instead
of bottom to top.  We have just explained the first three rows of
Table~\ref{asymptotictable} from the viewpoint of the refined PT invariants
and found complete agreement,
and we have similarly checked the first six rows and found complete agreement
with the B-model.

\subsection{Local $\PP^1\times\PP^1$ and Asymptotic Behavior of the Refined
PT-invariants}

We now explain the low bidegree cases with $X$ equal to local
$\PP^1\times\PP^1$, using the equivariant refinement of the method of
Section~\ref{sec:kkv}.  Again, we find asymptotic formulae.

As observed earlier in Section~\ref{p1p1refined}, curves of bidegree
$(d_1,d_2)$ have arithmetic genus $p_a=p_a(d_1,d_2)=(d_1-1)(d_2-1)$ and these curves are parametrized by $\PP^{(d_1+1)(d_2+1)-1}$, with
Lefschetz representation  $[((d_1+1)(d_2+1)-1)/2]$.
Without loss of generality, we assume that $d_1\le d_2$.

For $k<d_1$, the method of \cite{kkv} says that
we can compute the refined PT invariants $PT^r_{1-p_a+k,d}$
by elementary projective geometry.

Note that the minimum holomorphic euler characteristic with a
correction term is
\begin{equation}
\mathrm{min}\left\{\left(1-p_a(d_1',d_2')\right)+\left(1-p_a(d_1'',d_2'')
\right)\mid d_1'+d_1''=d_1,\ d_2'+d_2''=d_2\right\}.
\label{mincorrp1p1}
\end{equation}
The minimum of (\ref{mincorrp1p1}) occurs when $(d_1',d_2')=(0,1)$ or $(d_1'',
d_2'')=(0,1)$, in which case it simplifies
to $1-p_a(d_1,d_2)+d_1$.  In particular, there are no correction terms if
$k<d_1$.

Furthermore, it is straightforward to check that $\cC^{[k]}$
is smooth in this range.

To apply (\ref{simpkkvid}), we only need to compute the
$PT^r_{1-p_a+k,(d_1,d_2)}$ for $k<d_1$.
But this is easy: $P_{1-p_a+k}(X,(d_1,d_2))=\cC^{[k]}$ is a
$\PP^{(d_1+1)(d_2+1)-1-k}$-bundle over
$(\PP^2)^{[k]}$.  In particular, it is smooth.
So its Lefschetz representation is immediately computed as a product
of the Lefschetz representations of $\PP^{d(d+3)/2-k}$ and $(\PP^2)^{[k]}$.  This is
an equivariant/motivic extension of the method of Section~\ref{sec:kkv}.

Before turning to the asymptotic formulae, we illustrate with low degree examples.  All
the results agree with the B-model methods and the computations of Section~\ref{refinedtobps}.

\medskip\noindent
$(d_1,d_2)=(0,1)$.  Since $p_a=0$, we have $\hat{\cH}_{(0,1)}=[0,R_0]$ for some
$\su$ representation $R_0$.
Now $P_1(X,(0,1))$ is the moduli space of curves of bidegree $(0,1)$, which is
isomorphic to $\PP^1$ and therefore has Lefschetz representation $[1/2]$.
By Proposition~\ref{kkvprop} with $\beta=(0,1)$ and $r=0$, we get
$\hat{\cH}_{0,(0,1)}=[0,1/2]$.  Since $\mathrm {Tr}(-1)^{F_R}\hat{\cH}_{0,(0,1)}
=-2[0]=-2I_0$, we get $n^0_{(0,1)}=-2$ and $n^g_{(0,1)}=0$ for $g>0$ for the
GV invariants, in agreement with \cite{kkv}, where only the combined invariants
$n^g_d:=\sum_{d_1+d_2=d}n^g_{(d_1,d_2)}$ were computed.

\medskip\noindent
$(d_1,d_2)=(1,1)$.  Since $p_a=0$, we have $\hat{\cH}_{(1,1)}=[0,R_0]$ for some
$\su$ representation $R_0$.
Now $P_1(X,(1,1))$ is the moduli space of curves of bidegree $(1,1)$, which is
isomorphic to $\PP^3$ and therefore has Lefschetz representation $[3/2]$.
This leads as above to $\hat{\cH}_{0,(1,1)}=[0,3/2]$, from which it follows that
$n^0_{(1,1)}=-4$ and $n^g_{(1,1)}=0$ for $g>0$ for the
GV invariants, in agreement with \cite{kkv}.

\medskip\noindent
$(d_1,d_2)=(1,d_2)$.  Since $p_a=0$,
and $P_1(X,(1,d_2))$ is the moduli space of curves of bidegree $(1,d_2)$, which is
isomorphic to $\PP^{2d_2+1}$,
we have as above that $\hat{\cH}_{0,(1,d_2)}=[0,d_2+1/2]$,
from which it follows that
$n^0_{(1,d_2)}=-(2d_2+2)$ and $n^g_{(1,2d_2)}=0$ for $g>0$ for the
GV invariants.

\medskip\noindent
$(d_1,d_2)=(2,2)$.  Now $p_a=1$, so we have $\hat{\cH}_{(2,2)}=[1/2,R_{1/2}]+
[0,R_0]$ for some $\su$ representations $R_{1/2}$ and $R_0$.
We apply Proposition~\ref{kkvprop} with $\beta=(2,2)$ and $r=1$.
Now $P_0(X,(2,2))$ is the moduli space of curves of bidegree $(2,2)$, which is
isomorphic to $\PP^8$, so $R_{1/2}=[4]$.  Since $P_1(X,(2,2))$ is a $\PP^7$-bundle
over $\PP^1\times\PP^1$, its Lefschetz decomposition is
\[
\left[\frac72\right]\left[\frac12\right]\left[\frac12\right]=
\left[\frac92\right]+2\left[\frac72\right]+\left[\frac52\right].
\]
Then Proposition~\ref{kkvprop} with $\beta=(2,2)$ and $r=0$ gives
\[
\left[\frac92\right]+2\left[\frac72\right]+\left[\frac52\right]
=\left[\frac12\right]
\left[4\right]+R_0=\left[\frac92\right]+\left[\frac72\right]+R_0,
\]
so that $R_0=[7/2]+[5/2]$.  Putting this all together, we get
\[
\hat{\cH}_{(2,2)}=\left[\frac12,4\right]+\left[0,\frac72\right]+
\left[0,\frac52\right].
\]
Then
\[
\mathrm{Tr}\left(-1\right)^{F_R}\hat{\cH}_{(2,2)}=9\left[\frac12\right]-14\left[
0\right]=9I_1-32I_0.
\]
It follows that
\[
n^g_{(2,2)}=\left\{
\begin{array}{cl}
-32 & g=0\\
9 & g=1\\
0 & g\ge2
\end{array}\right.
\]
for the
GV invariants, in agreement with \cite{kkv} after combining with
$n^g_{(1,3)}$ (and $n^g_{3,1}$).

\medskip\noindent
$(d_1,d_2)=(2,3)$.  Now $p_a=2$, so we have $\hat{\cH}_{(2,3)}=[1,R_1]+[1/2,R_{1/2}]+
[0,R_0]$ for some $\su$ representations $R_1,\ R_{1/2}$ and $R_0$.
We apply Proposition~\ref{kkvprop} with $\beta=(2,3)$ and $r=2$.
Now $P_{-1}(X,(2,3))$ is the moduli space of curves of bidegree $(2,3)$, which is
isomorphic to $\PP^{11}$, so $R_{1}=[11/2]$.

Since $\cC=P_0(X,(2,3))$ is a $\PP^{10}$-bundle
over $\PP^1\times\PP^1$, its Lefschetz representation is
\[
\left[5\right]\left[\frac12\right]\left[\frac12\right]=
\left[6\right]+2\left[5\right]+\left[4\right].
\]
Then Proposition~\ref{kkvprop} with $\beta=(2,3)$ and $r=1$ gives
\[
\left[6\right]+2\left[5\right]+\left[4\right]
=\left[\frac12\right]
\left[\frac{11}2\right]+R_{1/2}=\left[6\right]+\left[5\right]+R_{1/2},
\]
so that $R_{1/2}=[5]+[4]$.

Finally, we have to compute $\cC^{[2]}=P_1(X,(2,3))$, which is a $\PP^9$ bundle
over $(\PP^1\times\PP^1)^{[2]}$.  While we still have smoothness, we
will nevertheless have a correction term, since in this case $k=2=d_1$.

We compute the Betti numbers of the Hilbert
schemes of $\PP^1\times\PP^1$ by the generating function
\[
\sum_{m,n} H^{m}\left(\left(\PP^1\times\PP^1\right)^{[n]}\right)s^n y^m=
\prod_m\left(
\left(1-y^{2m-2}t^m\right)\left(1-y^{2m}t^m\right)^2
\left(1-y^{2m+2}t^m\right)\right)^{-1}.
\]
This gives the Lefschetz of $(\PP^1\times\PP^1)^{[2]}$ as $[2]+2[1]+3[0]$.
This implies that we get $[9/2]([2]+2[1]+3[0])$ for $P_1(X,(2,3))$, which
expands to
\[
\left[\frac{13}2\right]+3\left[\frac{11}2\right]+6\left[\frac92\right]
+3\left[\frac72\right]+\left[\frac52\right].
\]
Applying Proposition~\ref{kkvprop} with $\beta=(2,3)$ and $r=0$ gives for the
left hand side
\[
\left[\frac{13}2\right]+3\left[\frac{11}2\right]+6\left[\frac92\right]
+3\left[\frac72\right]+\left[\frac52\right]-\left[\frac{11}2\right]=
\left[\frac{13}2\right]+2\left[\frac{11}2\right]+6\left[\frac92\right]
+3\left[\frac72\right]+\left[\frac52\right],
\]
while for the right hand side we get, ignoring correction terms for the moment
\[
\left[1\right]\left[\frac{11}2\right]+\left[\frac12\right]\left(
\left[5\right]+\left[4\right]\right)+R_0=\left[\frac{13}2\right]+
2\left[\frac{11}2\right]+3\left[\frac92\right]+\left[\frac72\right]+R_0
\]
so that $R_0=3[9/2]+2[7/2]+[5/2]$.  However there is a correction due to
reducible curves $C'\cup C''$, where $C'$ and $C''$ have bidegrees $(2,2)$
and $(1,0)$ respectively.  The correction term may be recognized either from
the method of \cite{kkv} or by the product formula (\ref{motivicproduct}).
Either way, the correction is $[1/2][4]=[9/2]+[7/2]$,
coming from the moduli space
$\PP^1\times\PP^8$ of the pair of curves.  This gives the corrected value
\[
R_0=2[9/2]+[7/2]+[5/2].
\]

Putting this all together, we get
\[
\hat{\cH}_{(2,3)}=\left[1,\frac{11}2\right]+\left[\frac12,5\right]
+\left[\frac12,4\right]+2\left[0,\frac92\right]+\left[0,\frac72\right]+
\left[0,\frac52\right].
\]
Then
\[
\mathrm{Tr}\left(-1\right)^{F_R}\hat{\cH}_{(2,3)}=-12\left[1\right]
+20\left[\frac12\right]-34\left[
0\right]=-12I_2+68I_1-110I_0.
\]
It follows that
\[
n^g_{(2,3)}=\left\{
\begin{array}{cl}
-110 & g=0\\
68 & g=1\\
-12&g=2\\
0 & g\ge3
\end{array}\right.
\]
for the
GV invariants, again agreeing with \cite{kkv} for $d=d_1+d_2=5$.

\smallskip We can now turn to the asymptotic formulae for sufficiently
large $(d_1,d_2)$.

We start by writing the $\su_L\times\su_R$
representation as
\[
\sum_{i=0}^{p_a}\left[\frac{i}{2},R_i\right]
\]
and solving for $R_i$ in decreasing order.

We put $D=(d_1+1)(d_2+1)-1$.

We have
\[
\theta^{p_a}\hat{\cH}_d=[0,R_{p_a}]=\cC^{[0]}=\IP^{D},
\]
which has Lefschetz representation $[D/2]$.
This gives

\begin{equation}
R_{p_a}=\left[\frac{D}2\right].
\label{eq:rgp1p1}
\end{equation}

For the second row from the bottom we have from (\ref{simpkkvid})
\[
\theta^{p_a-1}\hat{\cH}_d=[0,R_{p_a-1}]+\left[\frac12,R_{p_a}\right]=\cC^{[1]}.
\]
By the usual argument, the universal curve is a $\IP^{D-1}$-bundle over
$\IP^1\times\PP^1$,
with Lefschetz
\[
\left[\frac{D-1}2\right]\left[\frac12\right]\left[\frac12\right]=
\left[\frac{D+1}2\right]+2\left[\frac{D-1}2\right]+\left[\frac{D-3}2\right].
\]
Restricting $\left[0,R_{p_a-1}\right]+\left[1/2,R_{p_a}\right]$ to the diagonal and using (\ref{eq:rgp1p1}) gives
\[
R_{p_a-1}+\left[\frac{D+1}2\right]+\left[\frac{D-1}2\right].
\]
Equating these last two expressions gives
\begin{equation}
R_{p_a-1}=\left[\frac{D-1}2\right]+\left[\frac{D-3}2\right]
\label{eq:rgp1p11}
\end{equation}
the asymptotic expression for the second to the bottom row.

We content ourselves with one more row; the general cases are similar.
For the next row we have
\begin{equation}
\theta^{p_a-2}\hat{\cH}_{d_1,d_2}=\left[0,R_{p_a-2}\right]+
\left[\frac12,R_{p_a-1}\right]+\left[1,R_{p_a}\right]=\cC^{[2]}-\cC^{[0]}.                                                           \label{eq:3rdrowp1}
\end{equation}
By the usual argument, $\cC^{[2]}$ is a $\IP^{D-2}$-bundle over $(\IP^1\times
\IP^1)^{[2]}$,
with Lefschetz
\[
\left([2]+2[1]+3[0]\right)\otimes\left[\frac{D-2}2\right]=
\left[\frac{D+2}2\right]+3\left[\frac{D}2\right]+6\left[\frac{D-2}2\right]+3\left[\frac{D-4}2\right]+\left[\frac{D-6}2\right],
\]
so the right hand side of (\ref{eq:3rdrowp1}) is
\[
\left[\frac{D+2}2\right]+2\left[\frac{D}2\right]+6\left[\frac{D-2}2\right]+3\left[\frac{D-4}2\right]+\left[\frac{D-6}2\right].
\]
Restricting the left  hand side of (\ref{eq:3rdrowp1}) to the diagonal and using (\ref{eq:rgp1p1}) and (\ref{eq:rgp1p11}) gives
\[
R_{p_a-2}+\left(\left[\frac{D}2\right]+2\left[\frac{D-2}2\right]+\left[\frac{D-4}2\right]\right)+\left(\left[\frac{D+2}2\right]+\left[\frac{D}2\right]+\left[\frac{D-2}2\right]\right).
\]
Equating these last two formulas and solving gives
\begin{equation}
R_{g-2}=3\left[\frac{D-2}2\right]+2\left[\frac{D-4}2\right]+\left[\frac{D-6}2\right],
\label{eq:rgp1p12}
\end{equation}
the third asymptotic row from the bottom.

\subsection{Wall crossing}
In this section, we explain how to understand the correction terms in the refined KKV approach via a wall crossing on stable pairs in the case of local $\PP^2$. In \cite{CC1}, a wall crossing phenomenon in the moduli spaces of stable pairs is studied. We alter the notion of stable pairs by introducing the stability parameter denoted by $\alpha$.
\begin{defn}
Let $\alpha$ be a positive rational number. An $\alpha$-stable pair on $X$ is a pair $(\cF, s)$ of a sheaf $\cF$ and a nonzero section $s\in H^0(\cF)$ such that
\begin{itemize}
\item $\cF$ is of pure of dimension 1
\item For all proper nonzero subsheaves $\cF'$ of $\cF$, we have
\begin{equation}\label{eq:alphass}
\frac{\chi(\cF')+\epsilon(s,\cF')\alpha}{r(\cF')} <  \frac{\chi(\cF)+\alpha}{r(\cF)},
\end{equation}
where $r(\cF)$ is the leading coefficient of Hilbert polynomial $\chi(\cF(m))$ and $\epsilon(s,\cF')=1$ if $s$ factors through $\cF'$ and zero otherwise.
\end{itemize}
\end{defn}

Let $X$ be local $\PP^2$ and let $M^\alpha(d,n)$ denote the moduli space of $\alpha$-stable pairs $(\cF,s)$ on local $\PP^2$ with $\mathrm{ch}_2(\cF)=d$ and $\chi(\cF)=n$. One can see that a stable pair as in Section \ref{ptinvariants} can be considered as an $\alpha$-semistable pair for sufficiently large $\alpha$, which we will denote by $\alpha=\infty$. In other words,  $M^{\infty}(d,n)=P_n(X,d)$. At the other extreme, when $\alpha$ is sufficiently close to zero, or $\alpha=0^+$, a pair $(\cF,s)$ is $\alpha$-stable pair if and only if the sheaf $\cF$ itself is a stable sheaf. Hence, we have a connection to the Hilbert space $\hat{\cH}_d$. We have \cite{CC1}
\[
\left(\hat{\cH}_d\right)_{\su_{\Delta}}= H^*(M^{0^+}(d,1)) - H^*(M^{0^+}(d,-1)).
\]
This formula is very similar to \eqref{updatedkkv} with $r=0$. The only difference is that we have replaced $\alpha=\infty$ with $\alpha=0^+$ and removed all correction terms. We claim that in terms of virtual motives, the correction terms in \eqref{updatedkkv} are exactly the wall crossing contributions from $M^{\infty}(d,n)$ to $M^{0^+}(d,n)$ for $d\le 5$.

Wall crossing occurs at the values of $\alpha$ for which there exist strictly semistable pairs. In general, there are only finitely many such walls and the moduli spaces remain unchanged for values of $\alpha$ in between walls.

If a pair $(\cF,s)$ become strictly semistable, we have an exact sequence of the form
\[0\to (\cF',s')\to (\cF,s)\to (\cF'',s'')\to 0. \]
On the one side of the wall, $(\cF,s)$ is stable and as our stability parameter $\alpha$ passes to the other side of the wall, this exact sequence destabilizes $(\cF,s)$. So, we lose those pairs from the moduli space. Instead, new pairs $(\tilde{\cF},\tilde{s})$ defined by the flipped exact sequence
\[0\to (\cF'',s'')\to (\tilde{\cF},\tilde{s})\to (\cF',s')\to 0 \]
become stable. So, by computing Ext groups corresponding to each exact sequence, we can see what happens as we cross the wall.

\medskip\noindent
$d=1,2,$ and $3$. There were no correction term in the KKV computation and one can easily see that there are no walls by an elementary calculation.

\medskip\noindent
$d=4$. The correction term for $\left(\hat{\cH}_4\right)_{\su_{\Delta}}$ in KKV approach is $-H^*(P_0(X,3)\times P_1(X,1))$, which in terms of the virtual motive is $\left[\frac92\right]_\LL \left[1\right]_\LL$ after an appropriate sign change. By an elementary calculation, one can see there is no wall for $M^{\alpha}(4,-1)$ and a unique wall at $\alpha=3$ for $M^{\alpha}(4,1)$. The strictly semistable pairs in $M^{3}(4,1)$ are of type $$(1,(3,0))\oplus (0,(1,1)),$$
where $(1, (d,n))$ (resp. $(0, (d,n))$) denotes the pairs $(\cF,s)$ with a nonzero (resp. zero) section $s$ and $\mathrm{ch}_2(F)=d$ and $\chi(F)=n$.
Note that $\alpha$-stable pairs of type $(1,(3,0))$ are parametrized by $\PP^9$ for any $\alpha$, and stable pairs of type $(0,(1,1))$ are parametrized by $\PP^2$. By the Riemann-Roch theorem and \cite[Corollary 1.6]{mhe}, we can compute the extension group defined on the category of pairs.
\begin{align}
  \Ext^1((0, (1,1)), (1, (3,0)))\simeq \CC^3 \label{extensionpi}\\
  \Ext^1((1, (3,0)),(0, (1,1)))\simeq \CC^4 \label{extensionpt}
\end{align}
The extension given by an element in \eqref{extensionpt} is stable when $\alpha>3$ and becomes unstable when $\alpha<3$. The extension given by an element in \eqref{extensionpi} behaves in the other way. So, at the wall as we cross from $\alpha=\infty$ to $\alpha=0^+$, the $\PP^3$-bundle on $\PP^9\times\PP^2$ is replaced by $\PP^2$-bundle on $\PP^9\times\PP^2$. This gives a geometric wall crossing contribution $-\LL^3 [\PP^9][\PP^2]$. To get the contribution to the virtual motive, we multiply $(-\LL^{-1/2})^{\dim P_1(X,4)}=-\LL^{-\frac{17}{2}}$, which yields $\left[\frac92\right]_\LL \left[1\right]_\LL$. This matches with the correction term.

\medskip\noindent
$d=5$. The correction term for $\left(\hat{\cH}_5\right)_{\su_{\Delta}}$ is
\begin{align*}
-&H^*(P_{-2}(X,4)\times P_{3}(X,1))-H^*(P_{-1}(X,4)\times P_{2}(X,1))-H^*(P_{0}(X,4)\times P_{1}(X,1))\\
-&H^*(P_{0}(X,3)\times P_{1}(X,3))+H^*(P_{-2}(X,4)\times P_{1}(X,1)).
\end{align*}
Possible wall crossing terms for $M^\alpha(5,1)$ and $M^\alpha(5,-1)$ are as follows.
\begin{center}
\begin{tabular}{|l|l|l|}
\hline
$\alpha$
& Splitting type & Associated correction term \\
\hline
\multicolumn{3}{|l|}{Wall crossing for $M^\alpha(5,1)$} \\
\hline
14&
$(1,(4,-2))\oplus (0,(1,3))$&$-H^*(P_{-2}(X,4)\times P_{3}(X,1))$\\
\hline
9&
$(1,(4,-1))\oplus (0,(1,2))$&$-H^*(P_{-1}(X,4)\times P_{2}(X,1))$\\
\hline
4&
$(1,(4,0))\oplus (0,(1,1))$&$-H^*(P_{0}(X,4)\times P_{1}(X,1))$\\
\hline
$\frac{3}{2}$&
$(1,(3,0))\oplus (0,(2,1))$&$-H^*(P_{0}(X,3)\times P_{1}(X,3))$\\
\hline
\multicolumn{3}{|l|}{Wall crossing for $M^\alpha(5,-1)$} \\
\hline
$6$&
$(1,(4,-2))\oplus (0,(1,1))$&$+H^*(P_{-2}(X,4)\times P_{1}(X,1))$\\
\hline
\end{tabular}
\end{center}

We explain the wall crossing for $M^\alpha(5,1)$ at $\alpha=14$. The computation for other walls is similar.
As before one can compute
\begin{align*}
  \Ext^1((0, (1,3)), (1, (4,-2)))\simeq \CC^4 \\
  \Ext^1((1, (4,-2)),(0, (1,3)))\simeq \CC^7,
\end{align*}
and $\alpha$-stable pairs of type $(1,(4,-2))$ are parametrized by $\PP^{14}$ for any $\alpha$, and stable pairs of type $(0,(1,3))$ are parametrized by $\PP^2$. Hence as we cross wall from $\alpha=\infty$ to $9<\alpha<14$, the $\PP^6$-bundle on $\PP^{14}\times\PP^2$ is replaced by $\PP^3$-bundle on $\PP^{14}\times\PP^2$. Hence, the wall crossing contribution here is $-(\LL^6+\LL^5+\LL^4)[\PP^{14}][\PP^2]$. After multiplying $(-\LL^{-1/2})^{\dim P_1(X,5)}=\LL^{-\frac{26}{2}}$, we get $-\left[1\right]_\LL\left[7\right]_\LL \left[1\right]_\LL$. This matches with the associated correction term $-H^*(P_{-2}(X,4)\times P_{3}(X,1))=-\left[7\right]_\LL\left[1\right]_\LL \left[1\right]_\LL,$ as  $P_{3}(X,1)$ is a $\PP^2$-bundle over $\PP^2$.

\section{Conclusions}

We have described refined stable pair invariants and shown that the
information of those invariants up to a fixed degree is equivalent
to knowing the $\su\times\su$ BPS invariants up to that degree.

Several interesting questions remain for future work.  We conjecture
that there is a purely motivic description of our refined stable pair
invariants which would make clearer the connection of our work
to the motivic stable pair invariants of~\cite{mmns}.  This would
provide an more precise
interpretation of (\ref{motivicproduct}) as a generalization
of the motivic product formulae of~\cite{bbs,mmns}.

It would also be interesting to use other twistings to define new
invariants. Perhaps those new invariants will be related to other related
mathematical
invariants, either those arising from a change in stability condition or
by choosing different quiver descriptions of the PT moduli spaces.

To calculate the 5d BPS index, the B-model approach using
the refined holomorphic anomaly equation combined with the
direct integration approach~\cite{Huang:2009md}\cite{Haghighat:2008gw}
to non-compact geometries is the most efficient method.

For the right choice of the ${\cal R}$ symmetry group it expresses
the 5d index $Z$ e.g.\ for local Calabi-Yau manifolds based on del
Pezzo surfaces in terms of quasimodular forms of subgroups of
$PSL(2,\mathbb{Z})$. It is independent of the question
whether the geometry has a toric realization~\cite{HuangKlemm}.

For local ${\cal O}(-2,-2)\rightarrow \mathbb{P}^1\times\mathbb{P}^1$
the approach yields in one stroke the refined 5d M-theory index,
Nekrasov's partition function of N=2 d=4 theory, the refined
Chern-Simons partition function on $L(2,1)$ as well as a
refined version of the partition function of N=6 d=3 ABJM theory.

Both approaches described in this paper need to be generalized
to incorporate open string boundary conditions or more
general Wilson lines than the one in $L(2,1)$. For the B-model
a remodeled and refined version describing Ooguri-Vafa
invariants would be highly desirable and for A-model one
would like to give a more stringent mathematical definition
of the moduli space related to invariants that the refined
vertex computes. Finally, wallcrossing should apply
more broadly than in the special cases described in
Section~\ref{kkvrefined}.

\bigskip\noindent
{\bf Acknowledgments.}  It is a pleasure to thank J.~Bryan, K.~Chung,
D.~Freed, M.~Mari\~no, G.~Moore, N.~Nekrasov, A.~Okounkov, and
R.~Pandharipande for helpful conversations.
The research of J.C.~and S.K.~was partially supported by NSF grants DMS-05-55678 and DMS-12-01089.
The research of A.K. is partially supported by the DFG grant KL2271/1-1.
S.K.~thanks the Max Planck Institut f\"ur Mathematik and
the Bethe Center for Theoretical Physics in Bonn for their hospitality, and
A.K~thanks the Department of Mathematics at the University of Illinois
at Urbana-Champaign for hospitality.


\begin{thebibliography}{9}

\bibitem{Aganagic:2011mi}
  M.~Aganagic, M.~C.~N.~Cheng, R.~Dijkgraaf, D.~Krefl and C.~Vafa,
  ``Quantum Geometry of Refined Topological Strings,'' arXiv:1105.0630 [hep-th].
  %%CITATION = ARXIV:1105.0630;%%

\bibitem{Aganagic:2002wv}
  M.~Aganagic, A.~Klemm, M.~Marino and C.~Vafa,
  ``Matrix model as a mirror of Chern-Simons theory,''
  JHEP {\bf 0402}, 010 (2004),
  hep-th/0211098.
  %%CITATION = HEP-TH/0211098;%%

\bibitem{Aganagic:2012au}
  M.~Aganagic and K.~Schaeffer,
  ``Orientifolds and the Refined Topological String,''
  JHEP {\bf 1209}, 084 (2012),
  arXiv:1202.4456 [hep-th].
  %%CITATION = ARXIV:1202.4456;%%.


\bibitem{Aganagic:2012hs}
  M.~Aganagic and S.~Shakirov,
  ``Refined Chern-Simons Theory and Topological String,''
  arXiv:1210.2733 [hep-th].
  %%CITATION = ARXIV:1210.2733;%%.

\bibitem{behrend}
K.~Behrend, ``Donaldson-{T}homas type invariants via microlocal geometry.'' {Annals of Mathematics}, {170}:{1307--1338}, (2009).

\bibitem{bbs} K.~Behrend, J.~Bryan, and B.~Szendroi, ``Motivic degree zero Donaldson-Thomas invariants'', to appear in Inv.\ Math., arXiv:0909.5088.

\bibitem{BCOV}
  M.~Bershadsky, S.~Cecotti, H.~Ooguri, C.~Vafa,
  ``Kodaira-Spencer theory of gravity and exact results for quantum string amplitudes,''
  Commun.\ Math.\ Phys.\  {\bf 165 } (1994),  311-428.
  hep-th/9309140.

\bibitem{CC1}
J.~Choi and K.~Chung, ``Moduli spaces of $\alpha$-stable pairs and wall-crossing on $\mathbb{P}^2$'', arXiv:1210.2499.

\bibitem{DS}
A.~Dimca and B.~Szendr\"{o}i, ``The Milnor fibre of the Pfaffian and the Hilbert scheme of four points on $\mathbb{C}^3$''. Math. Res. Lett. {\bf 16} (2009), 1037-1055.

\bibitem{Drukker:2010nc}
  N.~Drukker, M.~Marino and P.~Putrov,
  ``From weak to strong coupling in ABJM theory,''
  Commun.\ Math.\ Phys.\  {\bf 306}, 511 (2011),
 arXiv:1007.3837 [hep-th].
  %%CITATION = ARXIV:1007.3837;%%

\bibitem{fg} B.~Fantechi and L.~G\"ottsche, ``Riemann-Roch theorems and elliptic genus for virtually smooth Schemes'', arXiv:0706.0988.

\bibitem{GSY}
  D.~Gaiotto, A.~Strominger, X.~Yin,
  ``New connections between 4D and 5D black holes,''
  JHEP {\bf 0602}, 024 (2006),
  hep-th/0503217.


\bibitem{GV2}
  R.~Gopakumar, C.~Vafa,
``M theory and topological strings 2,'' hep-th/9812127.

\bibitem{gs} L.~G\"ottsche and V.~Shende , ``Refined curve counting on complex surfaces,'' arXiv/1208.1973 [math.AG].


\bibitem{Haghighat:2008gw}
  B.~Haghighat, A.~Klemm and M.~Rauch,
  ``Integrability of the holomorphic anomaly equations,''
  JHEP {\bf 0810}, 097 (2008)
  [arXiv:0809.1674 [hep-th]].
  %%CITATION = ARXIV:0809.1674;%%




\bibitem{mhe} M.~He. ``Espaces de Modules de syst\`emes coh\'erents''. Internat. J. of Math. {\bf 7} (1998), 545-598.

\bibitem{Hori:2000kt}
  K.~Hori and C.~Vafa,
  ``Mirror symmetry,''
  hep-th/0002222.
  %%CITATION = HEP-TH/0002222;%%


\bibitem{HKK}
  M.~-x.~Huang, A.~-K.~Kashani-Poor, A.~Klemm,
  ``The Omega deformed B-model for rigid N=2 theories,''
  arXiv:1109.5728 [hep-th].


\bibitem{Huang:2009md}
  M.~-x.~Huang and A.~Klemm,
  ``Holomorphicity and Modularity in Seiberg-Witten Theories with Matter,''
  JHEP {\bf 1007}, 083 (2010)
  [arXiv:0902.1325 [hep-th]].
  %%CITATION = ARXIV:0902.1325;%%

\bibitem{HK2}
  M.~-x.~Huang, A.~Klemm,
  ``Direct integration for general $\Omega$ backgrounds,''
  arXiv:1009.1126 [hep-th].


\bibitem{Huang:2006si}
  M.~-x.~Huang and A.~Klemm,
  %``Holomorphic Anomaly in Gauge Theories and Matrix Models,''
  JHEP {\bf 0709}, 054 (2007)
  [hep-th/0605195].
  %%CITATION = HEP-TH/0605195;%%

\bibitem{HuangKlemm}
  M.~-x.~Huang, A.~Klemm,
  ``Refined BPS index on local del Pezzo surfaces and half K3,'' to appear.



\bibitem{HKQ}
  M.~-x.~Huang, A.~Klemm, S.~Quackenbush,
  ``Topological string theory on compact Calabi-Yau: Modularity and boundary conditions,''
  Lect.\ Notes Phys.\  {\bf 757}, 45-102 (2009),
  hep-th/0612125.

\bibitem{IKV}
  A.~Iqbal, C.~Kozcaz, C.~Vafa,
  ``The Refined topological vertex,''
  JHEP {\bf 0910}, 069 (2009), hep-th/0701156.

\bibitem{Iqbal:2012xm}
  A.~Iqbal and C.~Vafa,
  ``BPS Degeneracies and Superconformal Index in Diverse Dimensions,''
  arXiv:1210.3605 [hep-th].
  %%CITATION = ARXIV:1210.3605;%%



\bibitem{Iqbal:2012mt}
  A.~Iqbal and C.~Kozcaz,
  ``Refined Topological Strings and Toric Calabi-Yau Threefolds,''
  arXiv:1210.3016 [hep-th].
  %%CITATION = ARXIV:1210.3016;%%


\bibitem{Kachru:1995fv}
  S.~Kachru, A.~Klemm, W.~Lerche, P.~Mayr and C.~Vafa,
  ``Nonperturbative results on the point particle limit of N=2 heterotic string compactifications,''
  Nucl.\ Phys.\ B {\bf 459}, 537 (1996), hep-th/9508155.
    %%CITATION = HEP-TH/9508155;%%

\bibitem{katz04}
  S.~H.~Katz,
  ``Gromov-Witten, Gopakumar-Vafa, and Donaldson-Thomas invariants of Calabi-Yau threefolds,'' math/0408266 [math-AG].

\bibitem{katz08}
S.~Katz, ``Genus zero Gopakumar-Vafa invariants of contractible curves,'' J.\ Diff.\
Geom.\ {\bf 79} (2008), 185--195.

\bibitem{KKV1}
  S.~H.~Katz, A.~Klemm and C.~Vafa,
  ``Geometric engineering of quantum field theories,''
  Nucl.\ Phys.\ B {\bf 497}, 173 (1997), hep-th/9609239.
  %%CITATION = HEP-TH/9609239;%%

\bibitem{kkv}
  S.~H.~Katz, A.~Klemm, C.~Vafa,
  ``M theory, topological strings and spinning black holes,''
  Adv.\ Theor.\ Math.\ Phys.\  {\bf 3}, 1445-1537 (1999), hep-th/9910181.

\bibitem{Klemm:2012ii}
  A.~Klemm, M.~Marino, M.~Schiereck and M.~Soroush,
  ``ABJM Wilson loops in the Fermi gas approach,'' arXiv:1207.0611 [hep-th].
  %%CITATION = ARXIV:1207.0611;%%

\bibitem{KK}
  D.~Krefl, J.~Walcher,
  ``Extended Holomorphic Anomaly in Gauge Theory,''
  Lett.\ Math.\ Phys.\  {\bf 95}, 67-88 (2011), arXiv:1007.0263 [hep-th].

\bibitem{KK2}
  D.~Krefl and J.~Walcher,
  ``Shift versus Extension in Refined Partition Functions,''
  arXiv:1010.2635 [hep-th].
  %%CITATION = ARXIV:1010.2635;%%

\bibitem{Krefl:2012ei}
  D.~Krefl, J.~Walcher and J.~Walcher,
  ``ABCD of Beta Ensembles and Topological Strings,''
  arXiv:1207.1438 [hep-th].
  %%CITATION = ARXIV:1207.1438;%%

\bibitem{Marino:2009jd}
  M.~Marino and P.~Putrov,
  ``Exact Results in ABJM Theory from Topological Strings,''
  JHEP {\bf 1006}, 011 (2010), arXiv:0912.3074 [hep-th].
  %%CITATION = ARXIV:0912.3074;%%

\bibitem{Marino:2002fk}
  M.~Marino,
  ``Chern-Simons theory, matrix integrals,
  and perturbative three manifold invariants,''
  Commun.\ Math.\ Phys.\  {\bf 253}, 25 (2004)
  [hep-th/0207096].
  %%CITATION = HEP-TH/0207096;%%

\bibitem{mnop1}
D.~Maulik, N.~Nekrasov, A.~Okounkov, and R.~Pandharipande, ``Gromov-Witten
theory and Donaldson-Thomas theory, I.'' \textit{Compos. Math.}, 142(5):1263--1285
(2006).

\bibitem{mmns}
A.~Morrison, S.~Mozgovoy, K.~Nagao, and B.~Szendroi,
``Motivic Donaldson-Thomas invariants of the conifold and the refined topological vertex'', arXiv:1107.5017 [math].


%\cite{Nakayama:2011be}
\bibitem{Nakayama:2011be}
 Y.~Nakayama and H.~Ooguri,
``Comments on Worldsheet Description of the Omega Background,''
Nucl.\ Phys.\ B {\bf 856}, 342 (2012)
[arXiv:1106.5503 [hep-th]].
%%CITATION = ARXIV:1106.5503;%%


\bibitem{Nekrasov:2002qd}
  N.~A.~Nekrasov,
  ``Seiberg-Witten prepotential from instanton counting,''
  Adv.\ Theor.\ Math.\ Phys.\  {\bf 7} 831 (2004), hep-th/0206161.
  %%CITATION = HEP-TH/0206161;%%

\bibitem{no} N.~Nekrasov and A.~Okounkov, ``The index of M-theory,'' in
preparation.

%\cite{Nekrasov:2011bc}
\bibitem{Nekrasov:2011bc}
N.~Nekrasov, A.~Rosly and S.~Shatashvili,
``Darboux coordinates, Yang-Yang functional, and gauge theory,''
Nucl.\ Phys.\ Proc.\ Suppl.\  {\bf 216}, 69 (2011)
[arXiv:1103.3919 [hep-th]].
%%CITATION = ARXIV:1103.3919;%%

\bibitem{Nekrasov:2010ka}
  N.~Nekrasov and E.~Witten,
  ``The Omega Deformation, Branes, Integrability, and Liouville Theory,''
  JHEP {\bf 1009}, 092 (2010)
  [arXiv:1002.0888 [hep-th]].



\bibitem{pt} R.~Pandharipande and R.P.~Thomas, ``Curve counting via stable
pairs in the derived category,'' Inv.\ Math.\ {bf 178} 407--447 (2009), arXiv:0707.2348 [math.AG].

\bibitem{ptvertex} R.~Pandharipande and R.P~Thomas, ``The 3-fold vertex via
stable pairs,'' Geom.\ \& Top.\ {\bf 13}, 1835--1876 (2009),
arXiv:0709.3823 [math.AG].

\bibitem{ptbps} R.~Pandharipande and R.P~Thomas,
``Stable Pairs and BPS Invariants,''
Jour.\ AMS {\bf 23}, 267--297 (2010), arXiv:0711.3899 [math.AG].

\bibitem{Witten:1993yc}
  E.~Witten,
  ``Phases of N=2 theories in two-dimensions,''
  Nucl.\ Phys.\ B {\bf 403}, 159 (1993), hep-th/9301042.
  %%CITATION = HEP-TH/9301042;%%


\end{thebibliography}
\end{document}